  \theoremstyle{plain}
  \newtheorem{theorem}{Theorem}
  \newtheorem{lemma}[theorem]{Lemma}  
  \newtheorem{corollary}[theorem]{Corollary} 
  \newtheorem{fact}[theorem]{Fact}
  \newtheorem{observation}[theorem]{Observation}
  \theoremstyle{definition}
  \newtheorem{definition}[theorem]{Definition}
  \newtheorem{example}[theorem]{Example}
  \newtheorem{remark}[theorem]{Remark}
  \newtheorem{claim}{Claim}
\newcommand{\rright}{\textsf{right}}
\newcommand{\lleft}{\textsf{left}}
\newcommand{\Surplus}{\mathsf{Surplus}}
\newcommand{\Mispers}{\mathsf{Mispers}}
\newcommand{\E}{\mathbf{E}} 
\newcommand{\T}{\mathcal{T}}
  \def\dd{\mathinner{.\,.}}
  \newcommand{\floor}[1]{\left\lfloor #1 \right\rfloor}
  \newcommand{\ceil}[1]{\left\lceil #1 \right\rceil}
  \newcommand{\Oh}{\mathcal{O}}
  \newcommand{\cO}{\mathcal{O}}
  \newcommand{\PREF}{\textsf{PREF}\xspace} 
  \newcommand{\LCP}{\texttt{LCP}}
  \newcommand{\IPM}{\texttt{IPM}}
  \newcommand{\Extract}{\texttt{Extract}}
  \newcommand{\Access}{\texttt{Access}}
  \newcommand{\Length}{\texttt{Length}}
  \newcommand{\Misp}{\mathsf{Misp}}
  \newcommand{\rot}{\mathsf{rot}}
  \newcommand{\Occ}{\textsf{Occ}}
  \newcommand{\cyc}[2]{\mathsf{CircOcc}^{#1}_{#2}}
  \newcommand{\ancycoc}{\mathsf{Anchored}}
  \newcommand{\Output}{\mathsf{output}}
  \newcommand{\allkep}{All-$k$-Edit-PrefMatch\xspace}
  \def\pillar{{\tt PILLAR}\xspace}
  \newcommand{\CPMA}{\textsf{CPM}}
  \newcommand{\ED}{\mathit{ED}}
  \newcommand{\ed}{\delta_E}
  \newcommand{\Ham}{\delta_H}
  \newcommand{\ddist}{\delta_D}
\newcommand{\lt}{\mathsf{lt}}
\newcommand{\br}{\mathsf{br}}
\newcommand{\dist}{\mathsf{dist}}
\newcommand{\LPref}{\textsf{LPref}}
\newcommand{\LSuf}{\textsf{LSuf}}
\newcommand{\Chain}{\textsf{Chain}}
\newcommand{\F}{\mathcal{F}}
\newcommand{\defproblem}[3]{

\smallskip
\vspace*{.7mm}%
\noindent\fbox{
\begin{minipage}{0.96\textwidth}
\vspace*{.7mm}
\textbf{Problem} #1
\vspace*{.8mm}%

\noindent
{\bf{Input:}} #2
\vspace*{.7mm}%

\noindent
{\bf{Output:}} #3%
\vspace*{.5mm}%
\end{minipage}
}
\vspace*{.5mm}
}
  \newcommand{\fragment}{substring\xspace}
  \newcommand{\fragments}{substrings\xspace}
\setlist[enumerate]{nosep, topsep=1ex}
\setlist[itemize]{nosep, topsep=1ex}
\setlist[description]{nosep}
\title{Approximate Circular Pattern Matching}
\author[1]{Panagiotis Charalampopoulos}
\author[2]{Tomasz Kociumaka}
\author[3]{Jakub Radoszewski}
\author[4,5]{Solon~P.~Pissis}
\author[3]{Wojciech Rytter}
\author[3]{Tomasz Waleń}
\author[3]{Wiktor Zuba}
\affil[1]{King's College London, UK\\
    \texttt{p.charalampopoulos@kcl.ac.uk}}
\affil[2]{Max Planck Institute for Informatics, Saarbrücken, Germany\\
    \texttt{tomasz.kociumaka@mpi-inf.mpg.de}}
\affil[3]{University of Warsaw, Poland\\
    \texttt{$\{$jrad,rytter,walen,w.zuba$\}$@mimuw.edu.pl}}
\affil[4]{CWI, Amsterdam, The Netherlands\\
    \texttt{solon.pissis@cwi.nl}}
\affil[5]{Vrije Universiteit, Amsterdam, The Netherlands}
\date{\vspace{-5ex}}
\begin{document}

\maketitle
\begin{abstract}
We investigate the complexity of approximate \emph{circular pattern matching} (CPM, in short) under the \emph{Hamming} and \emph{edit} distances.
In this problem, we are given a length-$n$ text~$T$, a length-$m$ pattern $P$, and a positive integer threshold~$k$, and we are to report all starting positions (called occurrences) of \fragments of $T$ that are at distance at most $k$ from some cyclic rotation of~$P$.
The decision version of the problem asks to check if any such occurrence exists.
All previous results for approximate CPM were either average-case upper bounds or heuristics, with the exception of the work of Charalampopoulos et al.~[CKP$^+$, JCSS'21], who considered the Hamming distance only. 
For the reporting version of the approximate CPM problem under the Hamming distance, we improve upon the main algorithm of [CKP$^+$, JCSS'21] from $\Oh(n+(n/m) \cdot k^4)$ to $\Oh(n+(n/m) \cdot k^3)$ time; for the edit distance, we give an $\Oh(nk^2)$-time algorithm.
Notably, for the decision versions and wide parameter ranges, we give algorithms whose complexities are almost identical to the state-of-the-art for standard (i.e., non-circular) approximate pattern matching:
\begin{itemize}
\item For the decision version of the approximate CPM problem under the Hamming distance,
we obtain an $\Oh(n+(n/m) \cdot k^2 \log k / \log \log k)$-time algorithm, which works in $\Oh(n)$ time whenever $k = \Oh(\sqrt{m \log \log m / \log m})$.
In comparison, the fastest algorithm for the standard counterpart of the problem, by Chan et al.~[CGKKP, STOC’20], runs in $\Oh(n)$ time only for $k=\Oh(\sqrt{m})$.
\item For the decision version of the approximate CPM problem under the edit distance, the $\Oh(nk\log^2 k)$ time complexity of our algorithm near matches the $\Oh(nk)$ time complexity of the Landau--Vishkin algorithm [LV, J.~Algorithms'89] for approximate pattern matching under edit distance; the latter algorithm remains the fastest known for $k=\Omega(m^{2/5})$.
As a stepping stone, we propose an $\Oh(nk\log^2 k)$-time algorithm for solving the Longest Prefix $k'$-Edit Match problem, proposed by Landau et al.\ [LMS, SICOMP'98], for all $k' \in \{1,\dots,k\}$.
We also show a further application of said problem in computing approximate longest common substrings.
\end{itemize}

We give a conditional lower bound that suggests a polynomial separation between approximate CPM under the Hamming distance over the binary alphabet and its non-circular counterpart.
We also show that a strongly subquadratic-time algorithm for the decision version of approximate CPM under the edit distance would refute the Strong Exponential Time Hypothesis.
\end{abstract}

\section{Introduction}

Pattern matching is one of the most widely studied problems in computer science. Given two strings, a \emph{pattern} $P$ of length $m$ and a \emph{text} $T$ of length $n$, the task is to find all occurrences of $P$ in $T$.
In the standard setting, the matching relation between $P$ and the substrings of $T$ assumes that the leftmost and rightmost positions of the pattern are conceptually important.
In many real-world applications, however, any rotation (cyclic shift) of $P$ is a relevant pattern. For instance, in bioinformatics~\cite{BMCgenomicsAyad2017,DBLP:conf/wea/BartonIKPRV15,DBLP:journals/almob/GrossiIMPPRV16,DBLP:books/sp/17/IliopoulosPR17}, the position where a sequence starts can be totally arbitrary due to arbitrariness in the sequencing of a circular molecular structure or due to inconsistencies introduced into sequence databases as a result of different linearization standards~\cite{BMCgenomicsAyad2017}. 
In image processing~\cite{DBLP:journals/prl/AyadBP17,DBLP:journals/ivc/Palazon-GonzalezM12,DBLP:journals/prl/Palazon-Gonzalez15,DBLP:journals/pr/Palazon-GonzalezMV14}, the contours of a shape may be represented through a directional chain code; the latter can be interpreted as a cyclic sequence if the orientation of the image is not important~\cite{DBLP:journals/prl/AyadBP17}.

With such scenarios in mind, when matching a pattern $P$ against a text $T$, one is interested in computing all \fragments of $T$ that match some rotation of $P$. Let us introduce the necessary basic notation.
For integers $x, y$, by $[x \dd y]=[x \dd y+1)=(x-1 \dd y+1)$ we denote an integer interval (or, simply, interval) $\{x,x+1,\ldots,y\}$, which is empty if $x>y$.
The positions of a string $U$ are numbered from 0 to $|U|-1$, with $U[i]$ denoting the $i$-th letter, and $U[i \dd j]=U[i \dd j+1)$ denoting the substring $U[i] \cdots U[j]$, which is empty if $i>j$.
One can then consider the (exact) Circular Pattern Matching (CPM) problem in which, given a text $T$ of length $n$ and a pattern $P$ of length $m$,
we are to compute the set $\{\, i\in [0\dd n-m]\,:\, T[i \dd i+m)=P' \text{ for some rotation }P'\text{ of }P\,\}$.

A textbook solution for CPM works in $\Oh(n\log\sigma)$ time  (or $\Oh(n)$ time with randomization),
where~$\sigma$ is the alphabet size, using the suffix automaton of $P\cdot P$~\cite{lothaire_2005}.
There is also a simple deterministic $\Oh(n)$-time algorithm which we discuss in \cref{sec:CPM}.
Many practically fast algorithms for CPM also exist; see~\cite{DBLP:journals/cj/ChenHL14,DBLP:journals/jda/FredrikssonG09,DBLP:conf/icmmi/SusikGD13} and references therein.
For the indexing version of the CPM problem (searching a collection of circular patterns), see~\cite{DBLP:conf/cpm/HonKST13,DBLP:conf/isaac/HonLST11, DBLP:books/sp/17/IliopoulosPR17,DBLP:journals/corr/abs-2504-03394}.

As in the standard pattern matching setting, a single changed, surplus, or missing
letter in $P$ or in~$T$ may result in many occurrences being missed.
In bioinformatics, this may correspond to a single-nucleotide polymorphism; in image processing, this may correspond to data corruption. 
We consider two well-known metrics on strings: the Hamming distance $\Ham$ (the number of mismatches for two equal-length strings, otherwise equal to $\infty$) and the edit distance $\ed$ (the minimum number of insertions, deletions, and substitutions required to transform one string into the other).
For two strings $U$ and $V$, an integer $k>0$, and a string metric $d$, we write $U =_k^d V$ if $d(U,V) \le k$ and $U \approx_k^d V$ if there exists a rotation $U'$ of $U$ such that $U' =_k^d V$.
We define 
\[\cyc{d}{k}(P,T) := \{i\,:\, P \approx_k^d T[i \dd j) \ \text{for some}\ j \ge i\}\]
we call $\cyc{d}{k}(P,T)$ the set of \emph{circular $k$-mismatch ($k$-edit) occurrences of $P$ in $T$} if $d=\delta_H$ ($d=\delta_E$, respectively).
We may omit the $d$-superscript when it is clear from the context.
Next, we formally define four variants of $k$-approximate CPM; see \cref{fig:ex} for an example.

\defproblem{$k$-Approximate CPM: $k$-Mismatch CPM and $k$-Edit CPM}{
  A text $T$ of length $n$, a pattern $P$ of length $m$, a positive integer $k$, and a distance function~$d$: $d=\Ham$ for $k$-Mismatch CPM and $d=\ed$ for $k$-Edit CPM.
}{
  ({\bf Reporting}) $\cyc{d}{k}(P,T)$.\\
  \hspace*{1.3cm} ({\bf Decision})  An arbitrary position $i \in \cyc{d}{k}(P,T)$, if any exists.
}

\begin{figure}
\centering
\begin{tikzpicture}
  \tikzstyle{red}=[color=red!90!black]
  \tikzstyle{darkred}=[color=red!50!black]
  \tikzstyle{blue}=[color=blue!50!black]
  \tikzstyle{black}=[color=black]
  \tikzstyle{green}=[color=green!50!black]
  \definecolor{turq}{RGB}{74,223,208}
  \definecolor{pink}{RGB}{254,193,203}

  \begin{scope}[xshift=-3.5cm,yshift=-1cm]
  \node at (-0.3,0) [left, above] {$P=$};
  \draw [fill=green!30!white] (0.16, 0.07) rectangle (1.05, 0.38);
  \draw [fill=turq] (1.05, 0.07) rectangle (2.25, 0.38);
  \foreach \c/\s [count=\i from 0] in {a/green,b/green,c/green,b/blue,b/blue,b/blue,b/blue} {
    \node at (\i * 0.3 + 0.3, 0) [above, \s] {\tt \c};
    \node at (\i * 0.3 + 0.3, 0.1) [below] {\tiny \i};
  }
  \end{scope}
  
  \begin{scope}[yshift=-2cm]
  \node at (-0.3,0) [left, above] {$T=$};
  \draw [fill=green!30!white] (2.54, 0.07) rectangle (3.43, 0.38);
  \draw [fill=turq] (1.35, 0.07) rectangle (2.54, 0.38);
  \foreach \c/\s [count=\i from 0] in {a/black,a/black,c/black,c/black,b/blue,c/blue,b/blue,b/blue,a/green,b/green,b/green,b/black} {
    \node at (\i * 0.3 + 0.3, 0) [above, \s] {\tt \c};
    \node at (\i * 0.3 + 0.3, 0.1) [below] {\tiny \i};
  }
  \end{scope}

  \begin{scope}[yshift=-1cm, xshift=1.2cm]
  \node at (-0.7,-0.1) [above] {$\rot_3(P)=$};
  \draw [fill=green!30!white] (1.34, 0.07) rectangle (2.25, 0.38);
  \draw [fill=turq] (0.15, 0.07) rectangle (1.34, 0.38);
  \foreach \c/\s [count=\i from 0] in {b/blue,b/red,b/blue,b/blue,a/green,b/green,c/red} {
    \node at (\i * 0.3 + 0.3, 0) [above, \s] {\tt \c};
  }
  \foreach \ii [count=\i from 0] in {3,4,5,6,0,1,2} {
    \node at (\i * 0.3 + 0.3, 0.1) [below] {\tiny \ii};
  }
  \end{scope}

\begin{scope}[xshift=5.9cm]
  \begin{scope}[yshift=-2cm]
  \node at (-0.3,0) [left, above] {$T'=$};
  \begin{scope}[xshift=-0.3cm]
  \draw [fill=green!30!white] (2.54, 0.07) rectangle (3.73, 0.38);
  \draw [fill=turq] (1.35, 0.07) rectangle (2.54, 0.38);
  \end{scope}
  \foreach \c/\s [count=\i from 0] in {a/black,a/black,c/black,b/blue,b/blue,c/blue,b/blue,a/green,c/green,b/green,c/green,b/black} {
    \node at (\i * 0.3 + 0.3, 0) [above, \s] {\tt \c};
    \node at (\i * 0.3 + 0.3, 0.1) [below] {\tiny \i};
  }
  \end{scope}

  \begin{scope}[yshift=-1cm, xshift=0.9cm]
  \node at (-0.7,-0.1) [above] {$\rot_3(P)=$};
  \draw [fill=green!30!white] (1.34, 0.07) rectangle (2.55, 0.38);
  \draw [fill=turq] (0.15, 0.07) rectangle (1.34, 0.38);
  \foreach \c/\s [count=\i from 0] in {b/blue,b/blue,b/red,b/blue,a/green,a/green!30!white,b/green,c/green} {
    \node at (\i * 0.3 + 0.3, 0) [above, \s] {\tt \c};
  }
  \foreach \ii [count=\i from 0] in {3,4,5,6,0,\,,1,2} {
    \node at (\i * 0.3 + 0.3, 0.1) [below] {\tiny \ii};
  }
  \end{scope}
  \end{scope}
  
\end{tikzpicture}
\caption{Left: a pattern $P$. Middle: a circular 2-mismatch occurrence of the pattern $P$ at position 4 in a text $T$. Right: a circular 2-edit occurrence of pattern $P$ (with the same rotation) at position 3 in text $T'$; note that there is no circular 2-mismatch occurrence of $P$ in $T'$ at this position.}\label{fig:ex}
\end{figure}

\paragraph*{Upper bounds.}
A summary of the previous and state-of-the-art worst-case upper bounds on approximate CPM for strings over a polynomially-bounded integer alphabet (we omit ``polynomially-bounded'' henceforth) is provided in \cref{tab:res_ub}.
A relatively large body of work has been devoted to practically fast algorithms for Approximate CPM; see~\cite{DBLP:journals/prl/AyadBP17,DBLP:journals/bmcbi/AyadPR16,DBLP:conf/isbra/AzimIRS15,DBLP:journals/almob/BartonIP14,DBLP:conf/lata/BartonIP15,DBLP:journals/jea/FredrikssonN04,DBLP:journals/jea/HirvolaT17,DBLP:journals/tjs/HoOK18} and references therein.
Charalampopoulos et al.~\cite{DBLP:journals/jcss/Charalampopoulos21} were the first to provide worst-case efficient algorithms for $k$-Mismatch CPM: they showed an $\Oh(nk)$-time algorithm and an $\Oh(n+(n/m)k^4)$-time algorithm.
After the preliminary version of this work~\cite{DBLP:conf/esa/Charalampopoulos22}, algorithms working in $\Oh(n+(n/m) \cdot k^6)$ time and $\Oh(n+(n/m) \cdot k^5 \log^3 k)$ time for the reporting and decision versions, respectively, of $k$-Edit CPM were presented in \cite{stacs24}. We remark that, when $k = \omega(\sqrt[4]{m})$, our algorithms for $k$-Edit CPM remain faster than those given in \cite{stacs24}.

\renewcommand{\arraystretch}{1.2}
\begin{table}[htpb]
    \centering
    \begin{tabular}{c|c|c|c}
        String metric & Version & Time complexity & Reference \\\hline
        \multirow{4}{*}{Hamming distance} & \multirow{3}{*}{reporting} & $\Oh(nk)$ & \multirow{2}{*}{\cite{DBLP:journals/jcss/Charalampopoulos21}} \\\cline{3-3}
        & & \textcolor{white!60!black}{$\Oh(n+(n/m) \cdot k^4)$} & \\\cline{3-4}
        & & $\Oh(n+(n/m) \cdot k^3)$ & \multirow{2}{*}{\cref{thm:ham_standard}}
          \\\cline{2-3}
        & decision & $\Oh(n+(n/m) \cdot k^2 \log k / \log \log k)$  & 
          \\\cline{1-4}
         \multirow{4}{*}{Edit distance} & \multirow{2}{*}{reporting} & $\Oh(nk^2)$ & \cref{thm:ed_rep}
         \\\cline{3-4} 
        &  & $\Oh(n+(n/m) \cdot k^6)$ & \cite{stacs24}
         \\\cline{2-4} 
        & \multirow{2}{*}{decision} & $\Oh(nk\log^2 k)$ & \cref{thm:ed_dec}
         \\\cline{3-4} 
         & & \textcolor{white!60!black}{$\Oh(n+(n/m) \cdot k^5 \log^3 k)$} & \cite{stacs24}
         \\\cline{3-4} 
         & & $\Oh(n+(n/m) \cdot k^5 \log^2 k)$ & \cite{stacs24} and \cref{lem:forSTACS}
    \end{tabular}
    \caption{A summary of known upper bounds for $k$-Approximate CPM. The bounds subsumed by subsequent work are displayed in gray.}\label{tab:res_ub}
\end{table}

In \cref{sec:mismatch,sec:std}, we prove the following result. In the preliminary version of the paper~\cite{DBLP:conf/esa/Charalampopoulos22}, the time complexity of the reporting version of $k$-Mismatch CPM was $\Oh(n + (n/m) \cdot k^3 \log \log k)$.

\begin{restatable}{theorem}{hamstand}\label{thm:ham_standard}
The reporting and decision versions of $k$-Mismatch CPM for strings over an integer alphabet can be solved in time $\Oh(n + (n/m) \cdot k^3)$ and $\Oh(n + (n/m) \cdot k^2 \log k / \log\log k)$, respectively.
\end{restatable}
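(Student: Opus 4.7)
The plan follows a standard block decomposition: partition $T$ into $\Oh(n/m)$ overlapping fragments of length $\Theta(m)$, each consecutive pair sharing a prefix/suffix of length $m$, so that every circular $k$-mismatch occurrence of $P$ in $T$ lies entirely inside some fragment. It then suffices to solve the local problem on each fragment $B$ within $\Oh(k^3\log\log k)$ time for reporting and $\Oh(k^2\log k/\log\log k)$ time for decision; the additive $\Oh(n)$ term absorbs a single scan of $T$ together with \pillar-style preprocessing that answers $k$-mismatch queries between fragments of $P$ and $T$ in $\Oh(k\log\log k)$ time.

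For the local problem on a fixed block $B$, the first step is to run the $k$-mismatch algorithm of Chan et al.\ on $P$ against $PP$, obtaining the set $R$ of rotations of $P$ at Hamming distance at most $k$ from $P$ itself. A structural dichotomy---essentially Cole--Hariharan, as used in~\cite{DBLP:journals/jcss/Charalampopoulos21}---then applies: either (A) $|R|=\Oh(k)$ (aperiodic case), or (B) the elements of $R$ form an arithmetic progression with common difference $q$, where $q$ is an approximate period of $P$ (periodic case). In case (A), the triangle inequality forces every rotation $P'$ admitting a $k$-mismatch occurrence in $B$ to be within Hamming distance $2k$ of some ``anchor'' rotation in $R$. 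We invoke Chan et al.'s algorithm with each anchor playing the role of the pattern at cost $\Oh(m+k^2\log k/\log\log k)$ per call, obtaining $\Oh(k)$ candidate starting positions per anchor; each candidate is then refined to the precise rotation (if any) realising a circular $k$-match via $\Oh(1)$ $k$-mismatch queries answered in $\Oh(k\log\log k)$ time from the preprocessed index. For the decision variant, finding a single anchor witnessing a match suffices, removing the outer factor of $k$.

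The periodic case (B) is the main obstacle: naively we would pay separately for each of up to $\Theta(m/q)$ distinct rotations. Instead, we exploit two facts. First, moving from one rotation to the next along the progression (shift $q$) perturbs only $\Oh(k)$ per-position contributions to the Hamming distance against any fixed window of $B$. Second, the candidate starting positions in $B$ themselves line up on $\Oh(1)$ arithmetic progressions of step $q$, up to $\Oh(k)$ exceptions inherited from the mismatch budget. These are combined into a batched sliding computation: a single seed invocation of Chan et al.'s algorithm on $P$ against $B$, followed by $\Oh(k^2)$ incremental updates, each served in $\Oh(k\log\log k)$ time by a $k$-mismatch query, enumerate all circular occurrences within the claimed per-block budget. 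The decision variant replaces the incremental sweep by a single consistency check and therefore retains Chan et al.'s original $\Oh(m+k^2\log k/\log\log k)$ bound per block, explaining the gap of roughly a factor $k$ between the reporting and decision complexities in the theorem.
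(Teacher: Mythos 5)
There is a genuine gap, and it sits at the heart of your case analysis. Your dichotomy is driven by the set $R$ of rotations of $P$ that are within Hamming distance $k$ of $P$ \emph{itself}, and in case (A) you claim that the triangle inequality forces any rotation with a $k$-mismatch occurrence in the block to lie within distance $2k$ of some element of $R$. This is false: the rotation occurring in the text need not be close to $P$ at all. Take $P=\texttt{a}^{m/2}\texttt{b}^{m/2}$ and a block containing $\texttt{b}^{m/2}\texttt{a}^{m/2}$; then $R$ consists only of rotations near the trivial one, yet the rotation $\rot_{m/2}(P)$ occurs exactly while being at distance $m$ from every member of $R$. The triangle inequality only helps when two rotations are compared against the \emph{same} text fragment, and nothing ties the occurring rotation to $P$. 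The correct anchoring (which the paper uses, following its predecessor) is on standard $k$-mismatch occurrences of a half-length sample $P_1$ (with $P=P_1P_2$) inside each window: every circular occurrence forces an occurrence of $P_1$ or $P_2$, and the few/many dichotomy is applied to $\Occ_k(P_1,T')$. Crucially, in the ``many'' case the structural theorem yields approximate periodicity of \emph{both} $P_1$ and the portion of the text spanned by its occurrences, which is what makes the periodic case tractable; your case (B) only extracts an approximate period of $P$ from $R$ and never establishes any structure on the text, so the ``batched sliding computation'' has no basis.

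There are also complexity problems even where the structure is assumed. Invoking Chan et al.'s algorithm once per anchor costs $\Omega(m)$ per call, so with $\Theta(k)$ anchors per block you already pay $\Omega(nk)$ overall, exceeding the claimed $\Oh(n+(n/m)k^3\log\log k)$; moreover a single pattern can have far more than $\Oh(k)$ $k$-mismatch occurrences in a length-$\Theta(m)$ window (they are only guaranteed to form $\Oh(k^2)$ arithmetic progressions), and deciding whether a fixed text position hosts \emph{some} rotation within distance $k$ is itself a nontrivial subproblem (the paper's \textsc{PairMatch}$_k$, solved in $\Oh(k)$ time, not $\Oh(1)$ queries). Finally, the decision bound $\Oh(n+(n/m)k^2\log k/\log\log k)$ does not come from ``stopping at the first witness'': in the worst case there is no witness and you still examine everything. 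The paper instead reformulates the periodic case geometrically (cells weighted by misperiod counts, diagonal segments correcting aligned surplus misperiods) and solves a \textsc{DiagonalSegments} minimization with a sweep using dynamic predecessor and dynamic RMQ structures in $\Oh(k^2\log k/\log\log k)$ time; some ingredient of this kind is needed to beat the $k^3$-type bound for decision, and your proposal contains no substitute for it.
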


A proof of the following theorem, based on the classic Landau--Vishkin algorithm~\cite{DBLP:journals/jal/LandauV89}, is given in \cref{sec:edit}.

\begin{restatable}{theorem}{reportingEditCPM}\label{thm:ed_rep}
The reporting version of $k$-Edit CPM for strings over an integer alphabet can be solved in $\Oh(nk^2)$ time.
\end{restatable}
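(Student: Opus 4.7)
The plan is to build on the classic Landau--Vishkin (LV) algorithm~\cite{DBLP:journals/jal/LandauV89}, which solves standard $k$-edit pattern matching in $\Oh(nk)$ time. LV maintains entries $L[e, d]$ indexed by error count $e \in [0, k]$ and diagonal $d$ of the DP grid, computing each in $\Oh(1)$ amortized time via $\Oh(1)$-time LCE queries, enabled by $\Oh(n)$-time preprocessing of the pattern and text.

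To capture rotations, I would work with the doubled pattern $Q := P \cdot P$ of length $2m$. Since $\rot_s(P) = Q[s\dd s+m)$ for each $s \in [0, m)$, we have $i \in \cycoc{\ed}{k}(P, T)$ iff there exist $s \in [0, m)$ and $j$ with $\ed(Q[s\dd s+m),\, T[i\dd j)) \le k$. I would preprocess both $T$ and $Q$ for $\Oh(1)$-time LCE queries in $\Oh(n)$ total time. A circular $k$-edit occurrence at position $i$ thus corresponds to a cost-$\le k$ path in the $(2m)\times(n)$ DP grid from $(s, i)$ to $(s+m, j)$ for some $s \in [0, m)$.

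The main strategy is to loop over text starting positions $i \in [0, n)$ and, for each $i$, decide in $\Oh(k^2)$ amortized time whether any rotation yields a circular $k$-edit occurrence at $i$. For a fixed $i$, I would run a combined LV DP on $Q$ vs.\ $T[i\dd i + m + k)$ that allows an alignment to start at any row $s \in [0, m)$ and requires reaching row $s + m$ with at most $k$ edits. The DP entries $L[e, d]$ are indexed by $(e, d)$ with $e \le k$ and $d$ in a restricted range, each computed in $\Oh(1)$ using an LCE query on $Q$ and $T$. Provided that only $\Oh(k)$ starting rows $s$ are relevant at each position $i$---and hence only $\Oh(k)$ diagonals enter the DP---the per-position work is $\Oh(k^2)$ and the total is $\Oh(n k^2)$.

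The main obstacle is establishing the ``$\Oh(k)$ relevant starting rows per text position'' bound, which I would handle via a periodicity dichotomy on $P$. In the \emph{aperiodic} case, when $P$ has no period of length $\le m/2$, a combinatorial argument combining the Fine--Wilf theorem with the triangle inequality shows that any two starting rows $s, s'$ that simultaneously yield $\le k$-edit circular occurrences at the same position $i$ must satisfy $|s - s'| = \Oh(k)$, so at most $\Oh(k)$ starting rows are relevant. In the \emph{periodic} case with period $q \le m/2$, the distinct rotations of $P$ number only $q$; if $q \le k$ there are already $\Oh(k)$ rotations in total, and otherwise we decompose the relevant rotations into $\Oh(k)$ equivalence classes modulo $q$, each handled by a single LV pass that exploits the periodic structure of $P$. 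Combining these cases, and merging the per-position outputs into $\cycoc{\ed}{k}(P,T)$, yields the $\Oh(n k^2)$ runtime.
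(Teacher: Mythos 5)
There is a genuine gap, and it sits exactly where you flagged the ``main obstacle'': the claim that only $\Oh(k)$ starting rows are relevant per text position, and your plan to prove it via an exact-periodicity dichotomy. Exact aperiodicity (no period $\le m/2$, Fine--Wilf) does not rule out \emph{approximate} self-similarity, which is what actually governs how many rotations can match at one position. For example, let $P$ be obtained from $(ab)^{m/2}$ by changing a single letter: $P$ has no period $\le m/2$, yet every even rotation is within edit distance $\Oh(1)$ of every other, so for a text window close to $(ab)^{\infty}$ there are $\Omega(m)$ rotations (starting rows $s$) that simultaneously give $\le k$-edit circular occurrences at the same position $i$, already for constant $k$. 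So the aperiodic branch of your dichotomy is false as stated, and the triangle-inequality argument cannot be rescued without replacing exact periods by approximate periodicity (the kind of machinery this paper deploys only for the Hamming case). The periodic branch also has issues: a period $q\le m/2$ gives only $q$ distinct rotations when $q$ divides $m$, and the claim that the relevant rotations fall into $\Oh(k)$ residue classes mod $q$ needs a synchronization argument that is not supplied. Finally, even granting the $\Oh(k)$ bound, you never say how to \emph{find} the relevant rows/diagonals for a given $i$ before running the restricted LV pass; without that, the multi-source DP on $Q$ versus $T[i\dd i+m+k)$ has $\Theta(m)$ live diagonals (and the constraint ``end at row $s+m$ for the same $s$ you started at'' prevents merging all sources), so the naive cost is $\Theta(mk)$ per position, not $\Oh(k^2)$.

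The paper's route avoids rotation-counting entirely. By \cref{obs:ed_cl}, every circular $k$-edit occurrence $T[i\dd r)$ splits at some text position $j$ with $\ed(T[i\dd j),P_2)+\ed(T[j\dd r),P_1)\le k$ for a decomposition $P=P_1P_2$; the algorithm iterates over the $n+1$ anchors $j$ rather than over starting positions or rotations. For each $j$ it runs Landau--Vishkin backwards from $j$ on the reversed strings (\cref{lem:compact-representation-of-ed}), which compactly encodes, as $\Oh(k^2)$ diagonal runs, all pairs (start $i$, suffix $P_2$) at distance $d\le k$, and combines this with the values $\LPref_{k-d}[j]$ (\cref{cor:lpref}) to test whether the complementary prefix $P_1$ fits in the remaining budget; this is \cref{lem:report-anchored}, $\Oh(k^2)$ time per $j$, hence $\Oh(nk^2)$ overall with no case analysis on the structure of $P$. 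If you want to salvage your per-position viewpoint, note that a path from $(s,i)$ to $(s+m,j')$ in the grid for $Q$ must cross the fixed row $m$; anchoring at that crossing and sweeping its text coordinate is precisely the paper's scheme, so the cleanest fix is to adopt that anchoring rather than to try to bound the number of relevant rotations.
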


We reduce the decision version of $k$-Edit CPM to the following problem.

\defproblem{Longest Prefix $k$-Edit Match ($k$-Edit PrefMatch)}{
  A text $T$ of length $n$, a pattern $P$, and a positive integer $k$.
}{
  An array $\LPref_k[0 \dd n]$ such that $\LPref_k[i]$ is the length of the longest prefix of~$P$ that matches a prefix of $T[i \dd n)$ with at most $k$ edits.
}

\noindent This problem was introduced under the name Longest Prefix $k$-Approximate Match by Landau et al.~\cite{DBLP:journals/siamcomp/LandauMS98}.
(An analogous problem for the Hamming distance has also been studied~\cite{DBLP:conf/swat/KaplanPS06}.)
Specifically, we introduce a problem called \allkep that consists in solving $k'$-Edit PrefMatch for all $k' \in [0\dd k]$.
We show that the decision version of $k$-Edit CPM can be reduced to \allkep on the same pattern and text and on the reversed pattern and text. Landau et al.~\cite{DBLP:journals/siamcomp/LandauMS98} gave an $\Oh(nk)$-time solution to $k$-Edit PrefMatch, which yields an $\Oh(nk^2)$-time solution to \allkep. In \cref{sec:AllkPREFMATCH}, we show the following result for \allkep (\cref{thm:allkPREFMATCH}), which implies \cref{thm:ed_dec}.

\begin{restatable}{theorem}{allkPREFMATCH}\label{thm:allkPREFMATCH}
\allkep for strings over an integer alphabet can be solved in time $\Oh(nk\log^2 k)$.
\end{restatable}

\begin{restatable}{theorem}{decisionEditCPM}\label{thm:ed_dec}
The decision version of $k$-Edit CPM for strings over an integer alphabet can be solved in time $\Oh(nk\log^2 k)$.
\end{restatable}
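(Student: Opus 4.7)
The plan is to carry out the two-sided All-$k$-LPAM reduction announced in the paper. I would apply \cref{thm:allkLPAM} twice: once to $(T,P)$ to obtain the arrays $L_{k'}[\cdot] := \LPref_{k'}[\cdot]$ for every $k' \in [0\dd k]$, and once to the reverses $(T^R,P^R)$ to obtain, after re-indexing, arrays $R_{k'}[\cdot]$ such that $R_{k'}[j]$ is the length of the longest suffix of $P$ matching a suffix of $T[0\dd j)$ with at most $k'$ edits. Both calls together cost $\Oh(nk\log^3 k)$.

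The core claim I would prove is: $\cycoc{\ed}{k}(P,T)\neq\emptyset$ if and only if there exist $j\in[0\dd n]$ and $k_1\in[0\dd k]$ with $L_{k-k_1}[j]+R_{k_1}[j]\ge m$. Given this equivalence, the decision algorithm simply iterates over all pairs $(j,k_1)$ in $\Oh(nk)$ time, so the total running time is dominated by the two LPAM calls and equals $\Oh(nk\log^3 k)$.

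For the forward direction, consider a circular occurrence witnessed by $\ed(\rot_s(P),T[i\dd j'))\le k$. I would fix an optimal alignment and let $j\in[i\dd j']$ be the position in $T$ separating the characters aligned with $P[s\dd m)$ from those aligned with $P[0\dd s)$; if the two sub-alignments cost $k_1$ and $k_2$ edits, then $k_1+k_2\le k$, giving $R_{k_1}[j]\ge m-s$ and $L_{k_2}[j]\ge s$. Monotonicity of $\LPref_{k'}$ in $k'$ yields $L_{k-k_1}[j]\ge L_{k_2}[j]\ge s$, and summing the two inequalities produces the required condition. Conversely, given $L_{k-k_1}[j]+R_{k_1}[j]\ge m$, I would pick any integer $s$ in the nonempty interval $[\max(0,m-R_{k_1}[j])\dd\min(m,L_{k-k_1}[j])]$, extract witnesses $i\le j$ and $j'\ge j$ from the two arrays, and concatenate their alignments to conclude $\ed(\rot_s(P),T[i\dd j'))\le k$, so $i\in\cycoc{\ed}{k}(P,T)$.

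The only piece of bookkeeping that deserves explicit attention is a \emph{suffix-monotonicity} property of $R$: if $R_{k'}[j]\ge l$, then for every $l'\le l$ the substring $P[m-l'\dd m)$ also matches some suffix of $T[0\dd j)$ with at most $k'$ edits. I would justify this by restricting an optimal alignment of the longer match to its rightmost $l'$ pattern characters, which still consumes a suffix of $T[0\dd j)$ and uses no more than the original number of edits. I expect the subtlest step to write carefully is the existence of a clean split position $j$ inside an optimal alignment of a rotation; once this is spelled out, the rest of the argument is routine given \cref{thm:allkLPAM} as a black box.
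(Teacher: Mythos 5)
Your reduction is the same as the paper's: two applications of \cref{thm:allkLPAM} (to $(T,P)$ and to the reversed strings) followed by an $\Oh(nk)$ scan over pairs $(j,k')$ testing $\LPref_{k-k'}[j]+\LSuf_{k'}[j]\ge m$. Your equivalence is exactly the paper's \cref{obs:red_LPAM}, and your proof of it (splitting an optimal alignment of a rotation at the position $j$ of $T$ separating the part aligned with $P[s\dd m)$ from the part aligned with $P[0\dd s)$, plus monotonicity of $\LPref_{k'}$ in $k'$ and the restriction argument for shorter prefixes/suffixes) is sound; the paper leaves this at the level of an observation, so spelling it out is fine.

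There is, however, one genuine gap. The decision version as defined in this paper does not ask for a yes/no answer: it asks for \emph{an arbitrary position $i\in\cycoc{\ed}{k}(P,T)$}, if one exists. Your plan to ``extract witnesses $i\le j$ and $j'\ge j$ from the two arrays'' does not work as stated, because the output of All-$k$-LPAM consists only of the lengths of the longest matching \emph{pattern} prefixes/suffixes; the arrays carry no information about which text positions realize those matches, so the starting position $i$ cannot be read off them. This is fine inside the existential proof of the equivalence, but it leaves your algorithm outputting only the pair $(j,k')$. The paper closes this step by fixing one position $j$ satisfying the inequality and running the procedure of \cref{lem:report-anchored} at that $j$: by \cref{obs:ed_cl} this call is guaranteed to report at least one starting position, and it costs $\Oh(n+k^2)$ time (including the $\Oh(n)$-time \pillar preprocessing of \cref{the:standardPILLAR}), which is absorbed by the $\Oh(nk\log^3 k)$ bound. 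Your proposal needs such an additional witness-recovery step (either this one, or an equivalent Landau--Vishkin-style computation anchored at $j$); with it added, the argument matches the paper's.
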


Compared with the preliminary version of the paper~\cite{DBLP:conf/esa/Charalampopoulos22}, the complexities in the two theorems are smaller by a factor of $\Theta(\log k)$. As a consequence, we improve the complexity of the algorithm from \cite{stacs24} for the decision version of $k$-Edit CPM to $\Oh(n+(n/m) \cdot k^5 \log^2 k)$. (It suffices to replace \cite[Lemma 7]{stacs24} by \cref{lem:forSTACS}, which is a counterpart of \cref{thm:ed_dec} stated in the \pillar model.) Since the algorithm in \cite{stacs24} was stated in the \pillar model, this automatically improves upon the complexities of the decision version of $k$-Edit CPM in the internal, dynamic, and fully compressed settings that are discussed in \cite[Appendix C]{stacs24}.

The complexities of our algorithms for the decision versions of $k$-Mismatch and $k$-Edit CPM match, up to $\log^{\cO(1)}k$ factors, the complexities of some of the fastest known algorithms for pattern matching with up to $k$ mismatches~\cite{DBLP:conf/stoc/ChanGKKP20,DBLP:conf/focs/Charalampopoulos20,DBLP:conf/soda/CliffordFPSS16,DBLP:conf/icalp/GawrychowskiU18} and edits~\cite{DBLP:journals/jal/LandauV89}, respectively.

In~\cite{DBLP:journals/siamcomp/LandauMS98}, an algorithm for the simpler problem of computing, given two strings $U$ and $V$ each of length at most $n$, the rotation of $U$ with the minimum edit distance to $V$ is given.
The algorithm works in $\Oh(ne)$ time, where $e$ is the minimum edit distance achieved.

\paragraph*{Applications of \allkep}
Let us consider the $k$-Edit Longest Common Substring ($k$-Edit LCSubstring) problem in which we are given two strings $S$ and $T$ of total length~$n$ and are to compute the longest substring of $S$ that matches a substring of $T$ with up to $k$ edits.
Abboud~et~al.~\cite{DBLP:conf/soda/AbboudWY15} proposed a solution for this problem working in $k^{1.5}n^2/2^{\Omega\left(\sqrt{{\log n}/{k}}\right)}$ time.
Let us notice that $k$-Edit LCSubstring can be solved directly by using the $k$-Edit PrefMatch problem for patterns equal to suffixes of $S$ and the text $T$.
By the result of Landau et al.~\cite{DBLP:journals/siamcomp/LandauMS98}, this yields an $\Oh(kn^2)$-time solution for $k$-Edit LCSubstring.
Moreover, our \cref{thm:allkPREFMATCH} yields an almost equally efficient, $\Oh(kn^2 \log^2 k)$-time algorithm that solves the $k'$-Edit LCSubstring for all $k' \in [1\dd k]$.
Both discussed algorithms have better complexity than the algorithm of Abboud et al.~\cite{DBLP:conf/soda/AbboudWY15} when $k$ is polynomial in $n$.
The techniques behind \allkep have already found other applications in approximate pattern matching~\cite{CPM25}.

\paragraph*{Our conditional lower bounds.}
We reduce known problems to approximate CPM, as shown in \cref{tab:res_lb}.

\begin{table}[htpb]
    \centering
    \begin{tabular}{c|c|c|c}
         Problem & Conditioned on & Complexity & Reference \\\hline
         Mismatch-CPM (unbounded) & BJI & \makecell{$\Omega(n^{1.5-\varepsilon})$ for all const.~$\varepsilon>0$} & \cref{hard:mismatch} 
         \\ \hline
         $k$-Edit CPM (decision) & SETH & $\Omega(n^{2-\varepsilon})$ for all const.~$\varepsilon>0$  & \cref{hard:edit}
    \end{tabular}
    \caption{Our conditional lower bounds for approximate CPM for alphabets of constant size.}
    \label{tab:res_lb}
\end{table}

For the Hamming distance, we consider the Mismatch-CPM problem where the number of allowed mismatches is unbounded (see \cref{sec:hardness} for a precise definition).
The breakthrough construction of a Binary Jumbled Index (BJI) in $\Oh(n^{1.859})$ time~\cite{DBLP:conf/stoc/ChanL15} was very recently improved to $n^{1.5}\log^{\Oh(1)} n$ time~\cite{https://doi.org/10.48550/arxiv.2204.04500}.
We show that, for any constant $\varepsilon>0$, obtaining an $\Oh(n^{1.5-\varepsilon})$-time algorithm for Mismatch-CPM over the binary alphabet would require a further improvement to BJI. In contrast, a similar problem of (non-circular) pattern matching with mismatches admits a classic $\Oh(n\log m)$-time solution (using convolutions) for constant-size alphabets~\cite{10.5555/889566}, whereas the fastest known solutions for a general alphabet are the randomised $\cO(n \sqrt{m})$-time algorithm of Chan~et~al.~\cite{DBLP:conf/focs/Chan0WX23} and the deterministic $\cO(n \sqrt{m \log\log m})$-time algorithm of Jin and Xu~\cite{DBLP:journals/corr/abs-2403-20326}.

Our conditional lower bound for $k$-Edit CPM is based on the hardness of computing the edit distance of two binary strings~\cite{DBLP:journals/siamcomp/BackursI18,DBLP:conf/focs/BringmannK15}, conditioned on the Strong Exponential Time Hypothesis (SETH)~\cite{DBLP:journals/jcss/ImpagliazzoP01}. It implies conditional optimality of our algorithm for the decision version of $k$-Edit CPM for a general $k \le n$, up to a subpolynomial factor. 

\paragraph*{The \pillar model.}
Our algorithms work in the \pillar model, introduced in~\cite{DBLP:conf/focs/Charalampopoulos20} with the aim of unifying approximate pattern matching algorithms across different settings.
In this model, we assume that the following primitive \pillar operations
can be performed efficiently, where
the argument strings are represented as \fragments of strings in a given collection $\mathcal{X}$:
\begin{itemize}
\item $\Extract(S, \ell, r)$: Retrieve string $S[\ell\dd r]$.
\item $\LCP(S, T),\, \LCP_R(S, T)$: Compute the length of the longest common prefix/suffix of $S$ and $T$.
\item $\IPM(S, T)$: Assuming that $|T| \le 2|S|$, compute the starting positions of all exact occurrences of $S$ in $T$, expressed as an arithmetic sequence.
\item $\Access(S, i)$: Retrieve the letter $S[i]$.
 \item $\Length(S)$: Compute the length $|S|$ of the string $S$.
\end{itemize}
The running time of algorithms in this model can be expressed in terms of the number of primitive \pillar operations and, if asymptotically larger, any extra time required by the algorithms.

In \cref{sec:mismatch}, we obtain fast algorithms for $k$-Mismatch CPM in the \pillar model.
Then, in \cref{app:settings}, we derive efficient solutions for the standard, internal, dynamic, fully compressed, quantum, packed, and read-only settings based on known implementations of the \pillar model in these settings. In particular, in the standard setting, where the strings are given explicitly and are over an integer alphabet, all primitive \pillar operations can be performed in $\Oh(1)$ time after a linear-time preprocessing.

\section{\texorpdfstring{$k$-Mismatch}{k-Mismatch} CPM in the \pillar Model}\label{sec:mismatch}
For a string $S$ and an integer $x \in [0\dd  |S|)$, we denote a cyclic rotation $\rot_x(S)=S[x \dd |S|)\cdot S[0\dd x)$.
For $i \in [0 \dd |S|-m]$, we denote $S^{(i)}=S[i\dd i+m)$.
Also, we denote the set of standard (non-circular) \emph{$k$-mismatch occurrences}  of $P$ in $T$ by
\[\Occ_k(P,T) = \{i \in [0\dd n-m]\,:\,T^{(i)} =_k P\}.\]

Let $P=P_1P_2$, where $|P_1| = \lfloor m/2\rfloor$.
Each circular 
$k$-mismatch occurrence of $P$ in $T$ implies a standard
$k$-mismatch occurrence of at least one of $P_1$ and $P_2$.
Henceforth, we assume (without loss of generality, as the remaining case is symmetric) that it implies a $k$-mismatch occurrence of $P_1$.
Our goal is to consider all $k$-mismatch occurrences of $P_1$ in $T$ as \emph{anchors} for computing circular $k$-mismatch occurrences of $P$ in $T$.
We call $P_1$ \emph{the sample}. 

By the so-called standard trick, we will consider
$\Oh(n/m)$ substrings of $T$ of length $\Oh(m)$ and process each of them separately. 
We denote one such substring by $T'$.
All positions (occurrences) of the sample can be efficiently computed in such a small window $T'$ by the following result (applied for $P=P_1$ and $T=T'$).

\begin{theorem}[{\cite[Main Theorems 5 and 8\protect\footnote{When referring to
  statements of~\cite{DBLP:conf/focs/Charalampopoulos20}, we use their numbering in the full (arXiv) version of
  the paper.}]{DBLP:conf/focs/Charalampopoulos20}}]\label{mt8} 
Given a pattern $P$ and a text $T$ of length $|T|=\Oh(|P|)$, a representation of the set $\Occ_k(P,T)$ as $\cO(k^2)$ pairwise disjoint arithmetic progressions can be computed using $\Oh(k^2 \log \log k)$ time plus $\Oh(k^2)$ \pillar operations.
\end{theorem}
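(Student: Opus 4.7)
The plan is to invoke the structural dichotomy for $k$-mismatch pattern matching based on the approximate periodicity of $P_1$. First, I would test whether $P_1$ admits a short approximate period, i.e., some string $Q$ with $|Q|\le |P_1|/(8k)$ such that $\Ham(P_1, Q^{\infty}[0\dd |P_1|))\le 2k$. This test can be carried out by a single $\IPM$ call that identifies the exact period of a short prefix of $P_1$ of length $\Theta(|P_1|/k)$, followed by $\Oh(k)$ kangaroo-style $\LCP$ queries that count mismatches between $P_1$ and the candidate periodic extension $Q^{\infty}$.

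If no short approximate period exists, then a prefix $P_1'$ of $P_1$ of length $\Theta(|P_1|/k)$ has no short exact period. Its exact occurrences in $T'$, obtained from $\Oh(1)$ $\IPM$ calls, form a single arithmetic progression of length $\Oh(k)$, since a longer run of exact occurrences of $P_1'$ would, by the Fine--Wilf theorem, impose a period on $P_1'$ incompatible with our assumption. Each of these $\Oh(k)$ anchor positions is then verified as a genuine $k$-mismatch occurrence of $P_1$ via $k+1$ $\LCP$ queries (kangaroo jumps), producing at most $\Oh(k)$ singleton arithmetic progressions in $\Oh(k^2)$ \pillar operations total.

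In the periodic case, let $Q$ be the approximate period. One argues, via approximate periodicity lemmas, that every $k$-mismatch occurrence of $P_1$ in $T'$ lies inside one of $\Oh(k)$ maximal approximate runs of period $|Q|$ in $T'$; within each run, the occurrences partition, according to their phase modulo $|Q|$ and to how they align the exceptional positions of $P_1$ with those of the run, into $\Oh(k)$ arithmetic progressions of common difference $|Q|$, giving $\Oh(k^2)$ progressions in total. The progressions are enumerated by jointly scanning the sorted mismatch profile of $P_1$ versus $Q^{\infty}$ and of $T'$ versus $Q^{\infty}$; the profiles themselves are computed by $\Oh(k^2)$ kangaroo-style $\LCP$ calls. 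The main obstacle is this periodic case: establishing the $\Oh(k^2)$ output bound and making the shift-by-shift enumeration fit within the $\Oh(k^2\log\log k)$ time budget. The extra $\log\log k$ factor arises from integer-universe predecessor structures (e.g., $y$-fast tries) used to locate, for each shift, the aligned mismatches in amortized $\Oh(\log\log k)$ time, so that each output progression is charged only $\Oh(1)$ \pillar operations and $\Oh(\log\log k)$ additional work.
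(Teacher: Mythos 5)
This statement is not proved in the paper at all: it is imported verbatim (as Main Theorems 5 and 8) from the cited work of Charalampopoulos, Kociumaka and Wellnitz, so your sketch must be measured against that proof. Your overall dichotomy (approximately periodic pattern versus not) is indeed the architecture used there, but your non-periodic branch has a genuine flaw. You anchor on a single aperiodic prefix $P_1'$ of length $\Theta(|P_1|/k)$ and treat its exact occurrences as the only candidate anchors; however, a $k$-mismatch occurrence of $P_1$ need not contain an exact occurrence of $P_1'$, because the up to $k$ mismatches may fall inside $P_1'$ itself, so all occurrences whose prefix is corrupted are silently missed. This is exactly why the cited proof extracts $\Theta(k)$ \emph{disjoint} breaks (or, failing that, repetitive regions covering a constant fraction of the pattern): with $2k$ disjoint aperiodic fragments, every $k$-mismatch occurrence must match at least $k$ of them exactly, and the marking-plus-verification argument then yields $\Oh(k)$ candidates and $\Oh(k^2)$ \pillar operations. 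Relatedly, your periodicity test (exact period of one short prefix, then $\Oh(k)$ kangaroo jumps) is neither sound nor complete: $P_1$ can satisfy $\Ham(P_1,Q^\infty[0\dd |P_1|))\le 2k$ while that particular prefix absorbs many of the mismatches (so its exact period need not reveal $Q$), and conversely the prefix can be exactly periodic while $P_1$ is globally far from periodic. Also, "no short exact period" for $P_1'$ gives only that its occurrences are $\Omega(|P_1'|)$ apart, not that they form a single arithmetic progression; this last slip is harmless for the complexity, unlike the anchoring error.

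In the approximately periodic branch you assert, and explicitly defer, precisely the content of the cited theorems: that the occurrences decompose into $\Oh(k^2)$ arithmetic progressions with difference $|Q|$, and that one can locate the relevant approximately periodic portion of $T'$ (whose misperiod count outside that portion may be $\Theta(m)$, so one must argue that occurrences cannot straddle a stretch with too many misperiods) and enumerate the progressions within the $\Oh(k^2\log\log k)$ budget. These counting and algorithmic arguments are the heart of the result, not a routine scan; in the original proof the $\log\log k$ factor comes from integer sorting of $\Oh(k^2)$ events rather than predecessor structures, though that difference is cosmetic. As written, the proposal would need the break/repetitive-region machinery for the non-periodic case and the full periodic-case analysis before it constitutes a proof of the stated theorem.
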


We need to extend the $k$-mismatch occurrences of $P_1$ in $T'$ into \fragments of $T$ which approximately match some rotation of $P$.
Let us consider only rotations of $P$ of the form $YP_1X$, where $P=P_1XY$.
We define the set of circular $k$-mismatch occurrences $i$ of $P$ in $T$ that imply a $k$-mismatch standard occurrence of a prefix $P'$ of $P$ in a \fragment $T'=T[a \dd b]$ as follows:
    \[\ancycoc_k(T,T',P,P') = 
    \{\, i\,:\, T^{(i)} =_k Y P' X,\,P=P' X Y,\; i+|Y|-a\in \Occ_k(P',T')\}.\]
We say that such occurrences are \emph{anchored} in $(P',T')$. Our algorithms compute a superset of $\ancycoc_k(T,T',P,P_1)$ that may also contain other circular $k$-mismatch occurrences of $P$ in $T$ (some of the ones that contain a $k$-mismatch occurrence of $P_2$).

\subsection{Few \texorpdfstring{$k$-mismatch}{k-mismatch} Occurrences of the Sample}\label{subsec:few}

We first assume that $|
\Occ_k(P_1,T')|=\Oh(k)$.
Let us denote by \textsc{PairMatch}$_k(T, i, P, j)$ the set of all circular $k$-mismatch occurrences
of $P$ in $T$ such that position $i$ in $T$ is aligned with 
position $j$ in $P$; see \cref{fig:ex_pairmatch}. Formally,
\[\textsc{PairMatch}_k(T, i, P, j)= \{ p \in [i-m+1 \dd i]\,:\,T^{(p)} =_k \rot_x(P),\, i-p \equiv j-x \pmod{m}\}.\] 
In particular, \textsc{PairMatch}$_k(T, i, P, 0)$ is the set of circular $k$-mismatch occurrences of $P$ such that the leftmost position of $P$ is aligned with position $i$ in $T$.

\begin{figure}[htpb]
\centering
\begin{tikzpicture}
  \tikzstyle{red}=[color=red!90!black]
  \tikzstyle{darkred}=[color=red!50!black]
  \tikzstyle{blue}=[color=blue!50!black]
  \tikzstyle{black}=[color=black]
  \tikzstyle{green}=[color=green!50!black]
  \definecolor{turq}{RGB}{74,223,208}
  \definecolor{pink}{RGB}{254,193,203}

  \begin{scope}[xshift=-5cm,yshift=-2cm]
  \node at (-0.3,0) [left, above] {$P=$};
  \foreach \c/\s [count=\i from 0] in {a/black,b/black,c/black,a/black,b/black,a/black,c/black} {
    \node at (\i * 0.3 + 0.3, 0) [above, \s] {\tt \c};
    \node at (\i * 0.3 + 0.3, 0.1) [below] {\tiny \i};
  }
  \end{scope}
  
  \begin{scope}[yshift=-2cm]
  \node at (-0.3,0) [left, above] {$T=$};
  \foreach \c/\s [count=\i from 0] in {c/black,a/black,b/black,b/black,a/black,b/black,b/black,c/black,a/black,d/black,c/black,a/black,a/black} {
    \node at (\i * 0.3 + 0.3, 0) [above, \s] {\tt \c};
  }
  \foreach \ii [count=\i from 0] in {0,1,2,3,4,5,6,7,{\bf8},9,10,11,12} {
    \node at (\i * 0.3 + 0.3, 0.1) [below] {\tiny \ii};
  }
  \end{scope}

  \begin{scope}[yshift=-1cm, xshift=1.2cm]
  \node at (-0.7,-0.1) [above] {$\rot_3(P)=$};
  \draw [fill=green!30!white] (1.34, 0.07) rectangle (2.25, 0.38);
  \draw [fill=turq] (0.15, 0.07) rectangle (1.34, 0.38);
  \foreach \c/\s [count=\i from 0] in {a/blue,b/blue,a/red,c/blue,a/green,b/red,c/green} {
    \node at (\i * 0.3 + 0.3, 0) [above, \s] {\tt \c};
  }
  \foreach \ii [count=\i from 0] in {3,4,5,6,{\bf0},1,2} {
    \node at (\i * 0.3 + 0.3, 0.1) [below] {\tiny \ii};
  }
  \end{scope}

    \begin{scope}[yshift=0cm, xshift=1.5cm]
  \node at (-0.7,-0.1) [above] {$\rot_4(P)=$};
  \draw [fill=green!30!white] (1.04, 0.07) rectangle (2.25, 0.38);
  \draw [fill=turq] (0.15, 0.07) rectangle (1.04, 0.38);
  \foreach \c/\s [count=\i from 0] in {b/blue,a/red,c/blue,a/green,b/red,c/green,a/green} {
    \node at (\i * 0.3 + 0.3, 0) [above, \s] {\tt \c};
  }
  \foreach \ii [count=\i from 0] in {4,5,6,{\bf0},1,2,3} {
    \node at (\i * 0.3 + 0.3, 0.1) [below] {\tiny \ii};
  }
  \end{scope}

    \begin{scope}[yshift=-3cm, xshift=0.6cm]
  \node at (-0.7,-0.1) [above] {$\rot_1(P)=$};
  \draw [fill=green!30!white] (1.94, 0.07) rectangle (2.25, 0.38);
  \draw [fill=turq] (0.15, 0.07) rectangle (1.94, 0.38);
  \foreach \c/\s [count=\i from 0] in {b/blue,c/red,a/blue,b/blue,a/red,c/blue,a/green} {
    \node at (\i * 0.3 + 0.3, 0) [above, \s] {\tt \c};
  }
  \foreach \ii [count=\i from 0] in {1,2,3,4,5,6,{\bf0}} {
    \node at (\i * 0.3 + 0.3, 0.1) [below] {\tiny \ii};
  }
  \end{scope}

\end{tikzpicture}
\caption{$\textsc{PairMatch}_2(T, 8, P, 0)\,=\,\{2,4,5\}$.}\label{fig:ex_pairmatch}
\end{figure}

The following lemma was shown in~\cite{DBLP:journals/jcss/Charalampopoulos21} without explicitly mentioning the \pillar model.

\begin{lemma}[{\cite[see Lemma 10]{DBLP:journals/jcss/Charalampopoulos21}}]\label{lem:PairMatch}
For any given $k,i,j$, the set \textsc{PairMatch}$_k(T, i, P, j)$,
represented as a union of $\Oh(k)$ intervals, can be computed in $\Oh(k)$ time in the \pillar model.
\end{lemma}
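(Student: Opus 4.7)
The plan is to rephrase the problem as a sliding-window Hamming distance between $T$ and a virtual string $V$ built from $P$, and then exploit monotonicity to bring the output representation down to $\Oh(k)$ intervals. Concretely, for $s \in [i-m+1 \dd i+m-1]$ I set $V[s] := P[(s - i + j) \bmod m]$, so that $V$ is obtained by reading $P$ cyclically with $V[i]=P[j]$; then $V$ is a length-$(2m-1)$ fragment of $P\cdot P\cdot P$ and, in particular, a concatenation of $\Oh(1)$ fragments of $P$, so any $\LCP$ or $\LCP_R$ query between $T$ and $V$ can be simulated with $\Oh(1)$ \pillar operations. The alignment condition ``$T^{(p)}=\rot_x(P)$ with $i-p\equiv j-x\pmod m$'' translates to $T[p+s]=V[p+s]$ for every $s\in[0\dd m)$, so $T^{(p)}=_k\rot_x(P)$ becomes $h(p):=\Ham(T[p\dd p+m),\,V[p\dd p+m))\le k$, and computing $\textsc{PairMatch}_k(T,P,i,j)$ reduces to determining $\{p\in[i-m+1\dd i]:h(p)\le k\}$ and intersecting it with the range where $T^{(p)}$ is defined.

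Since $i$ lies inside every admissible window $[p\dd p+m)$, I decompose $h(p)=L(p)+R(p)+c$, where $c:=[T[i]\ne V[i]]$, $L(p):=|\{s\in[p\dd i-1]:T[s]\ne V[s]\}|$, and $R(p):=|\{s\in[i+1\dd p+m-1]:T[s]\ne V[s]\}|$. As $p$ grows, the domain of $L$ shrinks and the domain of $R$ grows, so $L$ is non-increasing and $R$ is non-decreasing in $p$, and the constraint $h(p)\le k$ becomes $L(p)+R(p)\le k':=k-c$. By monotonicity, $\{p:L(p)=\ell\}$ is an interval $I_\ell$ whose endpoints are pinned by the $\ell$-th and $(\ell+1)$-st mismatches between $T$ and $V$ strictly to the left of $i$, while $\{p:R(p)\le k'-\ell\}$ is a prefix of the $p$-axis whose right endpoint is pinned by the $(k'-\ell+1)$-st mismatch strictly to the right of $i$. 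Intersecting these for each $\ell\in\{0,1,\dots,k'\}$ (and with $[i-m+1\dd i]$ and the feasibility range of $T^{(p)}$) produces at most $k'+1=\Oh(k)$ pairwise disjoint intervals whose union equals $\textsc{PairMatch}_k(T,P,i,j)$; this already gives the claimed $\Oh(k)$-interval representation.

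To actually compute these intervals it is enough to locate the first $k'+1$ mismatches between $T$ and $V$ on each side of $i$, and for this I use the classical kangaroo scheme: starting from $i$, repeatedly invoke $\LCP_R$ (going left) or $\LCP$ (going right) between appropriate fragments of $T$ and $V$, each query hopping to the next mismatch. Each hop costs $\Oh(1)$ \pillar operations, with a few extra queries spent whenever the search crosses one of the $\Oh(1)$ seams inside $V$ where two $P$-fragments meet; in total this is $\Oh(k)$ \pillar operations, after which the $\Oh(k)$ intervals are assembled in $\Oh(k)$ additional time. The only point requiring care is exactly this bookkeeping around $V$'s implicit representation, to ensure that a kangaroo hop is not misled by the seams between adjacent $P$-fragments; beyond that, the proof is a monotonicity observation combined with a standard kangaroo jump, and I do not anticipate any further conceptual obstacle.
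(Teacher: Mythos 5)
Your argument is correct, and it is essentially the proof that the paper delegates to the citation ([CKP$^+$, JCSS'21, Lemma~10]): there, too, the aligned rotation is absorbed into a fixed cyclic copy of $P$ so that each text position is compared against one fixed letter, the Hamming cost of a window through the anchor is split into left and right mismatch counts whose monotone trade-off yields at most $k+1$ intervals, and the first $\Oh(k)$ mismatches on each side are located by kangaroo-style $\LCP$/$\LCP_R$ jumps, with $\Oh(1)$ extra queries per seam of the implicit periodic string. So your proposal fills in the omitted proof along the same lines as the original; no gaps.
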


The next lemma shows how to compute $\ancycoc_k(T,T',P,P_1)$ if $P_1$ has few occurrences in $T'$.

\begin{lemma}\label{lem:few}
Let $P'$ be a prefix of $P$ and $T'$ be a \fragment of $T$.
Given $A=\Occ_k(P',T')$ of size $\Oh(k)$, a superset of $\ancycoc_k(T,T',P,P')$, represented as a union of $\Oh(k^2)$ intervals, can be computed in $\Oh(k^2)$ time in the \pillar model. The superset is contained in $\cyc{}{k}(P,T)$, i.e., it consists only of circular $k$-mismatch occurrences of $P$ in~$T$.
\end{lemma}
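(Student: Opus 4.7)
The plan is to invoke the \textsc{PairMatch} primitive of \cref{lem:PairMatch} once per anchor in $A$. Specifically, for every $a' \in A$ set $q := a + a'$---the starting position in $T$ of the associated $k$-mismatch occurrence of $P_1$---and compute $\textsc{PairMatch}_k(T,P,q,0)$. The algorithm returns the union of the $|A|$ resulting families of intervals. Since $|A| = \Oh(k)$ and each call costs $\Oh(k)$ time in the \pillar model while emitting $\Oh(k)$ intervals, the overall cost is $\Oh(k^2)$ time and $\Oh(k^2)$ intervals, matching the bounds claimed.

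Correctness has two parts. The first---that every reported position is a circular $k$-mismatch occurrence of $P$ in $T$---follows directly from the definition of $\textsc{PairMatch}_k$. For the second---that the union covers $\ancycoc_k(P,T,P_1,T')$---I would fix $i \in \ancycoc_k(P,T,P_1,T')$ with accompanying decomposition $P = P_1XY$ satisfying $T^{(i)} =_k YP_1X$, and set $q := i + |Y|$. The anchor condition $q - a \in A$ guarantees that $q$ is one of the positions we queried. Writing the rotation as $\rot_x(P) = YP_1X$ with $x := (m - |Y|) \bmod m$, one has $\rot_x(P)[|Y|] = P[0]$, so the first letter of $P_1$ is aligned with position $q$ of $T$ in the matching. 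Consequently $q - i = |Y| \equiv -x \pmod m$, which is precisely the congruence $q - p \equiv 0 - x \pmod m$ required by the definition of $\textsc{PairMatch}_k(T,P,q,0)$ with $p = i$; combined with $i \in [q - m + 1 \dd q]$ (which holds because $|Y| \le m - |P_1| < m$), this places $i$ in the corresponding output.

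The whole argument is a short reduction to \cref{lem:PairMatch}, so I do not foresee any substantial obstacle. The one point requiring care is to confirm that the anchor $q$---which marks the start of the $P_1$-slice inside $T$---should be aligned in \textsc{PairMatch} with pattern position $j = 0$ (the start of $P_1$ inside the \emph{unrotated} $P$) rather than with $|Y|$ (the position of $P_1$ inside the rotation $\rot_x(P)$). This is exactly captured by the identity $\rot_x(P)[|Y|] = P[0]$, after which the bookkeeping and the bounds follow immediately.
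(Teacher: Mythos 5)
Your proposal is correct and follows essentially the same route as the paper: call $\textsc{PairMatch}_k(T,P,a+a',0)$ for each anchor $a'\in A$ and return the union, which sits between $\ancycoc_k(P,T,P_1,T')$ and $\cycoc{}{k}(P,T)$, with the $\Oh(k^2)$ bounds coming from $|A|=\Oh(k)$ calls at $\Oh(k)$ cost each. The alignment check via $\rot_x(P)=YP_1X$ with $x=(m-|Y|)\bmod m$ and the congruence $q-i=|Y|\equiv -x\pmod m$ is exactly the bookkeeping the paper leaves implicit.
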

\begin{proof}
Suppose that $T'=T[a \dd b]$. We then have
\[\ancycoc_k(T,T',P,P')\ \subseteq\ \bigcup_{i\in A}\, \textsc{PairMatch}_k(T, i+a, P, 0)\ \subseteq\ \cyc{}{k}(P,T).\]
By the assumption $|A| = \Oh(k)$, the result, represented as a union of $\Oh(k^2)$ intervals, can be computed in $\Oh(k^2)$ time in the \pillar model using the algorithm of \cref{lem:PairMatch}.
\end{proof}

\subsection{Many \texorpdfstring{$k$-mismatch}{k-mismatch} Occurrences of the Sample}
We henceforth assume that $|\Occ_k(P_1,T')| > 864k$. By~\cite{DBLP:conf/focs/Charalampopoulos20}, this yields
approximate periodicity in both $P_1$ and the portion of $T'$ spanned by the $k$-mismatch occurrences of $P_1$.
We show that, except for $\Oh(k^2)$ intervals of circular $k$-mismatch occurrences of $P$ that we can compute using $\cO(k)$ calls to $\textsc{PairMatch}_k$ as in \cref{subsec:few}, our problem reduces to matching length-$m$ substrings of an approximately periodic substring $U$ of $T$ and of an approximately periodic substring $V$ of $P^2$ with aligned approximate periodicity.
Testing a match of two length-$m$ substrings for $U$ and $V$ reduces to a test only involving positions breaking periodicity, called \emph{misperiods}, since the equality on other positions is guaranteed due to common approximate periodicities. The number of misperiods in~$U$ and $V$ is only $\Oh(k)$, so the complexity of our algorithms, in terms of \pillar operations, only depends on $k$.

By $S^p$ we denote the concatenation of $p$ copies of a string $S$. A string $Q$ is called primitive if there is no string $B$ and integer $p\ge 2$ such that $Q=B^p$. We say that a string $U$ is \emph{almost $Q$-periodic} if $U$ is at Hamming distance $\Oh(k)$ from a prefix of $Q^\infty$. We introduce the following problem; the notations are illustrated in \cref{fig:setting}.\footnote{\textsc{PeriodicSubMatch} was generalized to edit distance in \cite{stacs24}.}

\defproblem{\textsc{PeriodicSubMatch}$(U,V)$}{
A primitive string $Q$, integers $m$, $k$, $q$, $r$, $\alpha_U$, $\beta_U$, $\alpha_V$, $\beta_V$, and strings $U$, $V$ such that:
\begin{itemize}
\item $m \le |U|,|V| \le 2 m$, $q=|Q|$, $r \in [0 \dd q)$,
\item $U=T[\alpha_U \dd \beta_U]$ and $V\,=\,P^2[\alpha_V\dd\beta_V]$,
\item $U$ is almost $Q$-periodic and $V$ is almost $Q'$-periodic, where $Q'=\rot_r(Q)$.
\end{itemize}
}{
$\bigcup_{x \in [0 \dd |V|-m]}\{\,p\in \Occ_k(V^{(x)},U)\;:\;
p-x\equiv r \pmod{q}\,\}$.
}

\begin{figure}[htpb]
    \centering
\begin{tikzpicture}[xscale=0.25,yscale=0.5]
\fill[green!20!white,xshift=-1cm] (12.5,0) rectangle (35.5,1);
\draw (12,0) node[below] {$p$};
\foreach \x/\c in {0/a,1/b, 3/d,4/e,5/f,6/g,7/h, 8/a,9/b, 10/c, 12/e,13/f,14/g,15/h, 16/a,17/b,18/c,  20/e,21/f,22/g,23/h, 24/a,25/b,26/c,27/d,28/e,29/f,30/g,31/h,33/b,34/c,35/d}{
  \draw (\x,-0.1) node[above] {\small \vphantom{\texttt{g}}{\texttt{\c}}};
}
\foreach \x/\c in {2/t,11/h,19/z,32/x}{
  \draw (\x,-0.1) node[above,red] {\small \vphantom{\texttt{g}}{\texttt{\c}}};
}
\draw (-0.5,0) rectangle (35.5,1);
\draw (-0.5,0.5) node[left] {$U$};
\begin{scope}
\clip (-0.5,1) rectangle (35.5,3);
\draw[xshift=-0.5cm] (0,1) .. controls (3,2) and (5,2) .. (8,1);
\draw[xshift=7.5cm] (0,1) .. controls (3,2) and (5,2) .. (8,1);
\draw[xshift=15.5cm] (0,1) .. controls (3,2) and (5,2) .. (8,1);
\draw[xshift=23.5cm] (0,1) .. controls (3,2) and (5,2) .. node[above] {$Q$} (8,1);
\draw[xshift=31.5cm] (0,1) .. controls (3,2) and (5,2) .. (8,1);
\end{scope}

\begin{scope}[yshift=-5.2cm,xshift=0cm]
\draw (12,0) node[below] {$x$};
\fill[green!20!white,xshift=-1cm] (12.5,0) rectangle (35.5,1);
\foreach \x/\c in {6/g,7/h, 8/a,9/b,10/c,11/d,12/e,13/f,14/g,15/h, 16/a,17/b,18/c, 20/e,21/f,22/g,23/h, 24/a,25/b, 27/d,28/e,29/f,30/g,31/h,
  33/b,34/c,35/d, 37/f
}{
  \draw (\x,-0.1) node[above] {\small \vphantom{\texttt{g}}{\texttt{\c}}};
}
\foreach \x/\c in {19/z,26/y,32/w,36/s}{
  \draw (\x,-0.1) node[above,red] {\small \vphantom{\texttt{g}}{\texttt{\c}}};
}
\draw (5.5,0) rectangle (37.5,1);
\draw (5.5,0.5) node[left] {$V$};
\begin{scope}
\clip (5.5,1) rectangle (37.5,3);
\draw[xshift=-0.5cm] (0,1) .. controls (3,2) and (5,2) .. (8,1);
\draw[xshift=7.5cm] (0,1) .. controls (3,2) and (5,2) .. (8,1);
\draw[xshift=15.5cm] (0,1) .. controls (3,2) and (5,2) .. (8,1);
\draw[xshift=23.5cm] (0,1) .. controls (3,2) and (5,2) .. (8,1);
\draw[xshift=31.5cm] (0,1) .. controls (3,2) and (5,2) .. (8,1);
\end{scope}
\begin{scope}[yshift=-1cm]
\draw[densely dashed] (15.5,-1) rectangle (40.5,0);
\draw (27.5,-0.5) node {$P$};
\draw[xshift=-24cm,densely dashed] (39.5,-1) -- (14.5,-1) -- (14.5,0) -- (39.5,0);
\draw[xshift=-27cm] (27.5,-0.5) node {$P$};
\draw (6,-1) node[blue!50!black,below] {$\alpha_V$};
\draw[densely dotted] (5.5,-1) -- (5.5,1);
\draw (37,-1) node[blue!50!black,below] {$\beta_V$};
\draw[densely dotted] (37.5,-1) -- (37.5,1);
\end{scope}

\draw[xshift=-8cm,latex-latex] (7.5,2) -- node[above] {\small $p-x=r$} (13.5,2);
\draw (5.5,3.0) -- (5.5,3.2) -- node[above] {$Q'$} (13.5,3.2) -- (13.5,3.0);
\draw[densely dotted] (5.5,3.0) -- (5.5,1)  (13.5,3.0) -- (13.5,1); 

\end{scope}

\end{tikzpicture}

    \caption{\textsc{PeriodicSubMatch} with $m=24$, $k=2$, $q=8$, $r=5$, $\alpha_V=14$, $\beta_V=45$. Mismatches with respect to the approximate periodicity are marked in red. Green rectangles show that $U^{(p)}=_2 V^{(x)}$. In this example we have $p-x=r$; in general, $p-x \equiv r \pmod{q}$.
    }\label{fig:setting}
\end{figure}

Our goal is to reduce $k$-Mismatch CPM to \textsc{PeriodicSubMatch}. To this end, we use the notion of repetitive regions from~\cite{DBLP:conf/focs/Charalampopoulos20}.

\begin{definition}\label{def:repreg}
A \emph{$k$-repetitive region} in a string $S$ of length $m$ is a substring $R$ of $S$ of length $|R| \ge 3m/8$ for which there exists a primitive string $Q$ such that 
\[|Q| \le m/(128k)\ \text{and}\ \Ham(R,Q^{\infty}[0 \dd |R|)) = \lceil 8k|R|/m \rceil.\]
\end{definition}

\noindent
The following lemma is a simplified version of a lemma from~\cite{DBLP:conf/focs/Charalampopoulos20} with one repetitive region.

\begin{lemma}[see~{\cite[Lemma 3.11]{DBLP:conf/focs/Charalampopoulos20}}]\label{repreg}
Given a pattern $P$ of length $m$, a text $T$ of length $n \le \frac32 m$, and a positive integer threshold $k \le m$, if $P$ contains a $k$-repetitive region, then $|\Occ_k(P,T)| = \Oh(k)$.
\end{lemma}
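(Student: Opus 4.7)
The plan is to argue by contradiction: assume $P$ contains a $k$-repetitive region $R = P[\alpha \dd \alpha + |R|)$ with primitive period $Q$ of length $q \le m/(128k)$ and exact mismatch count $r := \Ham(R, Q^\infty[0\dd |R|)) = \lceil 8k|R|/m \rceil$, yet $|\Occ_k(P, T)| > 864k$, and derive a contradiction.

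\textbf{Step 1: lift approximate periodicity from $R$ to $T$.} For each occurrence $i \in \Occ_k(P,T)$, the aligned fragment $R_i := T[i+\alpha \dd i+\alpha+|R|)$ satisfies $\Ham(R_i,R)\le k$, so by the triangle inequality
\[ \Ham\bigl(R_i,\,Q^\infty[0\dd |R|)\bigr) \le r + k \le 9k, \]
i.e.\ every $R_i$ is $9k$-close to the locally-aligned copy of $Q^\infty$.

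\textbf{Step 2: pigeonhole for a close pair.} The occurrences lie in $[0\dd n-m]\subseteq[0\dd m/2]$, so by averaging over the $>864k$ gaps some pair $i<i'$ satisfies $d := i'-i \le \tfrac{m/2}{864k} = \tfrac{m}{1728k}$. Recalling $q\le m/(128k)$ gives $d\le q/13<q$ whenever $q\ge 2$; for $q=1$ the same bound forces $d<1$, hence $d=0$, contradicting $i\neq i'$, except in the borderline case $m\ge 1728k$ that I defer to Step~4.

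\textbf{Step 3: Fine--Wilf transfer in the overlap.} In the overlap window $T[i'+\alpha \dd i+\alpha+|R|)$ of length $L=|R|-d\ge 3m/8 - m/(1728k)$, both $R_i$ and $R_{i'}$ are $9k$-close to their respective locally-aligned copies of $Q^\infty$; consequently at least $L-18k$ positions of the overlap simultaneously agree with both alignments. Since $L/q\ge (3m/8)\cdot(128k/m)/O(1)\ge 48k > 18k$, every residue class modulo $q$ contains such a position, which forces $Q[j]=Q[(j+d)\bmod q]$ for all $j\in[0\dd q)$. Primitivity of $Q$ (assuming $q\ge 2$) then yields $d\equiv 0\pmod q$; together with $d<q$ from Step~2 we conclude $d=0$, the desired contradiction.

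\textbf{Step 4: the residual case $q=1$ (main obstacle).} When $Q$ is a single letter~$c$, the rotational argument of Step~3 is vacuous. A direct double-counting argument takes over: letting $M_R:=\{j:R[j]\ne c\}$ (with $|M_R|=r\in[3k\dd 8k]$) and $M_i:=\{j:T[i+\alpha+j]\ne c\}$, the bound $\Ham(R_i,R)\le k$ implies $|M_i\triangle M_R|\le k$. For a close pair $i<i'$, matching the non-$c$ positions in the overlap forces $|(M_R\cap[d,|R|))\triangle((M_R+d)\cap[d,|R|))|\le 2k$, i.e.\ $M_R$ has an approximate period $d$ with error $2k$. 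Iterating this over the many close pairs supplied by pigeonhole, then summing the non-$c$ multiplicities $\sum_i|M_i|\in[N(r-k),N(r+k)]$ against the maximum pointwise coverage in the span of length $\le 3m/2$, forces more than $r$ non-$c$ positions to be concentrated in $R$, contradicting $|M_R|=r$. The delicate part is that the constants $3m/8$, $m/(128k)$, $\lceil 8k|R|/m\rceil$, and $864k$ are tightly calibrated so that Step~3's Fine--Wilf slack is positive and the counting argument of Step~4 closes with room to spare; the equality (rather than $\le$) in the definition of a $k$-repetitive region is essential, since it gives both the lower bound $r\ge 3k$ used in the counting and the upper bound $r\le 8k$ used in the triangle inequality of Step~1.
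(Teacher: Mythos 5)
There is a genuine gap, and it sits at the heart of your argument. (For context: the paper does not prove this lemma at all --- it imports it directly as~\cite[Lemma 3.11]{DBLP:conf/focs/Charalampopoulos20}, whose proof is a global charging/counting argument over \emph{all} occurrences, not a two-occurrence argument.) In Step~2 you deduce $d\le q/13<q$ from $q\le m/(128k)$, but this uses the \emph{upper} bound on $q$ as if it were a lower bound: pigeonhole gives $d\le m/(1728k)$, and since $Q$ may be much shorter than $m/(128k)$ (e.g.\ $q=2$ while $m/(1728k)$ is huge), nothing forces $d<q$. Without $d<q$, your Step~3 Fine--Wilf argument (which is otherwise fine, since $L/q\ge 48k>18k$) only yields $d\equiv 0\pmod q$ for the closest pair --- and that is not a contradiction. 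Indeed, distinct $k$-mismatch occurrences whose positions differ by multiples of $|Q|$ are entirely possible in this setting (compare \cref{mt4}, where occurrences form arithmetic progressions with difference $|Q|$), so no argument that extracts a single close pair and concludes the pair must coincide can prove the lemma; the bound $864k$ has to come from counting how many occurrences the text's deviations from $Q$-periodicity can support, using that each occurrence forces at least $\lceil 8k|R|/m\rceil-k\ge 2k$ mismatches between the aligned text window and the periodic extension.

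Two further points. The case split is also miscalibrated: you treat $m\ge 1728k$ as a ``borderline'' case deferred to Step~4, but it is the main case (when $m<1728k$ the statement is nearly trivial since there are only $n-m+1\le m/2+1$ candidate positions), and Step~4 itself (``iterating this over the many close pairs \dots forces more than $r$ non-$c$ positions'') is a sketch rather than a proof --- no quantitative bookkeeping is given, and the same $d\equiv 0\pmod q$ obstruction reappears there in disguise. So, as written, the proposal does not establish the lemma; to fix it you would need to replace the single-pair contradiction scheme by a charging argument in the spirit of the cited \cite[Lemma 3.11]{DBLP:conf/focs/Charalampopoulos20}.
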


Intuitively, in our main lemma in this section (\cref{lem:toPSM}), 
we show that if $P_1$ has many $k$-mismatch occurrences in $T'$, then each circular $k$-mismatch occurrence of $P$ in~$T$ that is an element of $\ancycoc_k(T,T',P,P_1)$ is: either (a) computed in an instance of \textsc{PeriodicSubMatch}; or (b) implies a $k$-mismatch occurrence of one of at most two $k$-repetitive regions of $P_2P_1P_2$.
The occurrences of the second type can be computed using $\cO(k)$ calls to $\textsc{PairMatch}_k$ by viewing $k$-repetitive regions as samples and applying \cref{repreg} to bound the number of $k$-mismatch occurrences.

In the proof of \cref{lem:toPSM}, we use the following theorem from~\cite{DBLP:conf/focs/Charalampopoulos20}; part \eqref{it1} is a consequence of~\cite[Theorem 3.1]{DBLP:conf/focs/Charalampopoulos20} (existence of $Q$) and~\cite[Lemmas 3.8, 3.11, 3.14, and 4.4]{DBLP:conf/focs/Charalampopoulos20} (computation of $Q$), whereas the remaining parts are due to~\cite[Main Theorem 5]{DBLP:conf/focs/Charalampopoulos20}.
\begin{theorem}[\cite{DBLP:conf/focs/Charalampopoulos20}]\label{mt4}
  Assume that we are given a pattern $P$ of length $m$, a text $T$ of length $n \le \frac32 m$,
  and a positive integer threshold $k \le m$. If $|\Occ_k(P,T)|>864 k$ and $T$ starts and ends with $k$-mismatch occurrences of $P$, that is, $0,n-m \in \Occ_k(P,T)$, then:
  \begin{enumerate}
  \item\label{it1} there is a primitive substring $Q$ of $P$ satisfying $|Q| \le m/(128k)$ and $\Ham(P, Q ^\infty [0\dd |P|)) < 2k$, and such a substring, if it exists,
  can be computed in $\Oh(k)$ time in the \pillar model;
  \item each element of $\Occ_k(P,T)$ is a multiple of $|Q|$;
  \item $\Ham(T,Q^\infty [0\dd n)) \le 6k$;
  \item $\Occ_k(P,T)$ can be decomposed into $\Oh(k^2)$ arithmetic progressions with difference~$|Q|$.
  \end{enumerate}
\end{theorem}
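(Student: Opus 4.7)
My approach is to deduce Part~(1) first and then use the primitive word $Q$ it furnishes as a rigid skeleton that forces Parts~(2)--(4). The intuition is that more than $864k$ occurrences inside a window of length $\le m/2$ leaves no room for irregularity in either $P$ or $T$.

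For Part~(1), pigeonhole on $\Occ_k(P,T)\subseteq[0\dd n-m]\subseteq[0\dd m/2]$ yields two occurrences $p<p'$ with $d:=p'-p\le m/(1728k)$. Intersecting the two $k$-mismatch alignments of $P$ against $T$ on their overlap shows $\Ham(P[0\dd m-d),P[d\dd m))\le 2k$, so $P$ has a small approximate period $d$. Turning this approximate period into a primitive $Q$ with the \emph{sharp} guarantees $|Q|\le m/(128k)$ and $\Ham(P,Q^\infty[0\dd m))<2k$ is the heart of the argument; the naive choice ``$Q$ = primitive root of $P[0\dd d)$'' is too lossy, and one instead has to round the approximate period to a true period shared by the bulk of the positions, as in \cite[Lemmas~3.8, 3.11, 3.14, 4.4]{DBLP:conf/focs/Charalampopoulos20}. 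On the algorithmic side, locating the close pair $(p,p')$ and extracting $Q$ is done by a sorted sweep together with a handful of \LCE/\Extract queries, giving the claimed $\cO(k)$ \pillar-operation budget.

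With $Q$ in hand, Parts~(2) and~(3) follow quickly. For Part~(2), suppose towards contradiction that some $p\in\Occ_k(P,T)$ is not a multiple of $|Q|$. Triangle-inequality composition of $\Ham(T^{(p)},P)\le k$, $\Ham(T^{(0)},P)\le k$ and $\Ham(P,Q^\infty[0\dd m))<2k$ (used twice) shows that, on the overlap window $[p\dd m)$ of length $\ge m/2$, the string $Q^\infty$ agrees with its own shift by $p$ on all but $\le 6k$ positions; but primitivity of $Q$ forces at least one mismatch per period-block of length $|Q|\le m/(128k)$, hence at least $m/(2|Q|)\ge 64k$ mismatches, a contradiction. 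Part~(3) then follows because $n-m\in\Occ_k(P,T)$ is now known to be a multiple of $|Q|$, so $Q^\infty[n-m\dd n)=Q^\infty[0\dd m)$; each of the length-$m$ windows $T[0\dd m)$ and $T[n-m\dd n)$ is within $3k$ of the corresponding prefix of $Q^\infty$, and these two windows cover $T$ thanks to $n\le\tfrac32 m$, giving $\Ham(T,Q^\infty[0\dd n))\le 6k$.

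For Part~(4), I would encode $P$ and $T$ by their \emph{misperiod sets} $M_P=\{i\in[0\dd m):P[i]\ne Q^\infty[i]\}$ and $M_T=\{i\in[0\dd n):T[i]\ne Q^\infty[i]\}$, which have sizes $<2k$ and $\le 6k$ respectively by Parts~(1) and~(3). Since every candidate $p$ is a multiple of $|Q|$, the value $\Ham(T^{(p)},P)$ is determined purely by how the translated set $M_T-p$ interacts with $M_P$, and as a function of $p$ it is piecewise constant, with breakpoints only at window entry/exit events ($\cO(k)$ of them) or at values $p=i'-i$ for pairs $(i,i')\in M_P\times M_T$ with $i'-i$ a multiple of $|Q|$ ($\cO(k^2)$ of them). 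Hence the positions $p$ satisfying the threshold decompose into $\cO(k^2)$ arithmetic progressions of common difference $|Q|$, produced by a single sorted sweep over the breakpoint list. The main obstacle, throughout, is Part~(1): extracting a \emph{true} primitive $Q$ that simultaneously achieves the sharp bounds on $|Q|$ and on the number of mismatches within only $\cO(k)$ \pillar operations goes beyond the elementary pigeonhole sketched above and relies on the refined combinatorics of \cite{DBLP:conf/focs/Charalampopoulos20}.
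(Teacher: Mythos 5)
The paper does not prove this theorem internally at all: it is imported wholesale from~\cite{DBLP:conf/focs/Charalampopoulos20}, with item~(1) attributed to Theorem~3.1 and Lemmas~3.8, 3.11, 3.14, 4.4 of that work and items~(2)--(4) to its Main Theorem~5. Your proposal therefore does something genuinely different: it re-derives (2)--(4) as short self-contained consequences of (1), and these derivations are correct --- the triangle-inequality argument comparing $Q^\infty$ with its shift on the overlap of the occurrence at $0$ and the occurrence at $p$ (length $\ge m/2$, giving $\ge 64k$ forced mismatches versus a budget of $6k$) is exactly the right synchronization argument for (2); covering $T$ by the two extreme occurrences (using $n\le\frac32 m$ and the fact that $n-m$ is a multiple of $|Q|$) gives (3); and the misperiod-set bookkeeping with $\Oh(k)$ window events plus $\Oh(k^2)$ alignment events gives the $\Oh(k^2)$ progressions of (4). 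These mirror the arguments behind~\cite{DBLP:conf/focs/Charalampopoulos20} but are presented here more elementarily, which is a genuine plus. Two caveats: first, you correctly identify that item~(1) --- upgrading the pigeonhole approximate period $d\le m/(1728k)$ to a \emph{primitive} $Q$ with the sharp bounds $|Q|\le m/(128k)$, $\Ham(P,Q^\infty[0\dd m))<2k$, computable with $\Oh(k)$ \pillar operations --- is the crux, and you delegate it to the same external lemmas the paper cites, so your proposal is no more (and no less) self-contained than the paper on that point; second, your proofs of (2) and (4) use primitivity of $Q$, which the theorem statement as written does not assert (a non-primitive $Q$ would break (2)), though the cited construction does guarantee it and the paper's later use of the theorem assumes it --- worth flagging explicitly if you keep your derivation.
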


We also use the next lemma that adapts the kangaroo jumping technique~\cite{DBLP:journals/tcs/GalilG87,DBLP:journals/tcs/LandauV86}.

\begin{lemma}[{\cite[Lemma 4.1]{DBLP:conf/focs/Charalampopoulos20}}]\label{lem:misoph}
    Let $S$ and $Q$ denote strings.
    There is a generator {\tt MismGenerator($S$, $Q$)} ({\tt MismGenerator$^R$($S$,~$Q$)}) that, in the $k$-th call, returns in $\Oh(1)$ time in the \pillar model the length of the longest prefix (suffix, respectively) $S'$ of $S$ such that $\Ham(S', W) \le k$, where $W$ denotes the prefix (suffix, respectively) $W$ of~$Q^{|S|}$ of length $|S'|$.
\end{lemma}

\begin{lemma}\label{lem:toPSM}
The $k$-Mismatch CPM problem can be reduced using $\Oh((n/m)k^2\log \log k)$ time plus $\Oh((n/m)k^2)$  \pillar operations to $\Oh(n/m)$ instances of the \textsc{PeriodicSubMatch} problem. The output to $k$-Mismatch CPM is a union of the outputs of the \textsc{PeriodicSubMatch} instances and $\Oh((n/m)k^2)$ intervals.
Each circular $k$-mismatch occurrence can be returned by $\Oh(1)$ instances of \textsc{PeriodicSubMatch} and $\Oh(1)$ intervals.
\end{lemma}
\begin{proof}
By symmetry, it suffices to describe the case where $P_1$ is the sample.
We use the standard trick to cover $T$ by a collection of \fragments $T'$, each of length $\floor{\frac34 m}$, starting at positions divisible by $\floor{\frac14 m}$.
We note that each circular $k$-mismatch occurrence of $P$ in $T$ can be present in $\Oh(1)$ sets $\ancycoc_k(T,T',P,P_1)$.
For each of the $\Oh(n/m)$ substrings $T'$, we obtain potentially an instance of \textsc{PeriodicSubMatch} and a set of $\cO(k^2)$ intervals, which we can sort in $\cO(k^2 \log\log k)$ time, so that we can compute their union represented as a family of pairwise disjoint intervals in $\Oh((n/m)k^2\log \log k)$ total time.
In particular, for each substring $T'$, we compute (a superset of) $\ancycoc_k(T,T',P,P_1)$.

We use the algorithm of \cref{mt8} to compute a representation of the set $A=\Occ(P_1,T')$ using $\Oh(k^2 \log \log k)$ time plus $\Oh(k^2)$ \pillar operations. If $|A| \le 864 k$, we use the algorithm of \cref{lem:few} that outputs $\Oh(k^2)$ intervals of circular $k$-mismatch occurrences of $P$ in $\Oh(k^2)$ time in the \pillar model. Otherwise, we apply \cref{mt4} to $P_1$ and the part $T''$ of $T'$ spanned by the $k$-mismatch occurrences in $\Occ_k(P_1,T')$, obtaining a short primitive string $Q$ of length $q=|Q|$ such that $\Ham(P, Q ^\infty [0\dd |P|)) < 2k$.

\paragraph{Computing $V$}
The substring $V$ is computed using function \emph{ComputeV}.
Consider the occurrence of~$P_1$ at position $|P_2|$ of $P_2P_1P_2$.
Let us extend this substring to the right, trying to accumulate enough mismatches with a substring of $Q^\infty$ in order to reach the threshold specified in \cref{def:repreg}, which is $\Theta(k)$.
The subsequent mismatches are computed using \cref{lem:misoph} in $\Oh(k)$ total time in the \pillar model.
If we manage to accumulate enough mismatches, we call the resulting $k$-repetitive region~$R_{\rright}$.
We perform the same process by extending this occurrence of $P_1$ to the left, possibly obtaining a $k$-repetitive region $R_{\lleft}$.
Then, we let $V$ be the shortest substring $(P_2P_1P_2)[v \dd v')=P^2[\alpha_V \dd \beta_V]$ of~$P^2$ that spans both $R_{\lleft}$ (or $P_2P_1$ if $R_{\lleft}$ does not exist) and~$R_{\rright}$ (or $P_1P_2$ if $R_{\rright}$ does not exist).

\begin{center}
\fbox{
\begin{minipage}{13cm}
\vspace*{0.1cm}
\hspace*{-0.2cm}  {\bf function} \emph{ComputeV}

\smallskip
Compute $Q$ (\cref{mt4});

Let $Z=W= (P_2P_1P_2)[|P_2| \dd m)$;

\smallskip
Use \texttt{MismGenerator} to extend $W$ to the right until at least one 
of the following two conditions is satisfied: 
\begin{enumerate}[(a)]
    \item
we reach the end of $P_2P_1P_2$;
    \item
$W$ is a repetitive region w.r.t.~$Q$.
\end{enumerate}

If $W \ne P_1P_2$ then $R_{\rright}:=W$;

\smallskip
Use \texttt{MismGenerator}$^R$ to extend $Z$ to the left until any of \\
the following two conditions is satisfied:
\begin{enumerate}[(a)]
\item we reach the beginning of $P_2P_1P_2$; 
\item $Z^R$ is a repetitive region w.r.t.~the rotation of $Q^R$ that is a prefix of $P_1^R$.
\end{enumerate}

If $Z \ne P_2P_1$ then $R_{\lleft}:=Z$;

$V:=$\,the shortest substring of $P_2P_1P_2$ containing $Z$ and $W$;

\smallskip
{\bf return} $V$, $R_{\lleft}$, $R_{\rright}$;
\vspace*{0.1cm}
\end{minipage}
}
\end{center}

Let us observe that $V$ is at distance at most $2\cdot \lceil 8km/m \rceil =16k$ from a prefix of $(\rot_{r(P)}(Q))^\infty$, where $r(P) = (v - |P_2|) \bmod{q}$;
this follows by the definition of $k$-repetitive regions and the fact that $|R_{\rright}|,|R_{\lleft}|\leq m$. Moreover, obviously, $|V| \le 2m$. (Actually, $|V| \le \ceil{\frac32m}$.)

The rotations of $P$ that contain $P_1$ are in one-to-one correspondence with the length-$m$ substrings of $P_2P_1P_2$.
Each such substring contains $R_{\lleft}$, contains $R_{\rright}$, or is contained in $V$.
We now show that we can efficiently compute circular $k$-mismatch occurrences of $P$ that imply $k$-mismatch occurrences of either $R_{\lleft}$ or $R_{\rright}$ (if they exist) using the tools that were developed in \cref{subsec:few}.
We focus on $R_{\rright}$ as $R_{\lleft}$ can be handled symmetrically.
If $T'=T[a \dd b]$, then denote $\hat{T}:=T[a \dd \min(n-1,b+|R_{\rright}|-|P_1|)]$.
Due to \cref{repreg}, $R_{\rright}$ has $\cO(k)$ $k$-mismatch occurrences in~$\hat{T}$,
and they can be found in $\Oh(k^2 \log \log k)$ time plus $\Oh(k^2)$ \pillar operations using \cref{mt8}.
In the end, we compute $\ancycoc_k(T,\hat{T},P,R_{\rright})$ using \cref{lem:few} in $\Oh(k^2)$ time in the \pillar model. The output is a union of $\Oh(k^2)$ intervals.

We are left with computing elements of $\ancycoc_k(T,T',P,P_1)$ corresponding to $k$-mismatch occurrences of length-$m$ substrings of $V$ in $T$ in the case where $|V|\geq m$.

\paragraph{Computing $U$}
The substring $U$ is computed using function \emph{ComputeU}.
Let us take the considered \fragment $T''=T[y \dd y')$ in $T$, which by \cref{mt4} is at distance at most $6k$ from a prefix of~$Q^\infty$, and extend it to the right until either of the following three conditions is satisfied: (a) we reach the end of~$T$; (b) we have appended $\ceil{m/2}$ letters; or (c) the resulting substring has $18k$ additional mismatches with the same-length prefix of $Q^\infty$.
Symmetrically, we extend $T''$ to the left.
We set the obtained substring $T[u \dd u')$ of $T$ to be $U$.
We observe that $|U|\le 2m$ (since $\floor{3m/4}\ge |T'| \ge |T''| > 864k$ implies $m > 1152$ and $|U|\leq |T''|+2 \ceil{m/2} \leq 7m/4+2$). By \cref{mt4}, $U$ is at distance at most $6k+2\cdot 18k = 42k$ from a prefix of $(\rot_{r(T)}(Q))^\infty$, where $r(T) = (u - a) \bmod{q}$. If $|U|<m$, we do not construct the instance of \textsc{PeriodicSubMatch}.

\begin{center}
\fbox{
\begin{minipage}{13cm}
\vspace*{0.1cm}
\hspace*{-0.2cm}  {\bf function} \emph{ComputeU}

\smallskip
Initially $U=T''$;

\smallskip
Use \texttt{MismGenerator} to extend $U$ to the right until at least one of the following three conditions is satisfied: 
\begin{enumerate}[(a)]
    \item we reach the end of $T$; 
    \item we have appended $|P_2|$ letters; or 
    \item the appended \fragment of $T$ has $18k$ additional mismatches with the same-length prefix of $Q^\infty$;
\end{enumerate}

\smallskip
Use \texttt{MismGenerator}$^R$ to extend $U$ to the left until any of the following three conditions is satisfied: 
\begin{enumerate}[(a)]
    \item we reach the beginning of $T$; 
    \item we have prepended $|P_2|$ letters; or 
    \item the prepended \fragment of $T$ has $18k$ additional mismatches with the same-length suffix of $Q^{|T|}$;
\end{enumerate}

\smallskip
{\bf return} $U$;
\vspace*{0.1cm}
\end{minipage}
}
\end{center}

In the call to \textsc{PeriodicSubMatch}, we set $Q := \rot_{r(T)}(Q)$ and
$r := (r(P) - r(T)) \bmod{q}$.
Now, since $Q$ is primitive, it does not match any of its non-trivial cyclic rotations (this follows by Fine and Wilf's periodicity lemma, \cite{fine1965uniqueness}).
By \cref{mt4}, we have at least $128k-42k-1$ (resp.~$128k-16k-1$) exact occurrences of $Q$ in any length-$m$ \fragment of $U$ (resp.~$V$), so the periodicities must be synchronized in any circular $k$-mismatch occurrence.
Hence, for any $p\in \ancycoc_k(T,T',P,P_1)$ that corresponds to
$U^{(p)} =_k V^{(x)}$
we must have $p + r(T)\equiv x + r(P)  \pmod{q}$,
and therefore $p - x \equiv r(P) - r(T) \equiv r \pmod{q}$.

\paragraph{Correctness}
It suffices to show that there is no position $p\in \ancycoc_k(T,T',P,P_1)$ such that~$T^{(p)}$ is at distance at most $k$ from a substring of $V$ yet $[p \dd p+m) \not\subseteq [u \dd u')$.
Suppose that this is the case towards a contradiction, and assume without loss of generality that $p<u$ (the other case is symmetric); cf. \cref{fig:xyxy}.
We notice that $p<u$ is possible only if we stopped extending to the left when computing $U$ because we accumulated enough mismatches.
Further, let the implied $k$-mismatch occurrence of $P_1$ start at some position $t$ of $T$,
let $x$ be an integer such that the considered occurrence of sample $P_1$ in $V$ is at position $x+(t-p)$ (i.e., $\Ham(V^{(x)}, T^{(p)})\le k$), and let $F$ be the length-$(t-p)$ suffix of $Q^{|T|}$, i.e., $F=Q^{|T|}[|T| \cdot |Q|-(t-p)\dd |T| \cdot |Q|)$.
Then, via the triangle inequality, we have
\begin{multline*}
k \geq \Ham(V^{(x)}, T^{(p)})
\geq \Ham(V^{(x)}[0\dd t-p),T[p\dd t))\\
\geq  \Ham(T[p\dd t), F) - \Ham(F, V^{(x)}[0\dd t-p))
\geq 18k -16k
> k;
\end{multline*}
see \cref{fig:xyxy}. We obtain a contradiction, thus completing the reduction.
\end{proof}

\begin{figure}[htpb]
\centering
\begin{tikzpicture}[xscale=0.65,yscale=0.42]
    \draw[thick,xshift=1.5cm] (-3,0) rectangle node {$T$} (12,1);
    \draw[thick,xshift=1.5cm] (0.2,1) rectangle node {$U$} (10.5,2);
    \draw[thick] (3.5,2) rectangle node {$T''$} (10,3);
    \begin{scope}
    \draw (1.7,3) rectangle (3.5,4);
    \foreach \x in {2,2.5,3}{\draw (\x,3) -- (\x,4);}
    \draw (3.25,4) node[above] {$Q$};
    \draw[thick] (3,3) rectangle (3.5,4);
    \foreach \x in {1.8,2,2.3,2.4,2.7,2.9,3.2,3.3}{\draw[thick,blue] (\x,2) -- (\x,3);}
    \end{scope}
    \draw (1.7,2.5) node[left] {\textcolor{violet}{$\ge 18k$ mismatches}};
    \draw[ultra thick,orange] (1,0) rectangle (4,1);

    \begin{scope}[yshift=-7cm]
    \draw[thick,brown!70!black] (-1,0) rectangle (14,1);
    \draw[thick,green!20!black] (4,0) -- (4,1)  (9,0) -- (9,1);
    \draw[] (1.5,0.5) node {$P_2$};
    \draw[] (6.5,0.5) node {$P_1$};
    \draw[] (11.5,0.5) node {$P_2$};
    \draw[thick] (0.3,1) rectangle node {$V$} (11.5,2);
    \draw[densely dotted] (1,1) -- (1,8);
    \draw (1,7.5) node[right] {$p$};
    \draw (1,1.5) node[right] {$x$};
    \draw[densely dotted] (4,1) -- (4,8);
    \draw (4,7.5) node[right] {$t$};

    \begin{scope}
    \draw (0.3,3) rectangle (4,4);
    \foreach \x in {0.5,1,...,3.5}{\draw (\x,3) -- (\x,4);}
    \draw (3.75,4) node[above] {$Q$};
    \draw[thick] (3.5,3) rectangle (4,4);
    \foreach \x in {0.4,1.3,1.5,1.6,2.2,2.9,3.7}{\draw[thick,blue] (\x,2) -- (\x,3);}
    \end{scope}
    \draw (0.3,2.5) node[left] {\textcolor{violet}{$\le 16k$ mismatches}};
    \draw[ultra thick,green!70!black] (1,1) rectangle (4,2);
    \end{scope}
\end{tikzpicture}
\caption{Illustration of the correctness proof in \cref{lem:toPSM}. The substrings $T[p\dd t)$ and $V^{(x)}[0\dd t-p)$ denoted by orange and green rectangles, respectively, cannot match with up to $k$ mismatches, because they have at least $18k$ and at most $16k$ misperiods, respectively. Misperiods (that is, mismatches with $Q^\infty$) are drawn schematically as red vertical lines.}\label{fig:xyxy}
\end{figure}

\subsection{The Reporting Version of \texorpdfstring{$k$-Mismatch}{k-Mismatch} CPM}
In this section, we give a solution to \textsc{PeriodicSubMatch}.
Let us recall the notion of misperiods that was introduced in~\cite{DBLP:journals/jcss/Charalampopoulos21}.

\begin{definition}\label{def:misper}
A position $a$ in $S$ is a \emph{misperiod} with respect to a substring $Q$ of $S$ if $S[a] \ne Q^\infty[a]$. We denote the set of misperiods by $\Misp(S,Q)$.
\end{definition}

With \cref{lem:misoph}, we can compute the sets $I=\Misp(U,Q)$ and $J=\Misp(V,\rot_r(Q))$ in $\Oh(k)$ time in the \pillar model.
We define
\[\Mispers(i,j)\,=\,|I \cap [i \dd i+m)|+|J \cap [j \dd j+m)|.\]

The following problem is a simpler version of \textsc{PeriodicSubMatch} that was considered in~\cite{DBLP:journals/jcss/Charalampopoulos21}.\footnote{Actually,~\cite{DBLP:journals/jcss/Charalampopoulos21} considered a slightly more restricted problem which required that no two misperiods in $U^{(p)}$ and $V^{(x)}$ are aligned and computed a superset of its solution that corresponds exactly to the statement below.}

\defproblem{\textsc{PeriodicPeriodicMatch}$(U,V)$}{Same as in \textsc{PeriodicSubMatch}.}{
The set of $p\in [0\dd |U|-m]$  for which there exists a
position $x\in [0\dd |V|-m]$ in $V$ such that $p-x \equiv r \pmod{q}$ and $\Mispers(p,x)\le k$ (and thus $U^{(p)} =_k V^{(x)}$).
}

\smallskip
For an integer set $A$ and an integer $r$, let $A\oplus r = \{a+r\::\; a\in A\}$. An \emph{interval chain} for an interval~$I$ and non-negative integers $a$ and $q$ is a set \[\Chain_q(I,a):=I\,\cup\, (I\oplus q)\,\cup\, (I\oplus 2q)\,\cup \dots\cup\, (I\oplus aq).\]
Here, $q$ is called the \emph{difference} of the interval chain.

\begin{lemma}[{\cite[Lemma 15]{DBLP:journals/jcss/Charalampopoulos21}}]\label{lem:ppm}
Given sets $I=\Misp(U,Q)$ and $J=\Misp(V,\rot_r(Q))$, a solution to \textsc{PeriodicPeriodicMatch}, represented as $\Oh(k^2)$ interval chains each with difference $q$, can be computed in $\Oh(k^2)$ time.
\end{lemma}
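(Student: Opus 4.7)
The plan is to exploit the observation that, under the alignment constraint $p - x \equiv r \pmod{q}$, the periodic templates $Q^\infty$ (for $U$) and $(\rot_r(Q))^\infty$ (for $V$) agree on every aligned offset: a direct calculation gives $Q^\infty[p+i] = (\rot_r(Q))^\infty[x+i]$ whenever $p - x \equiv r \pmod q$. By the triangle inequality, any mismatch between $U^{(p)}$ and $V^{(x)}$ at offset $i$ forces $p+i \in I$ or $x+i \in J$, and therefore
\[
\Ham(U^{(p)}, V^{(x)}) \;\le\; |\W_p \cap I| + |\W_x \cap J| \;=\; \Mispers(p,x).
\]
Consequently, the condition $\Mispers(p,x) \le k$ alone already implies $U^{(p)} =_k V^{(x)}$, and \textsc{PeriodicPeriodicMatch} reduces to computing the set of $p$ for which there exists a valid $x$ with $p - x \equiv r \pmod{q}$ and $f(p) + g(x) \le k$, where $f(p) = |\W_p \cap I|$ and $g(x) = |\W_x \cap J|$.

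The key structural fact is that $f$ and $g$ are piecewise constant with $\Oh(k)$ breakpoints each: for every $a \in I$ the function $f$ jumps by $+1$ at $p = a - m + 1$ and by $-1$ at $p = a+1$, and symmetrically for $g$. I would build both piecewise-constant representations in $\Oh(k)$ time. Then, for each of the $\Oh(k)$ maximal intervals $[x_1, x_2]$ on which $g \equiv v$, I would scan the pieces of $f$ once to produce the set $\{\,p : f(p) \le k - v\,\}$ as a union of $\Oh(k)$ intervals. Across all values of $v$ this takes $\Oh(k^2)$ time and produces $\Oh(k^2)$ pairs, each consisting of a $g$-piece $[x_1, x_2]$ with value $v$ and a maximal interval $[p_1, p_2]$ on which $f \le k-v$.

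For each such pair, I would emit a single interval chain with difference $q$. Writing $p = x + r - jq$ with $j \in \mathbb{Z}$, the set of $p$'s arising from some $x \in [x_1, x_2]$ is $\bigcup_{j} [x_1 + r - jq,\ x_2 + r - jq]$, which is an interval chain with difference $q$; intersecting with $[p_1, p_2]$ and with the valid $p$-range $[0, |U| - m]$ clips each constituent interval at shared endpoints, so the result is still an interval chain of difference $q$ (possibly with empty terms trimmed at the extremes). Unioning all $\Oh(k^2)$ chains gives the required representation, and the total work is $\Oh(k^2)$. The main delicate points are (i)~establishing the inequality $\Ham \le \Mispers$, which crucially uses that the periods of $U$ and $V$ are synchronized precisely by the residue $r$, and (ii)~verifying that the intersection of an interval chain of difference $q$ with an arbitrary interval remains a single chain of the same difference — something that would fail if we tried to use a different stride, but holds here because $q$ is uniform throughout the construction.
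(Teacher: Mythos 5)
Your proposal is correct and follows essentially the route behind this lemma (which the paper itself only invokes by citing Lemma~15 of the JCSS'21 paper rather than reproving): the aligned periodicities under $p-x\equiv r\pmod q$ give $\Ham(U^{(p)},V^{(x)})\le \Mispers(p,x)$, so the task reduces to the two piecewise-constant misperiod-count functions $f,g$ with $\Oh(k)$ pieces each, and your piece-by-piece sweep yields $\Oh(k^2)$ chains in $\Oh(k^2)$ time. One small fix: intersecting the translate family $\bigcup_j [x_1+r+jq,\,x_2+r+jq]$ with $[p_1,p_2]$ and the valid range of $p$ may clip the first and last constituent intervals differently from the middle ones, so per pair you get a union of $\Oh(1)$ interval chains of difference $q$ rather than literally a single chain (and building $f,g$ in $\Oh(k)$ time tacitly assumes $I,J$ are given sorted, as they indeed are when computed via \LCP scans); neither point affects the $\Oh(k^2)$ size and time bounds.
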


Next, we observe that if $U^{(p)}=_kV^{(x)}$, then either some two misperiods in $U^{(p)}$ and $V^{(x)}$ are aligned, or the total number of misperiods in these substrings is at most $k$. We recall that $V=P^2[\alpha_V \dd \beta_V]$ to obtain the next observation.

\begin{observation}\label{obs:toPPM}
The result of $\textsc{PeriodicSubMatch}(U,V)$ can be computed as
\[{\textsc{PeriodicPeriodicMatch}(U,V)
\;\cup\; \bigcup_{i \in I,\,j \in J}\, \textsc{PairMatch}_k(U, i, P, (\alpha_V+j) \bmod m).}\]
\end{observation}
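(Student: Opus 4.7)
The plan is to verify the set equality by showing both inclusions. The lynchpin is the alignment of the two periodic backgrounds: under the congruence $p - x \equiv r \pmod{q}$, for every offset $t \in [0 \dd m)$ I obtain
\[
Q^\infty[p+t] \;=\; Q[(p+t) \bmod q] \;=\; Q[(x+t+r) \bmod q] \;=\; (\rot_r(Q))^\infty[x+t],
\]
so $U^{(p)}$ and $V^{(x)}$ approximate the same letter at every aligned offset, and any mismatch $U[p+t]\neq V[x+t]$ must involve a misperiod on at least one side.

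The inclusion $\supseteq$ is a direct unpacking of definitions: $\textsc{PeriodicPeriodicMatch}(U,V)$ is obtained from $\textsc{PeriodicSubstringMatch}(U,V)$ by adding the constraint $\Mispers(p,x)\le k$, so it trivially embeds; and a position $p$ returned by $\textsc{PairMatch}_k(U,V,i,j)$ with $(i,j)\in I\times J$ comes with a witness $x$ satisfying $U^{(p)} =_k V^{(x)}$ and $p-x = i-j$, which matches the modular condition $p-x \equiv r \pmod q$ for exactly those $(i,j)$ that produce non-empty contributions. For the substantive $\subseteq$ direction, fix $p \in \textsc{PeriodicSubstringMatch}(U,V)$ with witness $x$, and set $a = |\{t \in [0 \dd m) : p+t \in I \text{ and } x+t \in J\}|$. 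By the background-alignment calculation above, every $t$ that is a misperiod on exactly one side forces a mismatch, so
\[
k \;\ge\; \Ham(U^{(p)}, V^{(x)}) \;\ge\; |\W_p \cap I| + |\W_x \cap J| - 2a \;=\; \Mispers(p,x) - 2a.
\]
If $a=0$, then $\Mispers(p,x)\le k$ and $p \in \textsc{PeriodicPeriodicMatch}(U,V)$; otherwise, any aligned offset $t$ yields $(i,j) := (p+t, x+t) \in I\times J$ with $p \in \textsc{PairMatch}_k(U,V,i,j)$, as $p \in [i-m+1 \dd i]$ and the alignment $i-p = j-x$ is witnessed by the same $x$.

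The main delicate point is the inequality $\Ham(U^{(p)}, V^{(x)}) \ge \Mispers(p,x) - 2a$: lone misperiods necessarily contribute a mismatch (since the aligned background letters coincide), whereas aligned pairs of misperiods may cancel by coincidence, justifying the subtraction of $2a$. Beyond this and the routine modular arithmetic for the periodic alignment, everything reduces to a clean case split on whether $a = 0$.
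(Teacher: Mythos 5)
Your core argument is exactly the paper's (implicit) one: the paper justifies this observation only by the sentence preceding it---if $U^{(p)}=_k V^{(x)}$, then either some two misperiods are aligned or the total number of misperiods is at most $k$---and your inequality $\Ham(U^{(p)},V^{(x)})\ge \Mispers(p,x)-2a$, obtained from the congruence-based alignment of the two periodic backgrounds, together with the case split $a=0$ versus $a>0$, is precisely a formalization of that dichotomy. So the substantive inclusion $\textsc{PeriodicSubstringMatch}(U,V)\subseteq \textsc{PeriodicPeriodicMatch}(U,V)\cup\bigcup_{i\in I, j\in J}\textsc{PairMatch}_k(U,V,i,j)$ is handled correctly and in the same way the authors intend.

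The one soft spot is the reverse inclusion for the $\textsc{PairMatch}$ terms. You assert that the witness equality $p-x=i-j$ ``matches the modular condition $p-x\equiv r\pmod q$ for exactly those $(i,j)$ that produce non-empty contributions,'' i.e., that $\textsc{PairMatch}_k(U,V,i,j)\neq\emptyset$ with $(i,j)\in I\times J$ forces $i-j\equiv r\pmod q$. This is not automatic from the promises in the problem statement (Hamming distance $\cO(k)$ from periodic strings); it is the period-synchronization argument: because $Q$ is primitive and every length-$m$ window of $U$ and of $V$ contains a number of exact copies of $Q$ that exceeds the mismatch budget (as quantified in the proof of \cref{lem:toPSM}, where the instances of \textsc{PeriodicSubstringMatch} are constructed), any $k$-mismatch match $U^{(p)}=_k V^{(x)}$ already forces $p-x\equiv r\pmod q$. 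Without primitivity and those quantitative bounds, the right-hand side could strictly contain the left-hand side. So either invoke that synchronization explicitly, or observe that for the algorithm only your $\subseteq$ direction matters, together with the fact that every position output by a $\textsc{PairMatch}_k(U,V,i,j)$ call is in any case a genuine circular $k$-mismatch occurrence after translating coordinates back to $P$ and $T$.
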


By the observation, there are $\Oh(k^2)$ instances of \textsc{PairMatch}$_k$, each taking $\Oh(k)$ time in the \pillar model, and a \textsc{PeriodicPeriodicMatch} instance that can be solved in $\Oh(k^2)$ time. This results in total time complexity $\Oh(k^3)$ for \textsc{PeriodicSubMatch}.
To complete the algorithm for the reporting version of $k$-Mismatch CPM in the \pillar model, it suffices to transform the output from a union of intervals and interval chains to a list of circular $k$-mismatch occurrences. This is performed via the following auxiliary problem.

\defproblem{\textsc{RectangleFamiliesUnions}}{Families $\F_1,\ldots,\F_f$ consisting of rectangles in $[0 \dd N)^2$ with $r:=\sum_{i=1}^f |\F_i|$ and $t := \max_i |\F_i|$.}{Sets $\mathit{Out}_i:=\bigcup \F_i$ of grid points of total size $s:=\sum_i\,|\mathit{Out}_i|$, with the points in each set reported in lexicographical order, that is, row by row from top to bottom and left-to-right in each row.}

We summarize the above transformations and perform the reduction.

\begin{lemma}\label{lem:ham_rep_red}
The reporting version of $k$-Mismatch CPM can be reduced to \textsc{RectangleFamiliesUnions} problem with $f=\Oh(n/m)$, $s\le |\cyc{}{k}(P,T)|$, $t=\Oh(k^3)$, and $N=n$ using $\Oh((n/m) \cdot k^3)$ \pillar operations and $\Oh((n/m) \cdot k^3+|\cyc{}{k}(P,T)|)$ extra time.
\end{lemma}
\begin{proof}
We apply \cref{lem:toPSM} which, using $\Oh((n/m)k^2\log \log k)$ time plus $\Oh((n/m)k^2)$ \pillar operations, outputs $\Oh((n/m)k^2)$ intervals of circular $k$-mismatch occurrences, which we directly report, and also returns $\cO(n/m)$ instances of \textsc{PeriodicSubMatch}.
Each of the instances can be reduced, using \cref{obs:toPPM}, to an instance of \textsc{PeriodicPeriodicMatch} and $\Oh(k^2)$ instances of \textsc{PairMatch}$_k$. The former can be solved with \cref{lem:ppm} in $\Oh(k^2)$ time with the output represented as a union of $\Oh(k^2)$ interval chains with difference $q$, while the latter can be solved with \cref{lem:PairMatch} in $\Oh(k^3)$ time in the \pillar model and produce $\Oh(k^3)$ intervals.
Overall, the computations for each instance take $\Oh(k^3)$ time in the \pillar model and produce $\Oh(k^3)$ intervals and $\Oh(k^2)$ interval chains with difference $q$.

\newcommand{\G}{\mathcal{G}}
For integer $q \ge 1$, by $\G_q$ we denote a grid of width $q$ obtained by fitting the integers from $[0 \dd n)$ so that the first row consists of numbers 0 through $q-1$, the second of numbers $q$ to $2q-1$, etc. We use the following convenient representation of interval chains.

\begin{claim}[{\cite[Lemma 5.4]{DBLP:journals/iandc/KociumakaRRSWZ22},\cite[Lemma 4]{DBLP:conf/esa/Charalampopoulos20a}}]\label{lem:ic_rect}
The set $\Chain_q(I, a)$ is a union of $\Oh(1)$ axes-aligned orthogonal rectangles in $\G_q$.
The coordinates of the rectangles can be computed in $\Oh(1)$ time.
\end{claim}

For each of the $\cO(n/m)$ instances of \textsc{PeriodicSubMatch}, we construct a grid $\G_q$ and insert $\cO(1)$ rectangles obtained from the corresponding instance of \textsc{PeriodicPeriodicMatch} and $\cO(1)$ rectangles for each of the $\cO(k^3)$ intervals returned by the $\cO(k^2)$ calls to \textsc{PairMatch}$_k$.\footnote{An interval is a special case of an interval chain (of any difference).}

Both dimensions of each constructed grid $\G_q$ are at most $n$.
Thus, all instances of \textsc{PeriodicSubMatch} reduce to \textsc{RectangleFamiliesUnions} in $\G = [0 \dd n)^2$ with $r=\Oh((n/m)\cdot k^3)$ and $t=\Oh(k^3)$.
For each instance of \textsc{PeriodicSubMatch}, the order in which the points are reported is identical to the order of the corresponding positions in text $T$.
For each such point $(i,j)$, we report $\alpha_U + q \cdot i + j \in \cyc{}{k}(P,T)$.

Due to \cref{lem:toPSM}, each element of $\cyc{}{k}(P,T)$ is reported a constant number of times.
\end{proof}

\newcommand{\Tsort}{\T_{\mathit{sort}}}
Let $\Tsort(n, U)$ represent the time complexity of sorting $n$ integers from $[0\dd U)$.

\begin{lemma}\label{lem:KleeReportPILLAR}
\textsc{RectangleFamiliesUnions} can be solved in $\cO(\sum_{i=1}^f \Tsort(|\F_i|,N)+s)$ time or in $\cO(\Tsort(r,N)+s)$ time.
\end{lemma}
\begin{proof}
The \textsc{RectangleFamiliesUnions} problem can be solved in $\Oh(N+r+s)$ time using a line-sweeping algorithm \cite[Claim 20]{DBLP:conf/esa/Charalampopoulos20a}.
The algorithm uses $\Oh(N+r)$ time to sort events in the sweep (corresponding to coordinates of rectangles' vertices) using radix sort and the remaining operations take $\Oh(s)$ time.
We use the same algorithm but replace radix sort with an arbitrary sorting algorithm, either for each family $\F_i$ separately or for all of them at once, thus obtaining the stated time complexity.
\end{proof}

\begin{theorem}\label{thm:ham_rep}
The reporting version of $k$-Mismatch CPM can be solved using\linebreak $\Oh((n/m) \cdot \Tsort(\Oh(k^3),\Oh(n))+\Output)$ time or $\Oh(\Tsort(\Oh((n/m) \cdot k^3),\Oh(n))+\Output)$ time, plus $\Oh((n/m) \cdot k^3)$ \pillar operations.
\end{theorem}
\begin{proof}
\cref{lem:ham_rep_red,lem:KleeReportPILLAR} imply our solution to $k$-Mismatch CPM in the \pillar model. 
\end{proof}

With integer sorting~\cite{DBLP:journals/jal/Han04}, in which $\Tsort(n,U)=\Oh(n \log \log n)$, we obtain the following corollary.

\begin{corollary}\label{cor:ham_rep}
The reporting version of $k$-Mismatch CPM can be solved using $\Oh((n/m) \cdot k^3 \log\log k+\Output)$ time plus $\Oh((n/m) \cdot k^3)$ \pillar operations.
\end{corollary}

One could also use randomized integer sorting~\cite{DBLP:conf/focs/HanT02} to obtain $\Oh((n/m) \cdot k^3 \sqrt{\log\log k}+\Output)$ expected time in the corollary.

\subsection{A Faster Algorithm for the Decision Version of \texorpdfstring{$k$-Mismatch}{k-Mismatch} CPM}

Two aligned misperiods can correspond to zero or one mismatch, while each misaligned misperiod always yields one mismatch; cf.\ \cref{fig:setting}.
Let us recall that $I=\Misp(U,Q)$ and $J=\Misp(V,\rot_r(Q))$.
We define the following \emph{mismatch correcting} function that can be used to correct \emph{surplus mismatches}:
\[\nabla(i,j)\, =\,  \begin{cases}
  0  &  \text{if}\ \ (i,j) \notin I\times J,\text{ otherwise:} \\
  1 & \text{if}\ \ U[i]\,\neq\, V[j],\\
  2 & \text{if}\ \ U[i]\,=\, V[j].
\end{cases}\]
Further, let 
$\Surplus(i,j)\,=\,\sum_{t=0}^{m-1} \nabla(i+t,j+t)$.

\defproblem{Decision \textsc{PeriodicSubMatch}$(U,V)$}
{Same as before, with the sets $I$, $J$ stored explicitly.}
{Any position $p \in [0 \dd |U|-m]$ such that 
$\Mispers(p,x)-\Surplus(p,x)\le k$ for some $x \in [0 \dd |V|-m]$ such that $p-x \equiv r \pmod{q}$, if there exists one.}

We consider a $(|U|-m+1) \times (|V|-m+1)$ grid $\mathcal{G}$.
A point $(i,j)\in \mathcal{G}$ is called \emph{essential} if $i\in I$, $j\in J$, 
and $i-j \equiv r \pmod{q}$.
The $\delta$-th diagonal in $\mathcal{G}$ consists of points $(i,j)$ that satisfy $i-j=\delta$; it is an \emph{essential} diagonal if it contains an essential point.
Let us observe that only essential points influence the
function $\Surplus$, and the number of these points is $\Oh(k^2)$.
A combination of this observation with a simple 1D sweeping 
algorithm implies the following lemma.

\begin{lemma}[Compact representation of $\Mispers$ and $\Surplus$]\label{WR}
In $\Oh(k^2 \log \log k)$ time, we can:
\begin{enumerate}[(a)]
\item\label{ia} Partition the grid $\mathcal{G}$ by $\Oh(k)$ vertical and $\Oh(k)$ horizontal lines into $\Oh(k^2)$ disjoint rectangles such that
the value $\Mispers(i,j)$ is the same for all  points $(i,j)$ in a single rectangle. Each rectangle stores the value $\Mispers(i,j)$ common to all points $(i,j)$ that it contains.
\item\label{ib} Partition all essential diagonals in $\mathcal{G}$ into $\Oh(k^2)$ pairwise disjoint diagonal segments such that the 
value $\Surplus(i,j)$ is the same for all points $(i,j)$ in a single segment.
Each segment stores the value $\Surplus(i,j)$ common to all points $(i,j)$ that it contains.
\end{enumerate}
\end{lemma}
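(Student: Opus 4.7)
The plan is to exploit two structural features of the quantities at hand: the separability $\Mispers(i,j) = f(i) + g(j)$ for~\ref{ia}, and the per-diagonal step-function nature of $\Surplus$ for~\ref{ib}.

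For~\ref{ia}, first I would set $f(i) := |\W_i \cap I|$ and $g(j) := |\W_j \cap J|$, so that $\Mispers = f + g$ is separable. Next, observe that $f$ is a step function with at most $2|I| = \Oh(k)$ breakpoints: a misperiod $p \in I$ enters the window $\W_i = [i, i+m)$ at $i = p - m + 1$ and leaves at $i = p + 1$. Placing vertical lines at these $\Oh(k)$ breakpoints and horizontal lines at the analogous $\Oh(k)$ breakpoints of $g$ yields the required partition into $\Oh(k^2)$ rectangles on each of which $\Mispers$ is constant. To meet the time bound, I would sort these breakpoint positions via Han's integer sorting~\cite{DBLP:journals/jal/Han04} in $\Oh(k \log\log k)$ time, sweep each axis to compute the common value of $f$ (resp.\ $g$) on each strip, and finally fill the $\Oh(k^2)$ rectangle table in $\Oh(k^2)$ time via $\Mispers = f + g$.

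For~\ref{ib}, an essential point $(a,b)$ lies on the diagonal $\delta = a - b$ and contributes $\nabla(a,b) \in \{1,2\}$ to $\Surplus(i,j)$ for $(i,j)$ on that same diagonal precisely when $(a,b) \in \W_i \times \W_j$, i.e., exactly when $i \in [a - m + 1, a]$. So each essential point generates two breakpoints on its own diagonal, and $\Surplus$ is constant on each inter-breakpoint segment. Since $|I|, |J| = \Oh(k)$, there are at most $|I|\cdot|J| = \Oh(k^2)$ essential points and hence $\Oh(k^2)$ resulting diagonal segments. To enumerate essential points within budget, I would bucket $I$ and $J$ by residue modulo $q$ using integer sort, then for each residue class $\rho$ pair every $i \in I$ with $i \equiv \rho \pmod{q}$ against every $j \in J$ with $j \equiv \rho - r \pmod{q}$. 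The $\Oh(k^2)$ generated breakpoint events, sorted once more by (diagonal, position) via integer sort in $\Oh(k^2 \log\log k)$ time, yield the desired partition after a linear sweep along each diagonal that accumulates $\nabla$-values.

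The main obstacle is not structural but complexity-theoretic: both partitions fall out of the definitions, and the real challenge is achieving $\Oh(k^2 \log\log k)$ rather than $\Oh(k^2 \log k)$ time. This forces the use of Han's integer sorting for the $\Oh(k^2)$ breakpoint events in~\ref{ib}; after these sorts, every remaining operation is a linear sweep over already-ordered data, and the arithmetic $\Mispers = f + g$ and cumulative-sum update for $\Surplus$ are constant-time per step. A minor subtlety to verify is that the grouping of $I$ and $J$ by residue mod $q$ does indeed produce each essential point exactly once (without double-counting or missing diagonals), which follows directly from the definition $i - j \equiv r \pmod{q}$ of essentiality.
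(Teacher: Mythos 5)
Your proposal is correct and follows essentially the same route as the paper: for~\ref{ia}, axis-wise breakpoint events (a misperiod enters a window at $p-m+1$ and leaves at $p+1$) sorted by integer sorting, with rectangles formed as Cartesian products of the resulting axis strips; for~\ref{ib}, enumerating the $\Oh(k^2)$ essential points, bucketing them by diagonal via integer sorting, and running the same 1D sweep with $\nabla$-weights along each diagonal. The explicit separability $\Mispers=f+g$ and the residue-class pairing of $I$ and $J$ are just slightly more spelled-out versions of what the paper does implicitly, so no substantive difference.
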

\begin{proof}
Partitioning~\eqref{ia}: We partition the first axis (second axis) into $\Oh(k)$ axis segments such that, for all $i$ ($j$, respectively) in the same segment, the set $I \cap [i \dd i+m)$ ($J \cap [j \dd j+m)$, respectively) is the same. Then, we create rectangles being Cartesian products of the segments.

The partitioning of each axis is performed with a 1D sweep; we describe it in the context of the first axis.
For each $i \in I$, we create an event at position $i-m+1$, where the misperiod $i$ is inserted, and an event at position $i+1$ (if $i+1 \le |U|-m+1$), where it is removed. We can now sort all events in $\Oh(k \log \log k)$ time using integer sorting~\cite{DBLP:journals/jal/Han04} and process them in order, storing the number of active misperiods. For all $i$ in a segment without events, the set $I \cap [i \dd i+m)$ is the same.

We obtain $\Oh(k)$ segments on each axis, yielding $\Oh(k^2)$ rectangles. Part~\eqref{ia} takes $\Oh(k^2)$ time.

Partitioning~\eqref{ib}:
First, we group essential points by (essential) diagonals using integer sorting. On each essential diagonal, we sort the essential points bottom-up and perform the same kind of 1D sweep as in~\eqref{ia}, using $\nabla$ to compute the weights of the events. The whole algorithm works in $\Oh(k^2 \log \log k)$ time.
\end{proof}

\newcommand{\val}{\mathsf{val}}
\newcommand{\cell}{\mathsf{cell}}
\newcommand{\diag}{\mathsf{diag}}
We assume that the grid $\mathcal{G}$ is partitioned by selected horizontal and vertical
lines into disjoint rectangles, called \emph{cells}. These cells and some diagonal segments are weighted.
Let us denote by
$\cell(i,j)$ and $\diag(i,j)$  the weight of the cell and the diagonal
segment, respectively, containing point~$(i,j)$. In the following problem, we only care about points on diagonal segments.

\defproblem{\textsc{DiagonalSegments}}{
A grid partitioned by $\Oh(k)$ vertical and $\Oh(k)$ horizontal lines into $\Oh(k^2)$ weighted rectangles, called cells, and $\Oh(k^2)$ pairwise disjoint weighted diagonal line segments, all parallel to the line that passes through $(0,0)$ and $(1,1)$.}{
Report a point $(x,y) \in \mathbb{Z}^2$ on some diagonal line segment with minimum value
$\val(x,y) := \cell(x,y) + \diag(x,y)$.
}

\begin{figure}[ht]
    \centering
    \begin{tikzpicture}[scale=0.65]
        \tikzstyle{dot}=[inner sep=2pt, circle, draw, fill=gray]
        \draw[thick,densely dotted] (2,0) -- (2,7) (7,0) -- (7,7) (0,2) -- (12,2) (0,5) -- (12,5);
        \draw[very thick] (2,2) rectangle (7,5);
        \draw[violet,thick] (2.5,3) -- (4,4.5);
        \draw[brown,thick] (5,4) -- (7.2,6.2);
        \draw[red,thick] (3,1) -- (5,3);
        \draw[blue,thick] (5.5,3.5) -- (6.5,4.5);
        \draw[green!70!black,thick] (5,1.5) -- (9,5.5);
        \foreach \x/\y in {2.5/3,4/4.5,5/4,6/5,4/2,5/3,5.5/3.5,6.5/4.5,5.5/2,7/3.5}{
            \filldraw (\x,\y) circle (0.11cm);
        }
    \end{tikzpicture}
    \caption{The weight of the distinguished cell is equal to $\Mispers(x,y)$ for all points $(x,y)$ in that cell.
The weight of a single diagonal segment
is equal to $-\Surplus(x,y)$ for all points $(x,y)$ that lie on that segment.
In the \textsc{DiagonalSegments} problem, we are to find any point $(x,y)$ on some diagonal segment that minimizes the sum of the weight of its cell and of its diagonal segment, i.e., $\Mispers(x,y)-\Surplus(x,y)$.
To this end, it suffices to consider endpoints of diagonal segments and crossings of diagonal segments with rectangles' boundaries (all shown as dots).}\label{fig:cells_diags}
\end{figure}

\cref{fig:cells_diags} provides intuition for our solution to the \textsc{DiagonalSegments} problem.

\begin{lemma}\label{lem:geo}
The \textsc{DiagonalSegments} problem can be solved in $\Oh(k^2 \log k / \log \log k)$ time.
\end{lemma}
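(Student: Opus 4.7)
The plan is to sweep the diagonal parameter $\delta := x-y$ and maintain a dynamic 1D range-minimum structure, achieving the stated bound via a word-RAM data structure with $O(\log k/\log\log k)$ per operation.

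The key observation is that, restricted to any diagonal segment $s$ of weight $w_s$, the value $\val = \cell + \diag$ is piecewise constant with jumps exactly at the points where $s$ crosses a horizontal or vertical grid line; on each maximal sub-segment lying inside a cell $C$ one has $\val \equiv w_C + w_s$. Hence the global minimum equals $\min_s(w_s + m_s)$, where $m_s := \min\{w_C : C \cap s \ne \emptyset\}$, and an integer witness point is extracted in $O(1)$ time from the optimal pair $(s^\ast, C^\ast)$ by taking any lattice point in $s^\ast\cap C^\ast$ (one exists since all input coordinates are integers).

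For fixed $\delta$, the line $x-y=\delta$ traverses an ordered sequence $\Sigma(\delta)$, sorted by $y$, of at most $O(k)$ cells. As $\delta$ grows, $\Sigma(\delta)$ is piecewise constant, changing only at the $O(k^2)$ \emph{corner events} $\delta^\star = v_j-h_i$ where the line passes through a grid corner (plus $O(k)$ boundary events at the grid frame); a short case analysis of the four cells incident to $(v_j,h_i)$ shows that each corner event replaces exactly one entry of $\Sigma$ — the cell between the two nearby crossings in the current order is swapped with its diagonal neighbor across the corner. Interleaved with these are the $O(k^2)$ query events $\delta=\delta_s$, one per diagonal segment $s$, asking for the minimum weight among entries of $\Sigma(\delta_s)$ whose $y$-range meets that of $s$. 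All events are ordered by $\delta$ in $O(k^2\log\log k)$ time via integer sort.

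I would maintain $\Sigma$ as an array $A$ of cell weights, together with a $y$-fast trie over the current cell $y$-boundaries for $O(\log\log k)$-time predecessor access. On top of $A$, a weight-balanced B-tree with fan-out $\Theta(\log k)$ and word-parallel min-aggregation at each internal node supports point updates and range-min queries in $O(\log k/\log\log k)$ time each in the word-RAM model, via standard techniques for dynamic range aggregates over integer data bounded by $k^{O(1)}$. A query for segment $s$ locates the contiguous sub-array of $A$ corresponding to cells intersecting $s$'s $y$-range via two predecessor queries, performs a range-min query to obtain $m_s$, and updates the running best with $w_s + m_s$. Over $O(k^2)$ updates and $O(k^2)$ queries, this yields the claimed $O(k^2\log k/\log\log k)$ bound. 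The principal technical obstacle is supplying the $O(\log k/\log\log k)$-per-operation dynamic range-min structure; given that, the rest of the sweep and its constant-sized updates per corner event are routine, with the only real case analysis being the verification that each corner event produces a single cell swap in $\Sigma$.
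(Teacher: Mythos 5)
Your strategy is viable and lands on the same complexity, but it is organized differently from the paper's proof. The paper reduces the search to explicit candidate points: endpoints of diagonal segments, whose cells are located by sorting against the $\Oh(k)$ horizontal and vertical lines, and intersections of diagonal segments with cell edges, which after the shear $(x,y)\mapsto(y-x,y)$ become minimum-weight crossings of vertical and horizontal weighted segments, found by a line sweep that maintains the horizontal segments on the sweep line in a dynamic predecessor structure~\cite{DBLP:journals/ipl/Willard83} together with a dynamic RMQ structure with amortized $\Oh(\log s/\log\log s)$ operations~\cite{DBLP:conf/wads/BrodalDR11}. You instead sweep the diagonal offset $\delta$ and maintain, under point updates and range-minimum queries, the $\Oh(k)$-length sequence $\Sigma(\delta)$ of cells met by the current diagonal; your identity $\min_s(w_s+m_s)$ is the paper's ``endpoints and boundary crossings suffice'' observation in a different guise. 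Both routes stand on the same non-trivial primitive, a dynamic range-minimum structure with $\Oh(\log k/\log\log k)$ per operation; you should invoke the known structure of Brodal, Davoodi and Rao~\cite{DBLP:conf/wads/BrodalDR11} (which covers your point-update variant) rather than assert it via ``word-parallel B-trees'', since building it from scratch is not routine.

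The one genuine gap is the degenerate case in which a query diagonal passes exactly through a cell corner, and this case is not avoidable by general position: corner coordinates differ by an integer, so every corner-event value $\delta^\star$ is an integer and may coincide with an essential diagonal $\delta_s$. At such a $\delta$ the diagonal touches the outgoing and the incoming middle cells only at the corner point, and in fact \emph{neither} of them contains a lattice point of that diagonal (the outgoing cell is last lattice-incident at $\delta^\star-1$, the incoming one first at $\delta^\star+1$). Your sweep, which performs a single swap at $\delta^\star$, necessarily keeps one of the two in $\Sigma(\delta^\star)$, so a query segment whose $y$-range covers the corner can absorb the weight of a cell containing no lattice point of the segment; the reported value can then be strictly smaller than the true minimum, and your witness extraction fails --- $s^\ast\cap C^\ast$ is a single corner point (or empty in lattice terms), contradicting your claim that integrality always yields an integer witness. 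The fix is easy but must be part of the argument: define $\Sigma(\delta)$ by lattice incidence rather than geometric stabbing, e.g., at each event value process all removals before, and all insertions after, the queries with that $\delta$. A smaller point: the $y$-boundaries of cells in $\Sigma$ coming from vertical partition lines sit at $y=v_j-\delta$ and drift with the sweep, so a single $y$-fast trie keyed by current $y$-values is awkward; two static predecessor structures (over horizontal lines by $y$, over vertical lines by $x$, queried with $x=y+\delta_s$), plus the fact that the array index of the cell in row $i$ and column $j$ is $i+j$ up to a $\delta$-dependent offset, give the same $\Oh(\log\log k)$ location cost.
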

\begin{proof}
A sought point $(x,y)$ that minimizes $\val(x,y)$ either (1) is an endpoint of a diagonal segment or (2) lies on the intersection of a diagonal segment and an edge of a cell.

In case (1), it suffices to identify, for all endpoints of all diagonal segments, the cells to which they belong. Let us assume that vertical and horizontal lines partition the grid into columns and rows, respectively. Then, each cell can be uniquely identified by its column and row. By the following claim, we can compute, in $\Oh(k^2 \log \log k)$ time, for all queried points, the rows they belong to; the computation of columns is symmetric.

\begin{claim}\label{lem:nearest_line}
Given $\Oh(p)$ horizontal lines and $\Oh(p)$ points, one can compute, for each point, the nearest line above it in $\Oh(p \log \log p)$ time in total.
\end{claim}
\begin{proof}
We create a list of integers containing the vertical coordinates of all queried points and all lines. Then, we sort the list in $\Oh(p \log \log p)$ time. The required answers can be retrieved by a simple traversal of the sorted list.
\end{proof}

In case (2), let us consider intersections with horizontal edges; the intersections with vertical edges can be handled symmetrically. Assume $x$ is the horizontal coordinate. We perform an affine transformation of the plane $(x,y) \mapsto (y-x,y)$ after which diagonal line segments become vertical, but horizontal line segments remain horizontal. The sought points can be computed using the following claim in $\Oh(k^2 \log k / \log \log k)$ time.

\begin{claim}\label{VH-seg}
Given $s$ vertical and horizontal weighted line segments such that no two line segments of the same direction intersect, an intersection point of a vertical and a horizontal line segment with minimum total weight can be computed in $\Oh(s \log s / \log \log s)$ time.
\end{claim}
\begin{proof}
We perform a left-to-right line sweep. The events are vertical line segments as well as the beginnings and endings of horizontal line segments; they can be sorted by their $x$-coordinates in $\Oh(s \log \log s)$ time using integer sorting.
The horizontal line segments intersecting the sweep line are stored using a dynamic predecessor data structure~\cite{DBLP:journals/ipl/Willard83}, and their weights in the same order are stored in a dynamic RMQ data structure of size $\Oh(s)$ that supports insertions, deletions, and range minimum queries in amortized $\Oh(\log s / \log \log s)$ time~\cite{DBLP:conf/wads/BrodalDR11}.
This way, when considering a vertical line segment, we can compute the minimum-weight horizontal line segment that intersects it in $\Oh(\log s / \log \log s)$ time. 
\end{proof}
This concludes our solution to \textsc{DiagonalSegments}.
\end{proof}

\begin{theorem}\label{thm:ham_dec}
The decision version of the $k$-Mismatch CPM problem can be solved in $\Oh((n/m) \cdot k^2 \log k / \log\log k)$ time plus $\Oh((n/m) \cdot k^2)$ \pillar operations.
\end{theorem}
\begin{proof}
We apply \cref{lem:toPSM}, which, using $\Oh((n/m)k^2\log \log k)$ time plus $\Oh((n/m)k^2)$ \pillar operations, outputs $\Oh((n/m)k^2)$ intervals of circular $k$-mismatch occurrences and returns $\Oh(n/m)$ instances of \textsc{PeriodicSubMatch}.
Next, we use the geometric interpretation of \textsc{PeriodicSubMatch}.
The weight of a cell is the value $\Mispers(i,j)$ common to
all points $(i,j)$ in this cell. Similarly, the weight of a diagonal 
segment equals $-\Surplus(i,j)$, the number of surplus misperiods 
that we have to subtract for all points in this segment.
These values are computed in $\Oh(k^2 \log \log k)$ time using \cref{WR}.
Now, the decision version of the \textsc{PeriodicSubMatch} problem is reduced to finding a point minimizing
the value $\Mispers(i,j)-\Surplus(i,j)$, which is the sum of weights of a cell and a diagonal segment meeting at the point $(i,j)$ if $\Surplus(i,j) \neq 0$.
The decision version of \textsc{PeriodicSubMatch} is thus reduced, in $\Oh(k^2\log\log k)$ time plus $\Oh(k^2)$ \pillar operations, to
one instance of each of the \textsc{DiagonalSegments} problem (if the sought point is on a diagonal segment) and the \textsc{PeriodicPeriodicMatch} problem (otherwise). 
The thesis then follows from \cref{lem:geo} and \cref{lem:ppm}.
\end{proof}

\begin{remark}
Using this geometric approach, the reporting version of $k$-Mismatch CPM can also be solved in $\Oh((n/m) \cdot k^2 \log^{\Oh(1)} k + k \cdot \Output)$ time
plus $\Oh((n/m)\cdot k^2)$ \pillar operations.
\end{remark}

\section{Fast $k$-Mismatch CPM in Important Settings}\label{app:settings}
Recall that, in the \pillar model, we measure the running time of an algorithm with respect to the total number of a few primitive operations, which are executed on a collection $\mathcal{X}$ of input strings. Then, for any fixed setting, an efficient implementation of these primitive operations yields an algorithm for this setting.
In this section, we combine \cref{thm:ham_dec,thm:ham_rep} with known implementations of the primitive \pillar operations in the standard, internal, packed, and read-only settings (\cref{sec:std}), in the dynamic setting (\cref{sec:dyn}), in the fully compressed setting (\cref{sec:comp}), and in the quantum setting (\cref{sec:quantum}), thus obtaining efficient algorithms for $k$-Mismatch CPM in these settings.

\subsection{Standard, Internal, Packed, and Read-Only Settings}\label{sec:std}

Let $X$ be any string in collection $\mathcal{X}$.
We can access any substring $X[\ell\dd r]$ of $X$ using a pointer to~$X$ along with the two corresponding indices $\ell$ and $r$. As shown in~\cite{DBLP:conf/focs/Charalampopoulos20}, implementing the \pillar model in the standard setting is done by putting together a few well-known results: 
\begin{itemize}
    \item $\Extract$, $\Access$, and $\Length$ have trivial implementations.
    \item $\LCP$ queries are implemented by constructing the generalized suffix tree for $\mathcal{X}$ in linear time~\cite{DBLP:conf/focs/Farach97} and preprocessing it for answering lowest common ancestor queries in $\Oh(1)$ time~\cite{DBLP:conf/latin/BenderF02,DBLP:journals/siamcomp/HarelT84}. The same construction for the reversed strings in $\mathcal{X}$ implements $\LCP_R$ queries.
    \item $\IPM$ queries can be performed in $\Oh(1)$ time after a linear-time preprocessing using a data structure by Kociumaka et al.~\cite{DBLP:journals/siamcomp/KociumakaRRW24}.
\end{itemize}
The above discussion is summarized in the following statement.

\begin{theorem}\label{the:standardPILLAR}
After an $\Oh(n)$-time preprocessing of a collection of strings of total length $n$, each \pillar operation can be performed in $O(1)$ time.
\end{theorem}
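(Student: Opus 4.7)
The plan is to dispatch the primitive operations one by one, invoking for each of them a classical $\Oh(n)$-preprocessing, $\Oh(1)$-query construction, and to verify that these constructions compose on the same underlying representation of the collection~$\mathcal{X}$. I would represent every substring $X[i\dd j]$ arising in the algorithm as a triple $(X,i,j)$, where $X\in\mathcal{X}$ is stored by pointer. Under this convention, $\Access(S,i)$, $\Extract(S,\ell,r)$, and $\Length(S)$ are all immediately $\Oh(1)$: they only require arithmetic on the indices and a direct array lookup into~$X$.

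For $\LCP$ and $\LCP_R$, I would build the generalized suffix tree of $\mathcal{X}$ using Farach's linear-time construction for integer alphabets, and then preprocess it for constant-time lowest common ancestor queries via Bender--Farach-Colton. The value $\LCP(S,T)$ for $S=X[i\dd j]$ and $T=Y[i'\dd j']$ is then obtained as the string depth of the LCA of the leaves corresponding to the suffixes starting at positions $i$ in $X$ and $i'$ in $Y$, truncated by $\min(|S|,|T|)$; the depths are read off in $\Oh(1)$ using precomputed node depths. Running the same construction on the reversed strings of $\mathcal{X}$ yields $\LCP_R$. All of this takes $\Oh(n)$ preprocessing and $\Oh(1)$ per query.

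For $\IPM(S,T)$ under the promise $|T|\le 2|S|$, I would plug in the internal pattern matching data structure of Kociumaka et al., which after $\Oh(n)$ preprocessing of $\mathcal{X}$ answers each query in $\Oh(1)$ time and returns the set of occurrences as a single arithmetic progression (this is possible thanks to the Fine and Wilf lemma, which forces the occurrences of $S$ in such a short $T$ to be arithmetic). Summing the preprocessing costs across all four structures (three suffix-tree-based and one IPM) yields the claimed $\Oh(n)$ bound.

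There is no genuine obstacle here; the statement is essentially an assembly of known linear-time, constant-query primitives. The only points to double-check are (i) that all structures can operate on the same collection $\mathcal{X}$ of total length $n$ using a uniform substring representation, so that queries issued by higher-level algorithms can be routed to the appropriate structure in $\Oh(1)$, and (ii) that Farach's suffix-tree construction and the IPM data structure both tolerate integer alphabets under the standard word-RAM assumption, which they do.
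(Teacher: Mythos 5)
Your proposal is correct and follows essentially the same route as the paper: substring handles as pointer-plus-indices make $\Extract$, $\Access$, and $\Length$ trivial, $\LCP$/$\LCP_R$ queries use the generalized suffix tree (built in linear time for integer alphabets) with constant-time LCA, applied also to the reversed strings, and $\IPM$ queries use the internal pattern matching structure of Kociumaka et al. The paper cites exactly this assembly of known results (from Charalampopoulos et al.), so nothing further is needed.
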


For the reporting version of $k$-Mismatch CPM, we combine the algorithm stated in the \pillar model in \cref{thm:ham_rep} with \cref{the:standardPILLAR} and radix sort, in which $\Tsort(n,U)=\Oh(n+U)$. (Observe that $\Output \le n$.)
For the decision version of $k$-Mismatch CPM, we combine the algorithm stated in the \pillar model in \cref{thm:ham_dec} with \cref{the:standardPILLAR}. Overall, we obtain \cref{thm:ham_standard} that we restate here for convenience.

\hamstand*

Using the implementation of the \pillar model in the standard setting, we get the following result for the \emph{internal setting}, where we want to preprocess a string $S$ and then query for the $k$-mismatch circular occurrences of a \fragment of $S$ in another \fragment of $S$.

\begin{theorem}[Internal Setting]\label{the:CPMinternal}
Given a string $S$ of length $n$ over an integer alphabet, after an $\cO(n)$-time preprocessing,
given any two \fragments $P$ and $T$ of $S$ and an integer threshold $k >0$,
we can solve the reporting version of $k$-Mismatch CPM in $\Oh((|T|/|P|)\cdot k^3 \log\log k+\Output)$ time and the decision version in $\Oh((|T|/|P|)\cdot k^2 \log k / \log\log k)$ time.
\end{theorem}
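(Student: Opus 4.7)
The plan is to simply combine the two \pillar-model theorems for $k$-Mismatch CPM (\cref{thm:ham_rep,thm:ham_dec}) with an implementation of the primitive \pillar operations tailored to the internal setting. Since both $P$ and $T$ are fragments of a single pre-given string $S$, I would take $\mathcal{X}=\{S\}$ as the collection over which the \pillar model is instantiated. Any fragment of $S$, including $P$, $T$, and any substring of $T$ considered by the algorithm of \cref{sec:mismatch}, is then representable in the \pillar model as a substring of the single string in $\mathcal{X}$.

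For the preprocessing, I would invoke the same construction used to prove \cref{the:standardPILLAR}, now applied to $\mathcal{X}=\{S\}$: build the suffix tree of $S$ in $\cO(n)$ time for $\LCP$ queries via LCA, build the suffix tree of the reverse of $S$ for $\LCP_R$ queries, augment both with constant-time LCA; and build the $\cO(n)$-space internal-pattern-matching structure of Kociumaka et al.~\cite{Kocium,DBLP:conf/soda/KociumakaRRW15} over $S$ to answer $\IPM$ queries on any two fragments of $S$ (satisfying $|T|\le 2|S|$) in $\cO(1)$ time. The operations $\Extract$, $\Access$, and $\Length$ are trivial when a fragment is represented by its two endpoints in $S$. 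Thus, after the $\cO(n)$-time preprocessing of $S$, each primitive \pillar operation on any two fragments of $S$ runs in $\cO(1)$ time.

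With this black box in place, given a query $(P,T,k)$ where $P$ and $T$ are fragments of $S$, I would simply run the algorithms from \cref{thm:ham_rep} (reporting) and \cref{thm:ham_dec} (decision) on the pair $(P,T)$, replacing $n$ with $|T|$ and $m$ with $|P|$ in their complexity statements. Each \pillar call is charged $\cO(1)$ query time, so the $\cO((|T|/|P|)\cdot k^3)$ \pillar operations of \cref{thm:ham_rep} contribute $\cO((|T|/|P|)\cdot k^3)$ to the running time, and combined with its $\cO((|T|/|P|)\cdot k^3\log\log k + \Output)$ non-\pillar time yields the claimed $\cO(|T|/|P|\cdot k^3\log\log k+\Output)$ bound for reporting. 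Analogously, \cref{thm:ham_dec} yields $\cO(|T|/|P|\cdot k^2\log k/\log\log k)$ for decision.

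There is essentially no substantive obstacle: the only care needed is to confirm that every substring on which the algorithm of \cref{sec:mismatch} invokes a \pillar operation, in particular the windows $T'$, the rotations of the sample, and the strings $U\subseteq T$ and $V\subseteq P^2$ constructed in \cref{lem:toPSM}, is either a fragment of $S$ or can be represented as a concatenation of $\cO(1)$ fragments of $S$ (as in the case of $V\subseteq P^2$). The latter is unproblematic because standard $O(n)$ \pillar preprocessing extends to any finite collection of strings, and we may include $P\cdot P$ (of length $\cO(|P|)=\cO(n)$) in $\mathcal{X}$ at query time by treating it as a pair of fragments of $S$; alternatively, all \pillar operations on $P\cdot P$ reduce to two operations on fragments of $S$ via a case split on whether an index lies in the first or the second copy. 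With this routine bookkeeping, the theorem follows immediately.
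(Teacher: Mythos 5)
Your proposal takes essentially the same route as the paper, which obtains \cref{the:CPMinternal} simply by combining the \pillar-model results (\cref{thm:ham_rep,thm:ham_dec}) with the $\Oh(n)$-preprocessing, $\Oh(1)$-per-operation implementation of \cref{the:standardPILLAR} applied to $S$, so that every primitive operation on fragments of $S$ is charged $\Oh(1)$ and $n,m$ are replaced by $|T|,|P|$. One minor remark: of your two ways of handling substrings of $P^2$, use the $\Oh(1)$-fragment case-split simulation rather than inserting $P\cdot P$ into $\mathcal{X}$ at query time, since the latter would cost $\Theta(|P|)$ per query and could exceed the claimed $\Oh(|T|/|P|\cdot k^2\log k/\log\log k)$ bound.
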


One can obtain a faster algorithm in the so-called \emph{packed setting}, which is most relevant when the alphabet size $\sigma$ is small.
In this setting, we assume that the input string is given in a packed representation, with each machine word representing $\cO(\log_{\sigma} n )$ letters.
We use \cref{thm:ham_rep} with radix sort, in which $\Tsort(n,U)=\Oh(n+\sqrt{U})$, and \cref{thm:ham_dec}. With an optimal implementation of the \pillar model for the packed setting~\cite{DBLP:conf/stoc/KempaK19,DBLP:journals/siamcomp/KociumakaRRW24}, one obtains the following result.

\begin{theorem}[Packed Setting, \cite{DBLP:journals/siamcomp/KociumakaRRW24}]\label{the:CPMpacked}
Given $\cO(n / \log_\sigma n)$-size packed representations of a text~$T$ of length $n$ and a pattern $P$ of length $m$, and an integer threshold $k>0$,
we can solve the reporting and decision versions of $k$-Mismatch CPM in time $\Oh(n / \log_\sigma n + (n/m) \cdot k^3+\Output)$ and $\Oh(n / \log_\sigma n + (n/m) \cdot k^2 \log k / \log\log k)$, respectively.
\end{theorem}

An analogous result was observed in~\cite{DBLP:journals/siamcomp/KociumakaRRW24} but with $\Oh(n / \log_\sigma n + (n/m) \cdot k^3 \log\log k+\Output)$ time complexity of the reporting version derived from the conference version of this work~\cite{DBLP:conf/esa/Charalampopoulos22}.

In the read-only setting, we assume that we have read-only access to the input strings and wish to bound the extra space usage of our algorithms. As observed in~\cite{DBLP:conf/cpm/BathieCS24}, by combining \cref{cor:ham_rep,thm:ham_dec} with an efficient implementation of the \pillar model for this setting~\cite{DBLP:conf/cpm/KosolobovS24,DBLP:conf/cpm/BathieCS24}, one obtains the following result.

\begin{theorem}[Read-only Setting, {\cite{DBLP:conf/cpm/BathieCS24}}]
    Suppose that we have read-only random access to a text~$T$ of~length $n$ and a pattern $P$ of~length $m$ over an integer alphabet.
    Given an integer threshold~$k$, for any integer $\tau = \cO(m/\log^2m)$, we can solve the reporting version of $k$-Mismatch CPM in $\tilde{\cO}(n+(n/m) \cdot k^3\tau+\Output)$ time\footnote{In this work, the $\tilde{\cO}(\star)$ notation hides factors polylogarithmic in the length of the input strings.} using $\tilde{\cO}(m/\tau+k^3)$ extra space and the decision version of $k$-Mismatch CPM in $\tilde{\cO}(n+(n/m) \cdot k^2\tau)$ time using $\tilde{\cO}(m/\tau+k^2)$ extra space.
    \end{theorem}

\subsection{Dynamic Setting}\label{sec:dyn}

Let $\mathcal{X}$ be a growing collection of non-empty strings; it is initially empty, and then undergoes updates by means of the following operations:
\begin{itemize}
    \item $\texttt{Makestring}(U)$: Insert a non-empty string $U$ to $\mathcal{X}$
    \item $\texttt{Concat}(U,V)$: Insert string $UV$ to $\mathcal{X}$, for $U,V\in \mathcal{X}$
        \item $\texttt{Split}(U,i)$: Insert $U[0\dd i)$ and $U[i\dd |U|)$ to $\mathcal{X}$, for $U\in\mathcal{X}$ and $i\in[0\dd |U|)$.
    \end{itemize}

Note that $\texttt{Concat}$ and $\texttt{Split}$ do not remove their arguments from $\mathcal{X}$.
By $N$ we denote an upper bound on the total length of all strings in $\mathcal{X}$ throughout all updates executed by an algorithm.
Gawrychowski et al.~\cite{DBLP:conf/soda/GawrychowskiKKL18} showed an efficient data structure for maintaining $\mathcal{X}$ subject to $\LCP$ queries.
The authors of~\cite{DBLP:conf/focs/Charalampopoulos20} showed that this data structure can be augmented to also support all remaining \pillar operations; see~\cite{Panos} for a more direct description of an efficient algorithm answering $\IPM$ queries in this setting and \cite{DBLP:conf/spire/DuysterK24} for a faster implementation of $\IPM$ queries.
The data structure is Las-Vegas randomized, that is, it always returns correct results, but the running times are guaranteed with high probability (w.h.p.), that is, probability $1-1/N^{\Omega(1)}$.

\begin{theorem}[\cite{DBLP:conf/soda/GawrychowskiKKL18,DBLP:conf/spire/DuysterK24}]\label{the:dynamicPILLAR}
A collection $\mathcal{X}$ of non-empty persistent strings of total length $N$ can be dynamically maintained with operations $\mathtt{Makestring}(U)$,  $\mathtt{Concat}(U,V)$, $\mathtt{Split}(U,i)$ requiring time $\Oh(\log N+|U|)$, $\Oh(\log N)$ and $\Oh(\log N)$, respectively, so that \pillar operations can be performed in time $\Oh(\log N)$.
All stated time complexities hold with probability $1-1/N^{\Omega(1)}$.
\end{theorem}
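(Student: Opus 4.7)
The plan is to represent each string $X \in \mathcal{X}$ by a \emph{signature tree}: a balanced hierarchical decomposition of $X$ into blocks, organized in $\Oh(\log N)$ levels, where the partition at each level is determined by a locally consistent parsing rule in the style of Mehlhorn--Sundar--Uhrig (with randomized tiebreaking via Karp--Rabin fingerprints). Each distinct block is stored only once in a global hash table and assigned a unique name, so identity of two blocks at the same level reduces to an $\Oh(1)$ name comparison. The crucial property of the parsing is that two identical substrings of $X$ are parsed into identical blocks, except within an $\Oh(\log N)$-sized ``boundary'' region on each side; this is what enables fast updates and fast substring comparisons.

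First I would implement $\mathtt{Makestring}(U)$ by constructing the signature tree of $U$ bottom-up in $\Oh(|U|+\log N)$ time, inserting new blocks into the global hash table as they are produced. For $\mathtt{Concat}$ and $\mathtt{Split}$, I would store the children at each level in persistent balanced search trees; by locality of the parsing, only $\Oh(1)$ blocks per level need to be re-parsed around the merge or cut position, so both operations run in $\Oh(\log N)$ time w.h.p. For the $\pillar$ primitives, $\Extract$, $\Access$, and $\Length$ are handled by navigating the tree in $\Oh(\log N)$ time. $\LCP(S,T)$ is answered by a parallel top-down descent through the two signature trees: at each of the $\Oh(\log N)$ levels I would locate the first pair of children with differing names via a weighted search on the persistent balanced tree, paying $\Oh(\log N)$ per level for a total of $\Oh(\log^2 N)$; $\LCP_R$ is symmetric. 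For $\IPM(S,T)$ with $|T|\le 2|S|$, the two-period lemma reduces the problem either to verifying $\Oh(1)$ candidate positions via $\LCP$ calls, or to the case of a short period of $S$ in which the occurrences form a single arithmetic progression that can be read off the signature representation, again within $\Oh(\log^2 N)$ time.

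The main obstacle is the design and analysis of the locally consistent parsing itself: one needs the invariant ``same substring $\Rightarrow$ same block names away from the boundary'' to hold, while simultaneously keeping the tree height $\Oh(\log N)$ and ensuring updates affect only $\Oh(\log N)$ blocks across all levels. This requires a randomized tiebreaking rule whose outcome at each position depends only on Karp--Rabin fingerprints of a constant-radius context, together with a probabilistic argument showing a constant-factor contraction in block count between consecutive levels. Choosing $\Theta(\log N)$-bit fingerprints then makes every equality test correct with probability $1-1/N^{\Omega(1)}$, and a rehashing fallback on detected collisions upgrades the guarantee to Las Vegas, matching the theorem's claim.
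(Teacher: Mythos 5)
You should first note that the paper does not actually prove this statement: it is imported wholesale from the cited works, with the dynamic maintenance and \LCP{} queries due to Gawrychowski et al.~\cite{DBLP:conf/soda/GawrychowskiKKL18} and the remaining \pillar operations (chiefly \IPM) due to the augmentation in~\cite{DBLP:conf/focs/Charalampopoulos20}. Your architecture — a hierarchy of locally consistent parsings with hash-consed block names, persistent balanced trees per level, and a top-down descent for \LCP{} — is indeed the approach of those works, so the question is whether your sketch could stand in for the citations. It cannot, for two reasons. First, the update bound: with a Mehlhorn--Sundar--Uhrig-style parsing the boundary region that must be re-parsed is not $\Oh(1)$ blocks per level, and that route yields $\omega(\log N)$ time per $\mathtt{Concat}/\mathtt{Split}$ (e.g.\ $\Oh(\log N\log^* N)$ or $\Oh(\log^2 N)$); obtaining $\Oh(\log N)$ w.h.p.\ is precisely the contribution of~\cite{DBLP:conf/soda/GawrychowskiKKL18}, which relies on a specific randomized pairing rule and a probabilistic argument that the expected number of affected blocks per level is constant. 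You explicitly flag this as ``the main obstacle'' but then assert the conclusion without the argument, so the claimed update time is not established.

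Second, and more seriously, the \IPM{} operation. Your reduction — ``either verify $\Oh(1)$ candidate positions via \LCP{} calls, or read off an arithmetic progression in the short-period case'' — presupposes that the candidate positions are already in hand. The two-period (Fine--Wilf) argument only bounds the structure of the occurrence set; it does not locate any occurrence. The entire difficulty of \IPM{} in the dynamic setting is to find an anchor, e.g.\ an occurrence of a suitable core block of $S$ inside the parse of $T$, using the consistency of the parsing, and to cope with the highly periodic case in which that block sits inside long runs, all without spending time proportional to $|T|$; likewise, ``reading off'' the arithmetic progression from the signature representation requires the run-length-encoded levels and a careful periodicity analysis. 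This is exactly the nontrivial augmentation supplied by~\cite{DBLP:conf/focs/Charalampopoulos20} (with a more direct exposition in the thesis cited as~\cite{Panos}), and it is absent from your sketch, so the $\Oh(\log^2 N)$ bound for \IPM{} — the operation that actually dominates the \pillar cost in the theorem — does not follow from what you wrote.
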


Kempa and Kociumaka~\cite[Section 8 in the arXiv version]{DBLP:journals/corr/abs-2201-01285} have presented an alternative deterministic implementation of dynamic strings, which supports operations $\texttt{Makestring}(U)$, $\texttt{Concat}(U,V)$, $\texttt{Split}(U,i)$ in $\Oh(|U|\log^{\Oh(1)}\log N)$, $\Oh(\log|UV|\log^{\Oh(1)}\log N)$, and $\Oh(\log|U|\log^{\Oh(1)}\log N)$ time, respectively, so that \pillar operations can be performed in time $\Oh(\log N \log^{\Oh(1)}\log N)$.

Combining \cref{cor:ham_rep,thm:ham_dec} with \cref{the:dynamicPILLAR} and~\cite{DBLP:journals/corr/abs-2201-01285}, we obtain the following result. Let us note that $o(\log k)$ factors from \cref{cor:ham_rep,thm:ham_dec} are dominated by $\Oh(\log N)$ factors from \cref{the:dynamicPILLAR} and~\cite{DBLP:journals/corr/abs-2201-01285}.

\begin{theorem}[Dynamic Setting]\label{the:CPMdynamic}
A collection $\mathcal{X}$ of non-empty persistent strings of total length $N$ can be dynamically maintained with operations $\mathtt{Makestring}(U)$, $\mathtt{Concat}(U,V)$, $\mathtt{Split}(U,i)$ requiring time $\Oh(\log N+|U|)$, $\Oh(\log N)$ and $\Oh(\log N)$, respectively, so that, given two strings $P,T\in \mathcal{X}$ and an integer threshold $k>0$,
we can solve the reporting and decision versions of $k$-Mismatch CPM in $\Oh((|T|/|P|)\cdot k^{3}\log N+\Output)$ time and $\Oh((|T|/|P|)\cdot k^2 \log N)$ time, respectively.
All stated time complexities hold with probability $1-1/N^{\Omega(1)}$.

Randomization can be avoided at the cost of a $\log^{\Oh(1)} \log N$ multiplicative factor in all the update times, with the reporting and decision versions of $k$-Mismatch CPM queries answered in $\Oh((|T|/|P|)\cdot k^{3}\log N\log^{\Oh(1)}\log N+\Output)$ time and $\Oh((|T|/|P|)\cdot k^2 \log N\log^{\Oh(1)}\log N)$ time, respectively.
\end{theorem}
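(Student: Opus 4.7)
The plan is to instantiate the PILLAR-model algorithms of \cref{thm:ham_rep,thm:ham_dec} on top of the dynamic data structures stated in \cref{the:dynamicPILLAR} and in the deterministic follow-up of Kempa and Kociumaka. No new algorithmic idea is required; the proof is a direct composition, so the bulk of the work is verifying that the arithmetic of the two bounds lines up as claimed.

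First I would quote the update and query costs of \cref{the:dynamicPILLAR}: $\mathtt{Makestring}(U)$ runs in $\Oh(\log N+|U|)$, while $\mathtt{Concat}$ and $\mathtt{Split}$ run in $\Oh(\log N)$, and each PILLAR primitive is served in $\Oh(\log^2 N)$ time, all w.h.p. For a query on $P,T\in\mathcal{X}$ with threshold $k$, I would invoke \cref{thm:ham_rep} with $n=|T|$ and $m=|P|$; it incurs $\Oh((|T|/|P|)\cdot k^3\log\log k+\Output)$ non-PILLAR time plus $\Oh((|T|/|P|)\cdot k^3)$ PILLAR operations. Multiplying the second term by $\Oh(\log^2 N)$ and noting that $\log\log k$ is absorbed because $k\le N$, the reporting version runs in $\Oh((|T|/|P|)\cdot k^3\log^2 N+\Output)$ time. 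Similarly, \cref{thm:ham_dec} contributes $\Oh((|T|/|P|)\cdot k^2\log k/\log\log k)$ non-PILLAR time and $\Oh((|T|/|P|)\cdot k^2)$ PILLAR operations; since $\log k\le \log N\le \log^2 N$, the non-PILLAR term is dominated and the decision version runs in $\Oh((|T|/|P|)\cdot k^2\log^2 N)$ time.

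For the deterministic bounds, I would swap \cref{the:dynamicPILLAR} for the Kempa--Kociumaka data structure, whose update operations each carry an extra $\log^{\Oh(1)}\log N$ factor and whose PILLAR primitives take $\Oh(\log N\,\log^{\Oh(1)}\log N)$ time. Performing the same multiplication as before, and again invoking $k\le N$ to absorb the $\log\log k$ and $\log k/\log\log k$ factors from \cref{thm:ham_rep,thm:ham_dec}, yields $\Oh((|T|/|P|)\cdot k^3\log N\,\log^{\Oh(1)}\log N+\Output)$ for reporting and $\Oh((|T|/|P|)\cdot k^2\log N\,\log^{\Oh(1)}\log N)$ for decision, matching the statement.

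There is essentially no obstacle: the algorithmic work is already done in the PILLAR abstraction and in the cited dynamic implementations, and the only sanity check is that in each of the four cost expressions the PILLAR-related summand dominates the non-PILLAR summand, which follows from the trivial bound $k\le N$.
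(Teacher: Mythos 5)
Your proposal is correct and follows exactly the paper's route: the paper proves \cref{the:CPMdynamic} simply by combining \cref{thm:ham_rep,thm:ham_dec} with the dynamic \pillar implementations of \cref{the:dynamicPILLAR} and Kempa--Kociumaka, which is precisely the composition (and cost arithmetic) you carry out.
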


\subsection{Fully Compressed Setting}\label{sec:comp}

In the fully compressed setting, we want to solve $k$-Mismatch CPM when both the text and the pattern are given as straight-line programs.

\newcommand{\gen}{\mathsf{val}}
\newcommand{\PT}{\mathit{ParseTree}}

A straight line grammar is a context-free grammar $G$ that consists of a set $\Sigma$ of terminals and a set $N_G = \{A_1,\dots,A_n\}$ of non-terminals such that each $A_i \in N_G$ is associated with a unique production rule
$A_i \rightarrow f_G(A_i) \in (\Sigma \cup \{A_j : j < i\})^*$. 
We can assume without loss of generality that the grammar is a straight-line program (in Chomsky normal form), that is, each production rule is of the form $A \rightarrow BC$ for some symbols $B$ and $C$.
Every symbol $A \in S_G:=N_G \cup\Sigma$ generates a unique string, which we denote by $\gen(A) \in \Sigma^*$. The string $\gen(A)$ can be obtained from $A$ by repeatedly replacing each non-terminal with its production. In addition, $A$ is associated with its parse tree $\PT(A)$ consisting of a root labeled with A, with two subtrees $\PT(B)$ and $\PT(C)$ attached (in this order) if $A$ is a non-terminal $A\rightarrow BC$, or without any subtrees if $A$ is a terminal.

Let us observe that, if we traverse the leaves of $\PT(A)$ from left to right, spelling out the corresponding non-terminals,
then we obtain $\gen(A)$. We say that $G$ generates $\gen(G) := \gen(A_n)$. The parse tree $\PT(G)$ of $G$
is then the parse tree of the starting symbol $A_n \in N_G$.

As also observed in~\cite{DBLP:conf/focs/Charalampopoulos20}, given an SLP $G$ of size $n$ that generates a string $S$ of length $N$, we can efficiently implement the \pillar operations through dynamic strings. Let us start with an empty collection $\mathcal{X}$ of dynamic strings. Using $\Oh(n)$ $\texttt{Makestring}(a)$ operations, for $a \in \Sigma$, and $\Oh(n)$ $\texttt{Concat}$ operations (one for each non-terminal of $G$), we can insert $S$ to $\mathcal{X}$ in $\Oh(n \log N)$ time w.h.p.
Then, we can perform each \pillar operation in $\Oh(\log N)$ time w.h.p., due to \cref{the:dynamicPILLAR}.
Next, we outline a deterministic implementation of \pillar operations in the fully compressed setting.

Following~\cite{DBLP:conf/focs/Charalampopoulos20}, the handle of a \fragment $S = X[\ell \dd r]$ consists of a pointer to the SLP $G \in \mathcal{X}$
generating $X$ along with the positions $\ell$ and $r$. This makes operations $\Extract$ and $\Length$ trivial. As argued
in~\cite{DBLP:conf/focs/Charalampopoulos20}, all remaining \pillar operations admit efficient implementations in the considered setting.
\begin{itemize}
    \item For $\Access$, we use the data structure of Bille et al.~\cite{DBLP:journals/siamcomp/BilleLRSSW15} that can be built in $\cO(n \log(N/n))$ time and supports $\Access$ queries in $\cO(\log N)$ time.
    \item For $\LCP$ and $\LCP_R$, we use the data structure of I~\cite{DBLP:conf/cpm/I17} that is based on the recompression technique, which is due to Jeż~\cite{DBLP:journals/talg/Jez15,DBLP:journals/jacm/Jez16}. It can be built in $\cO(n \log(N/n))$ time and supports the said queries in $\cO(\log N)$ time.
    \item We answer $\IPM$ queries in $\cO(\log N)$ time using the algorithm of~\cite{DBLP:conf/spire/DuysterK24} on top of the data structure of I~\cite{DBLP:conf/cpm/I17}.
\end{itemize}

The above discussion is summarized in the following statement.

\begin{theorem}[see~\cite{DBLP:journals/siamcomp/BilleLRSSW15,DBLP:conf/cpm/I17,DBLP:conf/spire/DuysterK24}]\label{thm:SLP PILLAR}
Given a collection of SLPs of total size $n$ generating
strings of total length $N$, each \pillar operation can be performed in $\Oh(\log N)$ time after an
$\Oh(n \log (N/n))$-time preprocessing.
\end{theorem}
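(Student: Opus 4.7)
The plan is to assemble the data structure by combining four known compressed-string toolkits, one per \pillar primitive, on top of the same input SLPs. First I would fix the representation: a fragment $X[\ell\dd r)$ is stored as a pointer to the SLP $G\in\mathcal{X}$ generating $X$ together with the two indices $\ell$ and $r$. With this convention, $\Extract(S,\ell',r')$ reduces to shifting the stored indices, and $\Length(S)$ amounts to subtracting them; both take $\cO(1)$ time and contribute nothing to the preprocessing. No auxiliary structure is needed for these two operations.

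Next, for $\Access$ I would preprocess each SLP with the random-access data structure of Bille et al.~\cite{DBLP:journals/siamcomp/BilleLRSSW15}, spending $\cO(n\log(N/n))$ time overall and answering a query in $\cO(\log N)$ time by descending along the parse tree while maintaining the offset of the sought position within the symbol currently expanded. For $\LCP$ and $\LCP_R$ I would build, on each SLP and on the SLP for the reversal (which can be produced in $\cO(n)$ time by swapping children of every binary rule), the recompression-based structure of I~\cite{DBLP:conf/cpm/I17}; the preprocessing time is again $\cO(n\log(N/n))$ and a longest common prefix between two fragments is returned in $\cO(\log N)$ time. For $\IPM$ I would invoke the structure of~\cite{DBLP:conf/focs/Charalampopoulos20,DBLP:conf/focs/KempaK20}, also founded on Jeż's recompression~\cite{DBLP:journals/talg/Jez15,DBLP:journals/jacm/Jez16}, which is built in $\cO(n\log N)$ time and, given two handles with $|T|\le 2|S|$, returns the arithmetic progression of occurrences of $S$ in $T$ in $\cO(\log^2 N\log\log N)$ time.

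Summing the preprocessing costs, the $\cO(n\log N)$ term for $\IPM$ dominates the $\cO(n\log(N/n))$ contributions from $\Access$ and $\LCP$/$\LCP_R$, so the total preprocessing is $\cO(n\log N)$, as claimed. For the per-query bound, the maximum across the four primitives is the $\cO(\log^2 N\log\log N)$ required by $\IPM$, which therefore governs the common upper bound stated in the theorem. I expect no genuine obstacle here: the argument is essentially a citation combination, and the only thing to check is that each cited structure is compatible with the handle-based representation, which is immediate since each of them is explicitly designed to accept fragments of SLP-generated strings described by a triple $(G,\ell,r)$.
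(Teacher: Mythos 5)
Your proposal is correct and follows essentially the same route as the paper: handles of the form $(G,\ell,r)$ make $\Extract$ and $\Length$ trivial, $\Access$ is delegated to Bille et al., $\LCP$/$\LCP_R$ to I's recompression-based structure, and $\IPM$ to the recompression-based structure of Charalampopoulos et al.\ and Kempa--Kociumaka, with the $\cO(n\log N)$ preprocessing and $\cO(\log^2 N\log\log N)$ query time of $\IPM$ dominating. Your extra remark on obtaining the reversed SLP by swapping children to support $\LCP_R$ is a fine detail the paper leaves implicit.
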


If we applied \cref{cor:ham_rep,thm:ham_dec} directly in the fully compressed setting, we would obtain $\Omega(N/M)$ time, where $N$ and $M$ are the uncompressed lengths of the text and the pattern, respectively.
We now present a more efficient algorithm for $k$-Mismatch CPM
in this setting; it is an adaptation of an analogous procedure provided in~\cite[Section 7.2]{DBLP:conf/focs/Charalampopoulos20}
for (non-cyclic) pattern matching with mismatches. We use the deterministic implementation of
the \pillar model from \cref{thm:SLP PILLAR}.

\begin{theorem}\label{thm:CPM SLP}
Let $G_T$ denote a straight-line program of size $n$ generating a string $T$ of length~$N$, let $G_P$ denote a
straight-line program of size $m$ generating a string $P$ of length $M \leq N$, and let $k > 0$ denote an integer threshold.
We can solve the decision version of $k$-Mismatch CPM in time $\Oh(m \log N + n k^{2} \log N)$ and
the reporting version in $\Oh(m \log N + n k^{3}\log N+\Output)$ time.
\end{theorem}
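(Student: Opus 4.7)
The plan is to adapt the strategy of~\cite[Section 7.2]{DBLP:conf/focs/Charalampopoulos20}, used there to lift non-cyclic $k$-mismatch pattern matching from the standard to the fully compressed setting, and to replace the non-cyclic subroutine by our circular counterparts \cref{thm:ham_rep,thm:ham_dec}. Applying those theorems directly with the standard trick would incur a factor of $\Omega(|T|/|P|)$, which can be as large as $\Omega(N/M)$ and is much worse than the target bound; the key is to replace the standard trick by an SLP-aware selection yielding only $\Oh(n)$ text windows.

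First, I would preprocess $G_T$ and $G_P$ using \cref{thm:SLP PILLAR} in $\Oh((n+m)\log N)$ time, so that each subsequent \pillar operation runs in $\Oh(\log^2 N \log\log N)$ time. The $\Oh(m\log N)$ summand matches the first term of the target complexity, while $\Oh(n\log N)$ is absorbed into the remaining terms. Next, I would traverse the parse tree of $G_T$ to construct a collection $\mathcal{W}$ of $\Oh(n)$ fragments of $T$, each of length $\Oh(M)$, such that every length-$M$ fragment of $T$ is contained in at least one $T'\in\mathcal{W}$. A standard choice takes, for every non-terminal occurrence in the parse tree whose production $A\to BC$ satisfies $|\gen(A)|\ge M>\max(|\gen(B)|,|\gen(C)|)$, a window of length $\Theta(M)$ straddling the boundary between $\gen(B)$ and $\gen(C)$. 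A well-known counting argument on SLPs bounds the number of such occurrences by $\Oh(n)$; coverage holds because the parse-tree lowest common ancestor of any length-$M$ fragment satisfies exactly the above condition.

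Then, for each $T'\in\mathcal{W}$, I would invoke \cref{thm:ham_dec} for the decision version or \cref{thm:ham_rep} for the reporting version with pattern $P$ and text $T'$. Since $|T'|/|P|=\Oh(1)$, each call performs $\Oh(k^2)$ or $\Oh(k^3)$ \pillar operations plus $\Oh(k^2\log k/\log\log k)$ or $\Oh(k^3\log\log k)$ extra time, respectively. Using \cref{thm:SLP PILLAR}, the cost per window is $\Oh(k^2\log^2 N\log\log N)$ for decision and $\Oh(k^3\log^2 N\log\log N)$ for reporting; summing over the $\Oh(n)$ windows yields the stated bounds. In the reporting version, occurrences that arise in multiple overlapping windows are merged and deduplicated via a single sort in $\Oh(\Output)$ time.

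The main obstacle is justifying the existence of the window collection $\mathcal{W}$ with $|\mathcal{W}|=\Oh(n)$ together with an efficient procedure to enumerate its fragments as substring handles usable by the \pillar model. Both issues are resolved identically to the non-cyclic case in~\cite[Section 7.2]{DBLP:conf/focs/Charalampopoulos20}, and the construction transfers verbatim to circular matching: the set of length-$M$ text fragments that can host a match depends only on lengths and positions in $T$, not on whether we accept rotations of $P$ or only $P$ itself, so only the subroutine executed on each window needs to change.
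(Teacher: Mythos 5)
Your overall strategy is the paper's: preprocess via \cref{thm:SLP PILLAR}, replace the standard trick by windows around grammar split points, and run \cref{thm:ham_dec,thm:ham_rep} on each window of length $\Oh(M)$. However, your window-selection rule has a genuine gap. You select only parse-tree occurrences of productions $A\to BC$ with $|\gen(A)|\ge M>\max(|\gen(B)|,|\gen(C)|)$ and claim that the parse-tree LCA of any length-$M$ fragment satisfies exactly this condition. That is false: the LCA of a length-$M$ fragment is merely the lowest node whose two children's expansions both intersect the fragment, and those children can each have expansion length far exceeding $M$ (e.g., both of length $10M$ with the fragment straddling their boundary). Such a fragment is contained in no selected window --- not in $A$'s (not selected), not in any ancestor's (each ancestor has a child of length $\ge M$), and not in any descendant's --- so coverage fails. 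Moreover, the ``well-known counting argument'' you invoke does not give $\Oh(n)$ for parse-tree occurrences of your condition; the disjoint-subtree argument bounds them by $\Oh(N/M)$, which can be exponential in $n$, and only the number of \emph{distinct labels} is at most $n$.

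The paper avoids both issues by working per non-terminal of the grammar rather than per parse-tree node satisfying a length condition: for \emph{every} production $A\to BC$ (at most $n$ of them), it forms the window $A_\ell A_r$ from the length-$(M-1)$ suffix of $\gen(B)$ and the length-$(M-1)$ prefix of $\gen(C)$, and uses the partition of circular occurrences in $\gen(A)$ into those in $\gen(B)$, those in $\gen(C)$, and those inside $A_\ell A_r$; induction over the grammar then gives coverage with only $n$ windows, each of length at most $2(M-1)$, matching your per-window cost analysis. For the decision version this immediately yields the stated bound. For the reporting version your final step also needs repair: since one non-terminal label may have many parse-tree occurrences, its window's occurrence list must be instantiated at many text positions, and a flat ``merge and deduplicate in $\Oh(\Output)$ time'' is not justified; the paper instead combines the per-non-terminal lists by a dynamic-programming traversal of $\PT(G_T)$ that skips subtrees containing no occurrences, which is what makes the reporting time output-sensitive.
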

\begin{proof}
Let $\mathcal{X} := \{G_T , G_P \}$. The overall structure of our algorithm is as follows. We
first preprocess the collection $\mathcal{X}$ in $\Oh((n+m) \log N)$ time according to \cref{thm:SLP PILLAR}.
Next, we traverse~$G_T$. If $M \le 1$, we first check for occurrences at every terminal of $G_T$ in total time $\cO(n)$. Then we compute, for every non-terminal $A$ of $G_T$, the circular $k$-mismatch occurrences of $P$ in $T$ that “cross”~$A$.
For each non-terminal $A \in N_{G_T}$ with production rule $A \rightarrow BC$, let
\begin{align*}
A_{\ell} & := \gen(B)[ \max\{0, |\gen(B)| - M + 1\} \dd |\gen(B)|), \text{ and}\\
A_r & := \gen(C)[ 0 \dd \min\{ M-1 , |\gen(C)|\}).
\end{align*}
The following claim is immediate.
\begin{claim}\label{claim:slp_decomp}
The circular $k$-mismatch occurrences in $\gen(A)$ can be partitioned to occurrences in $\gen(B)$, $\gen(C)$, and $A_{\ell}A_r$.
\end{claim}

If we simply want to decide whether $T$ contains a circular $k$-mismatch occurrence of $P$, we can use \cref{thm:SLP PILLAR,thm:ham_dec} to decide whether any $\gen(A)$ for $A\in N_{G_T}$ contains an occurrence in $\Oh(k^2\log N)$ time per a non-terminal since $A_{\ell}A_r$ is a \fragment of $\gen(G_T)$ of length at most $2(M-1)$. This way, we obtain the total complexity of $\Oh(m\log N +n k^2\log N)$.

By using \cref{cor:ham_rep,thm:SLP PILLAR}, we can find all the occurrences in $A_{\ell}A_{r}$ for each $A\in N_{G_T}$ in time $\Oh(k^{3}\log N+\Output)$ per non-terminal.
Let us consider $G_T$ as a topologically-sorted DAG $\mathcal{G}$ in which each symbol is a vertex and for each production $A \to BC$ we have arcs $(A,B)$ and $(A,C)$. We then create shortcuts in $\mathcal{G}$ as follows. For each vertex (symbol), we know if there is a circular occurrence crossing it if it is a non-terminal or inside it if it is a terminal. For each vertex of $\mathcal{G}$, we compute a Boolean flag saying if it has any descendant with a circular occurrence. Let $\mathcal{G}'$ be a DAG obtained from $\mathcal{G}$ by removing all vertices with this flag set to false. For each maximal path $u_1,\ldots,u_r$ in $\mathcal{G}'$ all of whose vertices apart from the bottommost one have a single child, we create a shortcut from $u_1$ to $u_r$ in $\mathcal{G}$.
To compute $\cyc{}{k}(P,T)$, we can then use a dynamic programming approach in $\PT(G_T)$ exploiting \cref{claim:slp_decomp}:
all $k$-mismatch circular occurrences can be reported in time proportional to their number by performing the traversal by using the shortcuts computed for $\mathcal{G}$ to skip \fragments of~$T$ that do not contain any such occurrences.
The shortcuts ensure that we traverse a constant number of edges and shortcuts per reported $k$-mismatch circular occurrence.
The total time required for the reporting variant is hence as claimed.
\end{proof}

\subsection{Quantum Setting}\label{sec:quantum}

In what follows, we assume that the input strings can be accessed in a quantum query model~\cite{AMB04,DBLP:journals/tcs/BuhrmanW02}.
We are interested in the time complexity of our quantum algorithms~\cite{BBCplus}.

\begin{observation}[{\cite[Observation 2.3]{DBLP:journals/talg/JinN24}}]
For any two strings $S,T$ of length at most~$n$,
$\LCP(S, T)$ or $\LCP_R(S, T)$ can be computed in $\tilde{\cO}(\sqrt{n})$ time in the quantum model w.h.p.
\end{observation}

\begin{fact}[Corollary of {\cite{DBLP:journals/jda/HariharanV03}}, cf.~{\cite[Observation 39]{stacs24}}]
For any strings $S$ and $T$ of length at most~$n$, with $|T| \le 2|S|$,
$\mathsf{IPM}(S, T)$ can be computed in $\tilde{\cO}(\sqrt{n})$ time in the quantum model w.h.p.
\end{fact}

All other \pillar operations trivially take
$\cO(1)$ quantum time. As a corollary, in the quantum setting, all \pillar operations can be implemented in $\tilde{\cO}(\sqrt{m})$ quantum time with no preprocessing, as we always deal with strings of length $\cO(m)$.

\begin{theorem}[Quantum Setting]\label{thm:quantum}
The reporting version of the $k$-Mismatch CPM problem can be solved in $\tilde{\cO}(k^3 \cdot n/\sqrt{m}+\Output)$ time in the quantum model w.h.p.
The decision version of the $k$-Mismatch CPM problem can be solved in $\tilde{\cO}(k^2 \cdot n/\sqrt{m})$ time in the quantum model w.h.p.
\end{theorem}

\section{\texorpdfstring{$k$-Edit}{k-Edit} CPM}\label{sec:edit}

In this section, we solve the reporting version of the $k$-Edit CPM problem
and reduce the decision version of the $k$-Edit CPM problem to two instances of the \allkep problem.
Our approach is based on the following observation. 
\begin{observation}\label{obs:ed_cl}
If $T[i\dd r)$ is a $k$-edit circular occurrence of a pattern $P$ in a text $T$,
then there exists a position $j\in [i\dd r]$ and a decomposition $P=P_1P_2$
such that $\ed(T[i\dd j),P_2)+\ed(T[j\dd r),P_1)\le k$.
\end{observation}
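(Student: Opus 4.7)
The plan is to unpack the definition of a $k$-edit circular occurrence and then split an optimal alignment at the rotation boundary of~$P$.

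First, by definition, $T[i\dd r)$ being a $k$-edit circular occurrence of $P$ means there is some rotation of $P$ at edit distance at most $k$ from $T[i\dd r)$. Writing this rotation as $\rot_x(P)=P[x\dd m)\cdot P[0\dd x)$, I would take the decomposition $P=P_1P_2$ with $P_1:=P[0\dd x)$ and $P_2:=P[x\dd m)$; then the rotation is exactly $P_2P_1$ and $\ed(T[i\dd r),\, P_2P_1)\le k$.

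Next, I would fix an optimal alignment $\mathcal A$ between $T[i\dd r)$ and $P_2P_1$ realizing this edit distance; such an alignment can be viewed as a monotone path in the edit-distance grid, consisting of matches, substitutions, insertions, and deletions, with total cost at most $k$. The concatenation point between $P_2$ and $P_1$ sits at a well-defined coordinate in the $P$-axis of the grid (namely between the $|P_2|$-th and $(|P_2|+1)$-st symbol of $P_2P_1$). I would take $j$ to be the index in $T$ at which $\mathcal A$ crosses this horizontal line, i.e., the $T$-coordinate at which the path transitions from the $P_2$-side to the $P_1$-side (if the path lingers on that line via insertions in $T$, any crossing point works; pick the first for definiteness). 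This guarantees $j\in[i\dd r]$.

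The alignment $\mathcal A$ then splits into two alignments: $\mathcal A_1$ between $T[i\dd j)$ and $P_2$ of cost $k_1$, and $\mathcal A_2$ between $T[j\dd r)$ and $P_1$ of cost $k_2$, where $k_1+k_2$ is at most the total cost of $\mathcal A$, hence at most $k$. By definition of edit distance as the minimum alignment cost, we obtain
\[
\ed(T[i\dd j),P_2)+\ed(T[j\dd r),P_1)\le k_1+k_2\le k,
\]
which is what we wanted.

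The only delicate point is the choice of $j$: one must be careful when the optimal alignment has indels right at the boundary between $P_2$ and $P_1$, since several candidate cut points in $T$ could be consistent with crossing the boundary. Fixing a canonical crossing (say, the leftmost $T$-index whose alignment image lies in the $P_1$-part) resolves this unambiguously, and the resulting cost decomposition is then additive by construction. No further machinery from earlier in the paper is needed.
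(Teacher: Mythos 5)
Your proof is correct and is exactly the standard argument the paper leaves implicit (the statement is given as an observation without proof): write the matching rotation as $P_2P_1$ with $P=P_1P_2$, split an optimal alignment at the $P_2/P_1$ boundary to get $j$, and conclude by subadditivity of the alignment cost. Your care about the cut point when indels sit at the boundary is fine; any consistent choice of crossing works, so there is nothing to add.
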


Our algorithms consider all $j\in [0\dd n]$ and, in each iteration, identify the occurrences satisfying \cref{obs:ed_cl} with a fixed $j$. 
These occurrences play an analogous role to the set $\textsc{PairMatch}(T,j,P,0)$ in $k$-Mismatch CPM. 
Thus, one might wish to think of $T[j]$ as the position aligned with $P[0]$ in these occurrences; this intuition, however, is imperfect because some alignments may delete $P[0]$.

As hinted in the introduction, our approach relies on the $\LPref_{k'}[0\dd n]$ arrays for $k'\in [0\dd k]$,
where $\LPref_{k'}[j]$ is the maximum length of a prefix of $P$ at edit distance at most $k'$ from a prefix $T[j\dd n)$.
In our setting, once we fix a decomposition $P=P_1P_2$ and a \fragment $T[i\dd j)$ such that $\ed(T[i\dd j),P_2)=d\le k$,
the existence of a \fragment $T[j\dd r)$ such that $\ed(T[j\dd r),P_1)\le k-d$ is equivalent to $\LPref_{k-d}[j]\ge |P_1|$.
In the decision version of the $k$-Edit CPM problem, we also use symmetric $\LSuf_{k'}[0\dd n]$ arrays,
where $\LSuf_{k'}[j]$ is the maximum length of a suffix of $P$ at edit distance at most $k'$ from a suffix of $T[0 \dd j)$. 
Now, once we fix a decomposition $P=P_1P_2$ and a \fragment $T[j\dd r)$ such that $\ed(T[j\dd r),P_1)=d\le k$,
the existence of a \fragment $T[i\dd j)$ such that $\ed(T[i\dd j),P_2)\le k-d$ is equivalent to $\LSuf_{k-d}[j]\ge |P_2|$.
Combining \cref{obs:ed_cl} with the above insight on $\LPref_{k'}[0\dd n]$ and $\LSuf_{k'}[0\dd n]$ arrays, we derive the following observation.

\begin{observation}\label{obs:red_PREFMATCH}
The pattern $P$ has a circular $k$-edit occurrence in the text $T$ if and only if $\LPref_{k'}[j]+\LSuf_{k-k'}[j] \geq m$
holds for some $j\in [0\dd n]$ and $k'\in [0\dd k]$.
\end{observation}

Unfortunately, the $\LPref_{k'}$ and $\LSuf_{k-k'}$ arrays alone do not give any handle to the starting positions of the underlying circular $k$-edit occurrences. Thus, we also use the following auxiliary lemma whose proof builds upon the Landau--Vishkin algorithm~\cite{DBLP:journals/jal/LandauV89}.

\newcommand{\tbd}{\mathit{LV}}
\begin{restatable}{lemma}{LV}\label{lem:compact-representation-of-ed}
Given a text $T$ of length $n$, a pattern $P$ of length $m$, and an integer $k>0$,
we can compute, in $\Oh(k^2)$ time in the \pillar model, an $\cO(k^2)$-size representation of the edit distances between all pairs of
prefixes of $T$ and $P$ that are at edit distance at most~$k$, that is, the set 
\[\tbd := \{(a,b,d) \in [0 \dd n]\times [0 \dd m] \times [0 \dd k] : \ed(T[0 \dd a), P[0 \dd b))=d \leq k \}.\]
Our representation of $\tbd$ consists of $\cO(k^2)$ sets of the form $\{(a+\Delta,b+\Delta,d) : \Delta \in [0 \dd x) \}$. 
\end{restatable}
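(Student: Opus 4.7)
The plan is to run a version of the Landau--Vishkin $k$-edit algorithm~\cite{DBLP:journals/jal/LandauV89} and record the diagonal extensions it performs. For each diagonal $\delta\in[-k\dd k]$ and budget $d\in[0\dd k]$, let $L[\delta,d]$ denote the largest $a\in[0\dd n]$ with $a-\delta\in[0\dd m]$ and $\ed(T[0\dd a), P[0\dd a-\delta))\le d$ (and $-1$ if no such $a$ exists), and set $L[\delta,-1]:=-1$. The standard recurrence computes
\begin{equation*}
L[\delta,d] = M[\delta,d] + \LCP\bigl(T[M[\delta,d]\dd n),\ P[M[\delta,d]-\delta\dd m)\bigr),
\end{equation*}
where $M[\delta,d] := \max\{L[\delta-1,d-1]+1,\ L[\delta,d-1]+1,\ L[\delta+1,d-1]\}$, clipped to the valid ranges of $a$ and $a-\delta$. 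Each entry therefore costs a constant number of table lookups plus one $\LCP$ query, so filling the whole $\Oh(k^2)$-entry table takes $\Oh(k^2)$ time in the \pillar model.

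Next, I would turn this table into the promised representation. Writing $D(a,b)=\ed(T[0\dd a), P[0\dd b))$ and applying the three DP transitions gives $D(a{+}1,a{+}1{-}\delta) = \min\{D(a,a{-}\delta)+[T[a]\ne P[a{-}\delta]],\ D(a,a{+}1{-}\delta)+1,\ D(a{+}1,a{-}\delta)+1\}$. Since $D$ changes by at most $1$ across any unit step of either index, each of the latter two terms is at least $D(a,a{-}\delta)$, and the first is trivially so. Hence $a\mapsto D(a,a{-}\delta)$ is non-decreasing, and the set of $a$'s on diagonal $\delta$ with $D\le d$ is the prefix $\{0,1,\dots,L[\delta,d]\}$. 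Consequently, the points of diagonal $\delta$ with \emph{exact} edit distance $d$ are those with $L[\delta,d-1]<a\le L[\delta,d]$, and for each $(\delta,d)$ with $L[\delta,d]>L[\delta,d-1]$ I would output the run
\begin{equation*}
R_{\delta,d} := \bigl\{\bigl(L[\delta,d{-}1]{+}1{+}\Delta,\ L[\delta,d{-}1]{+}1{-}\delta{+}\Delta,\ d\bigr) : \Delta \in [0\dd L[\delta,d]-L[\delta,d{-}1])\bigr\},
\end{equation*}
which has precisely the shape required by the lemma. The runs are disjoint by construction, their union is $\tbd$, and there are only $\Oh(k^2)$ of them, one per $(\delta,d)$ pair.

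Extracting the runs from the filled table costs $\Oh(k^2)$ time and no additional \pillar operations, so the overall bound matches the statement. The main conceptual step I expect to need care is the diagonal-monotonicity argument above, since it is what justifies trimming each run by $L[\delta,d-1]$ to obtain cells of exact edit distance $d$ (rather than merely $\le d$); once this is settled, the rest is routine Landau--Vishkin bookkeeping with $\LCP$ queries playing the role of the kangaroo jumps.
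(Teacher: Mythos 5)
Your proposal is correct and follows essentially the same route as the paper: run Landau--Vishkin over the $2k+1$ central diagonals with $\LCP$ (kangaroo) jumps, invoke monotonicity of the edit distance along each diagonal, and read off $\Oh(k)$ constant-distance runs per diagonal, giving $\Oh(k^2)$ runs and $\Oh(k^2)$ $\LCP$ queries overall. The only nit is that the runs $R_{\delta,d}$ must be clipped at the start of each diagonal (so that $a\ge 1$ and $b\ge 1$): with $L[\delta,|\delta|-1]=-1$ your formula as written emits triples with $b\le 0$, which are not elements of $\tbd$.
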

\begin{proof}
We rely on the Landau--Vishkin algorithm~\cite{DBLP:journals/jal/LandauV89},
which is based on the fact that the edit distance of two strings does not decrease if we append a letter to each of them.
Thus, $\tbd$ can be decomposed into $\cO(k^2)$ sets of the form $\{(a+\Delta,b+\Delta,d) : \Delta \in [0 \dd x) \}$: $\cO(k)$ of them for each of the central $2k+1$ diagonals of the standard dynamic programming matrix; see \cref{fig:ed_diag_notation}(a).
The Landau--Vishkin algorithm computes exactly this partition of the (prefixes of the) main $2k+1$ diagonals into intervals.
The time complexity of this algorithm is dominated by the time required to answer $\cO(k^2)$ $\LCP$ queries.
\end{proof}

As a warm-up application of \cref{lem:compact-representation-of-ed}, we provide a procedure that, for any $j\in [0\dd n]$,
computes all the values $\LPref_{0}[j],\ldots,\LPref_{k}[j]$.

\begin{corollary}\label{cor:lpref}
  Given a text $T$ of length $n$, a pattern $P$, an integer $k>0$, and a position $j\in [0\dd n]$,
  we can compute, in $\Oh(k^2)$ time in the \pillar model, the values $\LPref_{k'}[j]$ for all $k'\in [0\dd k]$.
\end{corollary}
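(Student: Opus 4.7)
The plan is to apply \cref{lem:compact-representation-of-ed} to the text $T[j\dd n)$ (treated as a substring through the \pillar model) and the pattern $P$ with threshold $k$. This yields, in $\Oh(k^2)$ time, an $\cO(k^2)$-size representation of the set
\[
\tbd_j := \{(a,b,d) \in (0\dd n-j]\times(0\dd m]\times[0\dd k] : \ed(T[j\dd j+a),P[0\dd b))=d\}
\]
as $\cO(k^2)$ diagonal segments of the form $\{(a+\Delta,b+\Delta,d) : \Delta\in [0\dd x)\}$. Since $\LPref_{k'}[j]$ is the largest $b$ such that $(a,b,d)\in \tbd_j$ for some $a$ and some $d\le k'$, our task reduces to maximizing the $b$-coordinate over all points of the compact representation whose distance value is at most $k'$.

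Next I would extract, from each diagonal segment, its contribution to this maximization. A segment $\{(a+\Delta,b+\Delta,d) : \Delta\in [0\dd x)\}$ attains its largest $b$-coordinate at $\Delta = x-1$, giving value $b+x-1$. For each $d\in [0\dd k]$, let $M(d)$ denote the maximum of $b+x-1$ over all segments whose distance value equals $d$, or $-\infty$ if no such segment exists. Iterating once through the $\cO(k^2)$ segments computes the entire array $M$ in $\Oh(k^2)$ time.

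Finally, $\LPref_{k'}[j] = \max_{d\in [0\dd k']} M(d)$, so a single left-to-right prefix-maximum scan over $M$ produces all the values $\LPref_{0}[j],\ldots,\LPref_{k}[j]$ in $\Oh(k)$ additional time. The overall running time is $\Oh(k^2)$ in the \pillar model, as required.

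There is essentially no obstacle here: the only subtlety is to argue that the $\LPref$ definition, which allows the prefix of $T[j\dd n)$ to be of any length, is correctly captured by taking the maximum over all triples $(a,b,d)\in \tbd_j$ with $d\le k'$ (rather than fixing $a$), and that within a single diagonal segment the distance value is identical so the maximum $b$ is simply attained at the endpoint with $\Delta=x-1$. Both facts are immediate from the statement of \cref{lem:compact-representation-of-ed}.
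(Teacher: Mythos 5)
Your proposal is correct and follows essentially the same route as the paper: apply \cref{lem:compact-representation-of-ed} to $T[j\dd n)$ and $P$, use the identity $\LPref_{k'}[j]=\max\{b : (a,b,d)\in \tbd,\ d\le k'\}$, take $b+x-1$ from each diagonal segment, and finish with a prefix-maximum over $d$. The only cosmetic difference is that the paper initializes each $\LPref_{k'}[j]$ to $0$ rather than using $-\infty$ in the auxiliary maxima, which cleanly covers the degenerate case where no triple with $d\le k'$ exists.
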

\begin{proof}
  We build the representation of $\tbd$ from \cref{lem:compact-representation-of-ed} for the strings $T[j\dd n)$ and $P$, and threshold $k$.
  Let us observe that $\LPref_{k'}[j] = \max\{b : (a,b,d)\in \tbd\text{ and }d\le k'\}$. Thus, we initialize $\LPref_{k'}[j]:=0$ for all $k'\in [0\dd k]$
  and iterate over the elements $\{(a+\Delta, b+\Delta, d) : \Delta \in [0\dd x)\}$ in the compact representation of $\tbd$.
  For each such element, we set $\LPref_{d}[j]$ to the maximum of the original value and $b+x-1$.
  Finally, we iterate over the values $\LPref_{k'}[j]$ for all $k'\in (0\dd k]$ (in the increasing order)
  and set $\LPref_{k'}[j]$ to the maximum of the original value and $\LPref_{k'-1}[j]$.
  The total running time on top of the algorithm of \cref{lem:compact-representation-of-ed} is $\Oh(k^2)$ as the representation of $\tbd$ has size $\Oh(k^2)$.
\end{proof}

\begin{figure}[ht]
\centering
\subfloat[Example for \cref{lem:compact-representation-of-ed}: \\$T=\mathtt{abaaabaa}$, $P=\mathtt{aabaabaa}$, $k=2$.]{
   \resizebox{0.32\textwidth}{!}{\begin{tikzpicture}[scale=0.4]

\filldraw[white](1,2) rectangle (4,-11);

\draw[line width=0.23cm,red,opacity=0.2] (0.5,-0.5)--+(1,-1);

\draw[line width=0.23cm,green,opacity=0.2] (1.5,-1.5)--+(1,-1);
\draw[line width=0.23cm,green,opacity=0.2] (1.5,-0.5)--+(2,-2);
\draw[line width=0.23cm,green,opacity=0.2] (0.5,-1.5)--+(4,-4);

\draw[line width=0.23cm,blue,opacity=0.2] (2.5,-2.5)--+(6,-6);
\draw[line width=0.23cm,blue,opacity=0.2] (3.5,-2.5)--+(2,-2);
\draw[line width=0.23cm,blue,opacity=0.2] (4.5,-5.5)--+(1,-1);
\draw[line width=0.23cm,blue,opacity=0.2] (2.5,-0.5)--+(2,-2);
\draw[line width=0.23cm,blue,opacity=0.2] (0.5,-2.5)--+(5,-5);

\node at (4.5,0.75) {$T$};
\foreach[count=\i from 1] \c in {a,b,a,a,a,b,a,a} {
  \node[above] (t\i) at (\i,-0.5) {$\mathtt{\c}$}; 
  \draw[thin] (\i-0.5,0)--+(0,-8.5);
}
\node at (-0.85,-4.5) {$P$};
\foreach[count=\i from 1] \c in {a,a,b,a,a,b,a,a} {
  \node[left] (p\i) at (0.5,-\i) {$\mathtt{\c}$}; 
  \draw[thin] (0,-\i+0.5)--+(8.5,0);
}

\newcommand{\drawDiag}[3]{
  \foreach[count=\i from=1] \v in {#3} {
    \node at (\i+1+#1,-\i-1-#2) {\v};
  }
}

\drawDiag{0}{0}{0,1,2,2,2,2,2,2}
\drawDiag{1}{0}{1,1,2,2}
\drawDiag{2}{0}{2,2}
\drawDiag{0}{1}{1,1,1,1,2}
\drawDiag{0}{2}{2,2,2,2,2}

\end{tikzpicture}}
    }
    \hspace{0.2cm}
\subfloat[Notation used in the  proof of \cref{lem:report-anchored}.]{
    \resizebox{0.6\textwidth}{!}{\begin{tikzpicture}[scale=0.8,transform shape]

\node (i) at (3,0) {};
\node (i1) at (1.9,0) {};
\node (i2) at (3.5,0) {};
\node (i3) at (2.4,0) {};
\node (i4) at (4.0,0) {};
\node (j) at (5,0) {};
\node (ja) at (7.5,0) {};
\node (p1) at (7,0) {};
\node (a) at (4,0) {};
\node (b) at (4.5,0) {};
\node (bb) at (3.5,0) {};

\draw[thick](0,0) rectangle (8,0.5);

\draw[dotted] ($(i3.center)-(0,2)$) rectangle ($(j.center)-(0,1.5)$);
\draw ($(bb.center)-(0,2)$) rectangle ($(j.center)-(0,1.5)$)  node[midway] {$P_2$};
\draw[dotted] ($(j.center)-(0,2)$) rectangle ($(ja.center)-(0,1.5)$);
\draw ($(j.center)-(0,2)$) rectangle ($(p1.center)-(0,1.5)$)  node[midway] {$P_1$};

\node[above left] at (0,0) {$T$};

\node (al) at (4.5,0) {};

\draw (j.center)--+(0,0.5);
\draw (i.center)--+(0,0.5);
\draw[dotted] (a.center)--+(0,1.5);
\draw[dotted] (i1.center)--+(0,1.5);
\draw[dotted] (i2.center)--+(0,1);

\draw[dotted] ($(i3.center)-(0,3)$)--+(0,1);
\draw[dotted] ($(b.center)-(0,3)$)--+(0,1);

\node[above right,yshift=0.5cm] at (i) {$i$};
\node[above right,yshift=0.5cm] at (j) {$j$};

\draw[decorate,decoration={brace,amplitude=4pt}] 
  ($(j.center)-(0,0.1)$)--($(a.center)-(0,0.1)$)
  node[midway, below, yshift=-0.1cm] {$a$};

\draw[decorate,decoration={brace,amplitude=4pt}] 
  ($(a.center)-(0,0.1)$)--($(i.center)-(0,0.1)$)
  node[midway, below, yshift=-0.1cm] {$\Delta$};

\draw[decorate,decoration={brace,amplitude=4pt,mirror}] 
  ($(a.center)+(0,1.5)$)--($(i1.center)+(0,1.5)$)
  node[midway, above, yshift=0.1cm] {$I$};

\draw[decorate,decoration={brace,amplitude=4pt,mirror}] 
  ($(i2.center)+(0,1)$)--($(i1.center)+(0,1)$)
  node[midway, above, yshift=0.1cm] {$I'$};

\draw[decorate,decoration={brace,amplitude=4pt}] 
  ($(j.center)-(0,2.1)$)--($(b.center)-(0,2.1)$)
  node[midway, below, yshift=-0.1cm] {$b$};

\draw[decorate,decoration={brace,amplitude=4pt}] 
  ($(b.center)-(0,2.1)$)--($(bb.center)-(0,2.1)$)
  node[midway, below, yshift=-0.1cm] {$\Delta$};

\draw[decorate,decoration={brace,amplitude=4pt}] 
  ($(ja.center)-(0,2.1)$)--($(j.center)-(0,2.1)$)
  node[midway, below, yshift=-0.1cm] {$\alpha$};

\draw[decorate,decoration={brace,amplitude=4pt}] 
  ($(b.center)-(0,3)$)--($(i3.center)-(0,3)$)
  node[midway, below, yshift=-0.1cm] {$I$};

\draw[|<->|] ($(ja.center)-(0,3.5)$)--($(i4.center)-(0,3.5)$)
  node[midway, below, yshift=-0.1cm] {$m$};

\draw[decorate,decoration={brace,amplitude=4pt}] 
  ($(i4.center)-(0,3.5)$)--($(i3.center)-(0,3.5)$)
  node[midway, below, yshift=-0.1cm] {$I'$};

\draw[dotted] ($(j.center)$)--($(j.center)-(0,2.1)$);
\end{tikzpicture}}}
\caption{Illustrations for~\cref{lem:compact-representation-of-ed,lem:report-anchored}.}\label{fig:ed_diag_notation}
\end{figure}

Our main application of \cref{lem:compact-representation-of-ed}, however, is the following result, which allows reporting the starting positions $i$
of all cyclic $k$-edit occurrences satisfying \cref{obs:ed_cl} for a given $j$.

\begin{lemma}\label{lem:report-anchored}
Given a text $T$ of length $n$, a pattern $P$ of length $m$, an integer $k>0$,
and a position $j\in [0\dd n]$, one can compute, in $\Oh(k^2)$ time in the \pillar model, all positions $i\in [0\dd j]$ such that
$\ed(T[i \dd j), P_2) + \ed(T[j \dd r),P_1) \leq k$ holds for some $r\in [j\dd n]$ and some decomposition $P=P_1P_2$.
The output is represented as a union of $\cO(k^2)$ (possibly intersecting) intervals.
\end{lemma}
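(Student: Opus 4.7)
The plan is to combine \cref{cor:lpref} applied at position $j$ with a reversed analogue of \cref{lem:compact-representation-of-ed} covering the region to the left of $j$. First, I invoke \cref{cor:lpref} to compute, in $\Oh(k^2)$ time in the \pillar model, the values $\LPref_{k'}[j]$ for every $k' \in [0\dd k]$. Setting $h(d) := m - \LPref_{k-d}[j]$, the function $h$ is non-decreasing in $d$, and the key observation is that for any $d \in [0\dd k]$ and any candidate suffix length $b' \in [0\dd m]$, the condition $\min_{r\in [j\dd n]} \ed(T[j\dd r), P[0\dd m-b')) \leq k-d$ is equivalent to $b' \geq h(d)$.

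Next, I apply \cref{lem:compact-representation-of-ed} to the reversed strings $T[0\dd j)^R$ and $P^R$; the $\LCP$ queries used by the underlying Landau--Vishkin procedure on these reversed strings translate into $\LCP_R$ queries on the original strings, which the \pillar model supports, so the total cost is $\Oh(k^2)$. This yields a $\cO(k^2)$-size representation of
\[\tbd^R := \{(a,b,d) : \ed(T[j-a\dd j), P[m-b\dd m)) = d \leq k\}\]
as a union of diagonal sets $S = \{(a+\Delta, b+\Delta, d) : \Delta \in [0\dd x)\}$, each with a fixed value of $d$.

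Third, I convert each such $S$ into one interval of valid starting positions. By \cref{obs:ed_cl}, position $i = j - a - \Delta$ is valid iff there exists $b' \in [0\dd m]$ with $\ed(T[i\dd j), P[m-b'\dd m)) + \min_{r} \ed(T[j\dd r), P[0\dd m-b')) \leq k$. Taking the witness $b' := b + \Delta$, the first term equals $d$ (by membership in $\tbd^R$) and the second is at most $k - d$ precisely when $b + \Delta \geq h(d)$. Since $a$, $b$, $d$ are constant along $S$, the feasible $\Delta$ form the contiguous range $[\max(0, h(d) - b) \dd x)$, producing a single interval of valid $i$'s. The boundary case $|P_2| = 0$ (restricting $i \in [j-k, j]$) contributes $\cO(1)$ further intervals, read off directly from the $\LPref$ array, while $|P_1| = 0$ is already captured by the triples with $b = m$. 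The main correctness subtlety is that $\tbd^R$ stores only the exact edit distance of each pair, so I must verify that no valid $i$ is missed: if a witness $(b^*, d_2, d_1)$ certifies $i$'s validity, then $(j-i, b^*, d_2) \in \tbd^R$ because $d_2 \leq k$, and the relation $\min_r \ed(T[j\dd r), P[0\dd m-b^*)) \leq d_1 \leq k - d_2$ forces $b^* \geq h(d_2)$ via the characterization of $h$, so $i$ is indeed emitted. Altogether, the algorithm outputs $\cO(k^2)$ intervals in $\cO(k^2)$ \pillar operations.
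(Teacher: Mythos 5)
Your proposal is correct and follows essentially the same route as the paper's proof: it builds the compact Landau--Vishkin representation (\cref{lem:compact-representation-of-ed}) for the reversed strings $T[0\dd j)$ and $P$, computes $\LPref_{k'}[j]$ for all $k'$ via \cref{cor:lpref}, and for each diagonal set with exact distance $d$ keeps exactly the $\Delta$ with $b+\Delta \ge m-\LPref_{k-d}[j]$, which is the paper's condition $b+\Delta+\alpha\ge m$. Your explicit treatment of the empty-$P_1$/$P_2$ boundary cases and of why exact-distance triples do not miss any valid $i$ only spells out details the paper leaves implicit.
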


\begin{proof}
We start with the calculation of the compact representation of $\tbd$ from \cref{lem:compact-representation-of-ed} for the reversals of strings $T[0\dd j)$ and $P$, and threshold $k$.
Next, for each element $\{(a+\Delta, b+\Delta, d) : \Delta \in I\}$ of this compact representation, we will
calculate an interval $I'\subseteq I$ such that positions from $\{ j-a-\Delta: \Delta \in I'\}$ are in $\cyc{}{k}(P, T)$.
The notation used in this proof is illustrated in \cref{fig:ed_diag_notation}(b).

From the definition of $\tbd$, we know that, for every $\Delta \in I$, we have $\ed(T[i\dd j), P_2)=d$, where $i=j-a-\Delta$ and $P_2=P[m-b-\Delta\dd m)$.
All we have to do is to verify if there exists a position $r$ in $T$ such that $\ed(T[j\dd r), P_1)\le k-d$, where $P_1 = P[0\dd m-b-\Delta)$.

For this, we use \cref{cor:lpref} to compute the values $\LPref_{k'}[j]$ for all $k'\in [0\dd k]$. 
Then, the maximal possible length of $P_1$ (within our edit distance budget)
is $\alpha:=\LPref_{k-d}[j]$.
Now, we need to define $I'$ in such a way that it corresponds to pairs $(P_1,P_2)$ with total length $m$, that is, we set
$I':=\{ \Delta \in I : b + \Delta + \alpha \ge m \}$.

The most time-consuming parts of this procedure are the applications of \cref{lem:compact-representation-of-ed,cor:lpref}, taking $\Oh(k^2)$ time in the \pillar model.
\end{proof}

The algorithm behind \cref{thm:ed_rep}, whose statement is repeated below, is a simple application of \cref{lem:report-anchored}. 

\reportingEditCPM*
\begin{proof}
By \cref{obs:ed_cl}, it suffices to apply \cref{lem:report-anchored} for all $j\in [0\dd n]$
and report all positions in the union of the obtained $\Oh(nk^2)$ intervals.
The primitive \pillar operations can be implemented in $\Oh(1)$ time after $\Oh(n)$-time preprocessing (\cref{the:standardPILLAR}),
so the applications of \cref{lem:report-anchored} take $\Oh(n+n\cdot k^2)=\Oh(nk^2)$ time in total.
The union of the resulting intervals can be computed in $\Oh(nk^2)$ time using bucket sort.
\end{proof}

As for \cref{thm:ed_dec}, we also need efficient construction of the $\LPref_{k'}[0\dd n]$ and $\LSuf_{k'}[0\dd n]$ arrays for all $k'\in [0\dd k]$.
Both~\cite{DBLP:journals/siamcomp/LandauMS98} and \cref{cor:lpref} yield $\Oh(nk^2)$-time algorithms.
However, \cref{thm:allkPREFMATCH}, proved in the next section, improves this running time to $\Oh(nk\log^2 k)$.
Actually, we use \cref{lem:allkPREFMATCH} that implies \cref{thm:allkPREFMATCH} (see \cref{sec:AllkPREFMATCH}) to first obtain an algorithm for the decision version of $k$-Edit CPM in the \pillar model; such an algorithm has already proved to be useful (see \cite{stacs24}).

\begin{lemma}\label{lem:forSTACS}
Given a text $T$ of length $n$, a pattern $P$ of length $m$, an integer $k>0$,
and an interval $J \subseteq [0\dd n]$ of length $\Oh(k)$, one can compute, in $\Oh(k^2\log^2 k)$ time plus $\Oh(k^2)$ \pillar operations, any position $i$ such that
$\ed(T[i \dd j), P_2) + \ed(T[j \dd r),P_1) \leq k$ holds for some $j \in J$ and $r\in [j\dd n]$ and some decomposition $P=P_1P_2$, or state that no such position $i$ exists.
\end{lemma}
\begin{proof}
By \cref{obs:red_PREFMATCH}, a requested position $i$ exists if and only if $\LPref_{k'}[j]+\LSuf_{k-k'}[j] \ge m$
holds for some $j\in J$ and $k'\in [0\dd k]$. By \cref{lem:allkPREFMATCH}, $\LPref_{k'}[j]$ for all $j \in J$ and $k'\in [0\dd k]$ can be computed in $\Oh(k^2\log^2 k)$ time plus $\Oh(k^2)$ $\LCP$ queries on suffixes of $P$ and $T$, which are \pillar operations.
As for computing $\LSuf_{k-k'}[j]$ for $j \in J$, it suffices to use \cref{lem:allkPREFMATCH} for the reversed strings and then reverse the resulting arrays.
Overall, it takes $\Oh(k^2\log^2 k)$ time plus $\Oh(k^2)$ \pillar operations to decide if a requested position $i$ exists.

If the answer is positive, we also need to report a witness occurrence $i$.
For this, we fix an arbitrary $j\in J$ such that $\LPref_{k'}[j]+\LSuf_{k-k'}[j] \ge m$ holds for some $k'\in [0\dd k]$
and run the algorithm of \cref{lem:report-anchored}.
By \cref{obs:ed_cl}, this call is guaranteed to report at least one position.
The cost of the call in the \pillar model is $\Oh(k^2)$.
\end{proof}

\decisionEditCPM*
\begin{proof}
We invoke the algorithm underlying \cref{lem:forSTACS} for a collection of $\Oh(n/k)$ intervals of length~$k$ that cover $[0 \dd n]$. We recall that \pillar operations can be performed in constant time after an $\cO(n)$-time preprocessing (cf.\ \cref{the:standardPILLAR}).
The overall running time is thus $\Oh(n+ (n/k) \cdot k^2\log^2 k)=\Oh(nk\log^2 k)$.
\end{proof}

\section{An Algorithm for \texorpdfstring{\allkep:}{All-k-Edit-PrefMatch:} Proof of \texorpdfstring{Theorem~\ref{thm:allkPREFMATCH}}{Theorem~4}}\label{sec:AllkPREFMATCH}
In this section, we show how to compute, given a position $q$ in the text, the values $\LPref_{k'}[i]$ for all $k' \in [0 \dd k]$ and $i \in [q \dd q+k)\cap [0\dd n]$ in $\Oh(k^2\log^2 k)$ time in the \pillar model.
This will yield the desired solution to the \allkep problem by taking values of $q$ which are multiples of $k$.
The next two subsections deliver the necessary technical tools related to Monge matrices.

\subsection{Dynamic and Persistent Sub-row Minimum Queries on Monge Matrices}\label{subsec:Monge}
In this section, we extend a known result on answering submatrix minimum queries on Monge matrices~\cite{DBLP:journals/talg/KaplanMNS17} (see also~\cite{DBLP:journals/talg/GawrychowskiMW20}) to a dynamic and persistent setting: the matrix changes by sub-column increments and we need to be able to query all previously created matrices. Sub-row queries, which were considered as a simpler case in~\cite{DBLP:journals/talg/KaplanMNS17,DBLP:journals/talg/GawrychowskiMW20}, are sufficient for our purposes.

We recall that a matrix $M$ is a \emph{Monge matrix} if, for every pair of rows $i<i'$ and every pair of columns $j < j'$, \[M[i,j]+M[i',j'] \le M[i,j']+M[i',j].\]
A matrix $M$ is called an \emph{inverse Monge matrix} if, for every pair of rows $i<i'$ and every pair of columns $j < j'$, we have $M[i,j]+M[i',j'] \ge M[i,j']+M[i',j]$.

\begin{example}\label{ex:Monge}
The following matrix is Monge. Consider the ``boxed'' elements; we have red $\le$ blue, i.e., $A[2,1]+A[4,3]\,\le\, 
A[4,1]+A[2,3]$.
\[
A=\begin{bmatrix}
10 & 17 & 13 & 28 & 23 \\
17 & 22 & 16 & 29 & 23 \\
24 & \fbox{\textcolor{red}{28}} & 22 & \fbox{\textcolor{blue}{34}} & 24 \\
11 & 13 & 6  & 17 & 7  \\
45 & \fbox{\textcolor{blue}{44}} & 32 & \fbox{\textcolor{red}{37}} & 23 \\
36 & 33 & 19 & 21 & 6  \\
75 & 66 & 51 & 53 & 34
\end{bmatrix}
\]
By negating all  entries of $A$, we obtain an inverse Monge matrix $M = -A$.
\end{example}

Let $M$ be an $r \times c$ matrix. 
For each column $j$, we define a \emph{column function} $f_j^M: [0 \dd r) \rightarrow \mathbb{R}$ such that $f_j^M(x)=M[x,j]$ for $x \in [0\dd r)$.
The inverse Monge inequalities imply the following property.

\begin{lemma}[{\cite[Section 2]{DBLP:journals/talg/KaplanMNS17}}]\label{lem:diff}
Let $M$ be an inverse Monge matrix. For every $j,j' \in [0 \dd c)$, if $j<j'$, then the function $f_j^M-f_{j'}^M$ is weakly decreasing.
\end{lemma}

An \emph{upper envelope} of functions $f_1,\dots,f_p : [0 \dd r) \to \mathbb{R}$ is a function 
\[g(x)=\max \{ f_k(x)\,:\, k\in [1\dd p]\}.\]
By $\E_{j,j'}^M$ we denote the upper envelope of column functions $f_j^M,f_{j+1}^M,\dots,f_{j'}^M$. In what follows we may omit the superscript $M$ if it is clear from the context.

The next observation follows from the definition of inverse Monge matrix (and \cref{lem:diff}).
\begin{observation}\label{obvious}
  Let $M$ be an $r \times c$ inverse Monge matrix, 
 $0 \le a \le s < b < c$, and $M'$ be an\\ $r\times 2$ matrix, where
 $M'[i,0]=\E_{a,s}(i),\ M'[i,1]=\E_{s+1,b}(i)$ for $i\in [0\dd r)$.
 Then 
 \begin{enumerate}[(a)]
 \item $M'$ is inverse Monge.
\item There exists $x_0 \in [0 \dd r]$ such that 
\[\forall\, x \in [0 \dd x_0)\ \ M'[x,0]\ge M'[x,1],\quad \forall\, x \in [x_0 \dd r)\ \ M'[x,0]<M'[x,1].
\]
\item
There exists $x_0 \in [0 \dd r]$ such that
\[\forall\, x \in [0 \dd x_0)\ \ \E_{a,b}(x)=\E_{a,s}(x) \ge \E_{s+1,b}(x),\quad
\forall\, x \in [x_0 \dd r)\ \ \E_{a,b}(x)=\E_{s+1,b}(x)>\E_{a,s}(x).\]
\end{enumerate}
\end{observation}

\paragraph{Persistent data structures.}
We use the standard technique for making rooted trees (e.g., BSTs) persistent (cf.~\cite{DBLP:journals/jcss/DriscollSST89}), that is, whenever the tree is updated due to an operation top-down (say, an insertion in a BST), copies of all nodes that are changed throughout the operation are created.
Some pointers from the copied nodes may lead to nodes from a previous instance. The \emph{state} of the data structure after the $t$-th operation is represented by (a copy of) the root created by the $t$-th operation. If the tree has height $h$, updating the nodes on a root-to-node path results in the creation of $\Oh(h)$ nodes. Said duplication of nodes does not affect the asymptotic running times of our algorithms.

The next lemma is the main result of this section. Note that while we support update operations that add an arbitrary value to all entries in a sub-column (under a guarantee that the resulting matrix still satisfies the Monge property), incrementing by one would suffice for our needs. Moreover, partial persistency (that is, allowing the next matrix to be created only based on the most recent one) would also be sufficient for us.
We present a solution for a general variant of the problem, as it is equally fast and no more complicated.

\begin{lemma}\label{lem:Monge}
Let $M_0$ be an $r \times c$ Monge matrix. Consider a sequence of operations of the following two types:
\begin{itemize}
    \item \emph{Sub-column addition}: Create a new Monge matrix $M_{t}$ obtained from a given previous matrix $M_{t'}$ by adding a given value to the entries in a given sub-column.
    \item \emph{Sub-row minimum query}: Extract the minimum entry in a given sub-row of a specified previously created matrix $M_t$.
\end{itemize}
After an $\Oh(cr)$-time preprocessing on $M_0$,
we can support both types of operations in $\Oh(\log c \log r)$ time. Moreover, any sub-row minimum query that asks for the minimum of a full row (that is, a row minimum query) can be answered in $\Oh(\log r)$ time.
\end{lemma}
\begin{proof}
Henceforth, we assume that $r+1$ is a power of two. This can be achieved by noting that we can duplicate the last row as many times as necessary without affecting the Monge property.
By negating all the elements of a Monge matrix, we obtain an inverse Monge matrix.
Thus, our goal is to perform sub-column additions (of negated values) and answer sub-row \emph{maximum} queries on a fully persistent inverse Monge matrix. Let $M_0,\dots,M_p$ denote the sequence of inverse Monge matrices created within the process.

\medskip\noindent {\bf Data structure.}
A function $f:[0 \dd r) \rightarrow \mathbb{R}$ will be stored as a perfectly balanced binary tree~$\T(f)$. Each node of the tree corresponds to an interval of arguments. The root corresponds to the interval $[0 \dd r)$. Each node corresponds to an interval $[p \dd q] \subseteq [0 \dd r)$, where $q-p+2$ is an integer power of~$2$. Unless $p=q$, the corresponding node has children corresponding to intervals $[p \dd (p+q)/2-1]$ and $[(p+q)/2+1 \dd q]$, respectively. Each node stores a value $x$ and a weight $\delta$. The \emph{actual value} of a node $v$ is the sum of the value of $v$ and the weights of nodes on the $v$-to-root path of $\T(f)$. The actual value of the node corresponding to interval $[p \dd q]$ is to be equal to $f((p+q)/2)$; we also say that this node represents $f((p+q)/2)$.

We use such a tree representation $\T(f)$ for each column function $f$, storing it as a persistent data structure. The weights will be updated in a lazy manner. That is, when accessing value~$f(x)$ represented by a node $v$ of $\T(f)$, we make sure that the weights on the $v$-to-root path are all equal to 0 by propagating the non-zero values down the tree; then, $f(x)$ is simply the value of $v$. (This procedure is used, in particular, in sub-row maxima queries.)

\newcommand{\TT}{\mathbf{T}}
For a matrix $M$, let $\TT(M)$ be a static balanced binary tree formed over the set of columns of~$M$; the subtree $\TT_v(M)$ rooted at a node $v$ of $\TT(M)$ corresponds to a range of consecutive columns of~$M$.
Each node $v$ of $\TT(M)$ stores the upper envelope $\E_{a,b}^M$ of the column functions corresponding to the column range $[a \dd b]$ in $\TT_v(M)$---let us denote it as $\E_v^M$. 

Each upper envelope $\E_{a,b}$ in $\TT(M)$ will be stored in a persistent manner by its tree representation $\T(\E_{a,b})$. If $a=b$, this is simply $\T(f_a)$. Otherwise, $\T(\E_{a,b})$ will be constructed recursively based on the representations $\T(\E_{a,s})$ and $\T(\E_{s+1,b})$ of upper envelopes in the children of the node of~$\TT(M)$ that contains $\T(\E_{a,b})$, as we discuss below.

Let $f=\E_{a,b}$, $f_1=\E_{a,s}$, and $f_2=\E_{s+1,b}$. By \cref{obvious},
there exists $x_0 \in [0 \dd r]$ such that 
\[f(x)=f_1(x) \ge f_2(x)\ \mbox{for}\ x \in [0 \dd x_0)\ \mbox{and}\ 
f(x)=f_2(x)>f_1(x)\ \mbox{for}\ x \in [x_0 \dd r).\]
Our computation of $\T(f)$ resembles binary search for $x_0$. Let $x$ be such that the roots of $\T(f_1)$ and~$\T(f_2)$ represent $f_1(x)$ and $f_2(x)$, respectively. 

If $f_1(x) \ge f_2(x)$, then $x < x_0$. In this case, the root of $\T(f_1)$ represents $f(x)$ and the left subtree of $\T(f)$ is the left subtree of $\T(f_1)$, while the right subtree of $\T(f)$ is computed recursively from the right subtrees in $\T(f_1)$ and $\T(f_2)$. 

On the other hand, if $f_1(x) < f_2(x)$, then $x \ge x_0$. Then, the root of $\T(f_2)$ represents $f(x)$ and the right subtree of $\T(f)$ is the right subtree of $\T(f_2)$, while the left subtree of $\T(f)$ is computed recursively from the left subtrees in $\T(f_1)$ and~$\T(f_2)$. 

Before each recursive call, if the weight in the root of the considered subtree of $\T(f_i)$, for $i \in \{1,2\}$, was non-zero, we add it to the children of the root and set it to zero. (For persistency, we create new nodes with the updated weights.) Overall, $\T(f)$ is computed in $\Oh(\log r)$ time and stored in just $\Oh(\log r)$ extra space by reusing subtrees of $\T(f_1)$ and $\T(f_2)$. Observe that we ensure, for $i \in \{1,2\}$, that the weights of all the ancestors of the root of any reused subtree are equal to zero, and hence the actual values of the nodes of $\T(f)$ are correct.

\medskip\noindent {\bf Initialisation.}
For each column function $f_j^{M_0}$, we create its tree representation $\T(f_j^{M_0})$ by definition, with all weights equal to 0, in $\Oh(r)$ time, for a total of $\Oh(cr)$ time. We compute all upper envelopes in $\TT(M_0)$ bottom-up. The tree representation of each upper envelope of a node~$v$ in $\TT(M_0)$ takes $\Oh(\log r)$ time to compute from the upper envelopes in the children of $v$. Tree $\TT(M_0)$ contains $\Oh(c)$ nodes, so the upper envelopes for non-leaf nodes require $\Oh(c \log r)$ time to be computed. In total, the initialisation takes $\Oh(cr)$ time.

\medskip\noindent {\bf Sub-column addition.}
In each such operation, a matrix $M_{t'}$, a column $j$, a range of rows $[i\dd i']$, and a value $\alpha$ are specified. We are to create a next matrix $M_t$ such that $M_t[x,y]=M_{t'}[x,y]+\alpha$ if $(x,y) \in [i \dd i'] \times \{j\}$ and $M_t[x,y]=M_{t'}[x,y]$ otherwise.

First, we access the tree representation of the column function $f_j^{M_{t'}}$ via the leaf of $\TT(M_{t'})$ that corresponds to the column $j$. We compute, in $\Oh(\log r)$ time, a set of $\Oh(\log r)$ so-called \emph{canonical nodes} in $\T(f_j^{M_{t'}})$ whose intervals are disjoint and cover $[i \dd i']$. During the traversal, the weights are propagated so that there are no non-zero weights (weakly) above the canonical nodes. We then set the weight of each canonical node to $\alpha$. Each operation on a node creates a new node for persistence. This step takes $\Oh(\log r)$ time.

Next, all nodes in $\TT(M_{t'})$ whose column range contains $j$ are updated, bottom-up from the leaf that stores $f_j^{M_{t'}}$. In each of them, the upper envelope is computed from scratch as described above in $\Oh(\log r)$ time, for a total of $\Oh(\log c \log r)$ time. Again, for every visited node, a new node is created for persistence. The root of the resulting tree becomes the handle for $\TT(M_t)$, the tree representation of the newly created matrix $M_t$.

\medskip\noindent {\bf Sub-row maximum query.}
A query takes as input a handle to matrix $M_t$, a row $i$, and a range of columns $[j\dd j']$. To answer it, we compute, in $\Oh(\log c)$ time, $\Oh(\log c)$ canonical nodes of $\TT(M_t)$ whose column ranges are disjoint and cover $[j\dd j']$.
For each such canonical node $u$, we compute $\E_u^{M_t}(i)$ in $\Oh(\log r)$ time using the tree representation $\T(\E_u^{M_t})$, as described above. The output is the maximum of these values. The total query time is $\Oh(\log c \log r)$.

If the query concerns a full row, it suffices to query the root of $\TT(M_t)$. Thus, the query time drops to $\Oh(\log r)$.
\end{proof}

\subsection{Online Seaweed Combing}\label{subsec:seaweeds}
The \emph{deletion distance} $\ddist(U,V)$ of two strings $U$ and $V$ is the minimum number of letter insertions and deletions required
to transform $U$ to $V$; in comparison to edit distance, substitutions are not directly allowed (they can be simulated at cost 2 by an insertion and a deletion).

\begin{definition}\label{def:align_graph}
For strings $U$ and $V$ and an interval $I$, we define the \emph{alignment graph} $G(U,V,I)$ as the weighted undirected graph with vertex set $\mathbb{Z}^2$ and the following edges for each $(x,y) \in \mathbb{Z}^2$:
\begin{itemize}
    \item $(x,y) \stackrel{1}{\longleftrightarrow} (x+1,y)$,
    \item $(x,y) \stackrel{1}{\longleftrightarrow} (x,y+1)$,
    \item 
    $(x,y) \stackrel{0}{\longleftrightarrow} (x+1,y+1)$, present unless $x \in [0\dd |U|)$, $y \in [0 \dd |V|)$, $U[x] \ne V[y]$, and $y-x \in I$.
\end{itemize}
\end{definition}

For an alignment graph $G(U,V,I)$, there are $|I|$ diagonals where diagonal edges may be missing.
Intuitively, everything outside these diagonals is considered a match.
See \cref{fig:align} for an illustration of an alignment graph.

\begin{figure}[ht]
\captionsetup{singlelinecheck=off}
\centering
\begin{tikzpicture}[scale=0.8]

\foreach \x/\z in {-1/0, 0/1, 1/2, 2/3, 3/4, 4/5, 5/6}{
   \foreach \y in {0,1,2,3,4,5}{
	\draw[thick] (\x,\y) -- (\x+1,\y-1);
   }
}

\filldraw[black!10!white] (0,2) -- (0,4) -- (3.1,4) -- (5,2.1) -- (5,0) -- (1.9,0) -- (0,1.9);

\draw[green!30!white,line width=1.5mm] (0,4) -- (1,3) -- (2,2) -- (3,2) -- (4,1) -- (5,1) -- (5,0);
  
\foreach \x in {-1,0, 1,2,3,4,5,6}{
   \foreach \y in {-1,0, 1,2,3,4,5}{
   \filldraw(\x,\y) circle (1.5pt);
   }
}

\foreach \x/\z in {-1/0, 0/1, 1/2, 2/3, 3/4, 4/5, 5/6}{
   \foreach \y in {-1,0,1,2,3,4,5}{
	\draw[thick] (\x,\y) -- (\z,\y);
   }
}

\foreach \x/\z in {-1/0, 0/1, 1/2, 2/3, 3/4, 4/5}{
   \foreach \y in {-1,0,1,2,3,4,5,6}{
	\draw[thick] (\y,\z) -- (\y,\x);
   }
}

\foreach \x/\y in {0/4, 3/4, 0/2, 3/2, 4/3, 3/4, 1/3, 2/1, 0/1, 4/4}{
	\draw[thick] (\x,\y) -- (\x+1,\y-1);
   }
   
\draw (-1.2,5) node[above] {$(-1,-1)$};
\draw (6.2,5) node[above] {$(-1,6)$};

\draw (6.2,-1) node[below] {$(5,6)$};
\draw (-1.2,-1) node[below] {$(5,-1)$};

\foreach \x/\c in {0.5/a, 1.5/b, 2.5/c, 3.5/a, 4.5/b}{
   \draw (\x,4) node[above=-0.1cm] {$\texttt{\c}$};
}

\foreach \x/\c in {0.5/c,1.5/a, 2.5/b, 3.5/a}{
   \draw (0,\x) node[left=-0.1cm] {$\texttt{\c}$};
}
\end{tikzpicture}
\caption{
Consider $U=\texttt{abac}$ and $V=\texttt{abcab}$ and $I = [-2 \dd 3]$.
The figure depicts the subgraph of the alignment graph $G(U,V,I)$ induced by the set of vertices $[-1 \dd 5]\times [-1 \dd 6]$ (edge weights omitted).
Note that diagonal edges are only missing in the shaded portion of the shown graph.
A shortest $(0,0)$-to-$(4,5)$ path is highlighted in green; its length equals $\delta_D(U,V)=3$. It follows that $U$ can be transformed to $V$ as follows: $\texttt{aba\textcolor{red}{c}} \rightarrow \texttt{ab\textcolor{green!60!black}{c}a\textcolor{green!60!black}{b}}$.)}\label{fig:align}
\end{figure}

\begin{observation}[see {\cite[Lemma 8.5]{https://doi.org/10.48550/arxiv.2204.03087}}]\label{obs:alignment}
For all \fragments $U[x\dd x')$ and $V[y\dd y')$ of $U$ and $V$, respectively, $\ddist(U[x\dd x'),V[y \dd y'))$ 
is the length of the shortest $(x,y)\leadsto (x',y')$ path in $G(U,V,\mathbb{Z})$. 
\end{observation}

Let us fix a text $T$ of length $n$, a pattern $P$ of length $m$, and a positive integer $k$ for this subsection.

\begin{definition}\label{def:Dqz}
For integers $q \in [0\dd n)$ and $z \in [0 \dd m]$, let $G_{q,z} := G(P[0 \dd z),T,[q-k\dd q+2k))$.
For each $z \in [0 \dd m]$, let
 $D_{q,z}[0 \dd 3k+1,0 \dd 3k+1]$ be a $(3k+2) \times (3k+2)$ \emph{distance matrix} such that $D_{q,z}[a,b]$ is the length of the shortest path between $(0,a+q-k-1)$ and $(z,z+b+q-k-1)$
in the alignment graph $G_{q,z}$.
\end{definition}

The distance matrix directly stores deletion distances between the required substrings of $P$ and~$T$ provided that these distances do not exceed $k$.
This is formally stated in the lemma below, which is a variation of \cref{obs:alignment}. 
The lemma follows from the well-known diagonal-band argument that originates from a work of Ukkonen~\cite{DBLP:journals/iandc/Ukkonen85} and has been used, for instance, by Landau and Vishkin~\cite{DBLP:journals/jal/LandauV89} towards their $\cO(nk)$-time algorithm for pattern matching under edit distance.

\begin{lemma}\label{lem:interface:Ka}
Let 
$z \in [0 \dd m]$, $i \in [q\dd q+k)$, $j \in [q-k-1 \dd q+2k] \cap [i-z\dd n-z]$, and $k'\in [0\dd k]$. Then,
\[\ddist(P[0\dd z),T[i\dd j+z))\le k'\quad \Longleftrightarrow\quad D_{q,z}[i-q+k+1,j-q+k+1] \le k'.\]
\end{lemma}
\begin{proof}
Let us note that $i-q+k+1,j-q+k+1 \in [0 \dd 3k+1]$ and hence $D_{q,z}[i-q+k+1,j-q+k+1]$ is well-defined.

    First, consider the case when $j \not\in [i-k \dd i+k]$. In this case,  $D_{q,z}[i-q+k+1,j-q+k+1] > k$. This is because any $(0,i)$-to-$(z, z+j)$ path in $G_{q,z}$ contains at least $|j-i| > k$ non-diagonal edges, each contributing a unit to the length of the path.
    When $j \not\in [i-k \dd i+k]$, we also have
    \[\ddist(P[0\dd z),T[i\dd j+z))\geq |\,|P[0\dd z)| - |T[i\dd j+z)|\,| = |j-i| > k.\]
    We can thus henceforth focus on the case when $j \in [i-k \dd i+k]$.

\begin{enumerate}
    \item\label{e1} Observe that $G(P, T, \mathbb{Z})$ is a subgraph of $G(P,T,I)$ (for any interval $I$). Consequently, \cref{obs:alignment} and the definition of $D_{q,z}$
    yield \[D_{q,z}[i-q+k+1,j-q+k+1]\le \ddist(P[0\dd z),T[i\dd j+z)).\]
    We note that condition $j \in [i-z \dd n-z]$ ensures that $T[i \dd j+z)$ is a valid \fragment of $T$ (possibly empty), so \cref{obs:alignment} can indeed be applied.

    \smallskip\noindent 
    \item\label{e2}
    Moreover, if $I\supseteq [y-x-k \dd y-x+k]$,
    then any $(x,y)\leadsto (x',y')$ path in $G(P,T,I)$ of weight at most $k$ contains only vertices $(x'',y'')$ with $y''-x'' \in I$, and thus the same path is also present in $G(P, T, \mathbb{Z})$.
    Since $[i-k \dd i+k] \subseteq [q-k\dd q+2k)=I$, then $D_{q,z}[i-q+k+1,j-q+k+1] \le k'\le k$ implies $\ddist(P[0\dd z),T[i\dd j+z)) \le k'$ due to \cref{obs:alignment}.\qedhere
\end{enumerate}
\end{proof}

Pairwise distances of vertices that lie on the outer face of a connected undirected planar graph satisfy the \emph{Monge property}.

\begin{fact}[{\cite[Section 2.3]{FR06}}]\label{fct:monge}
    Consider a connected planar graph $G$ with non-negative edge
    weights. For vertices
    $u_0,\ldots,u_{p-1},v_{q-1},\ldots,v_0$ that (in this cyclic order) lie on the outer
    face
    of $G$, the $p\times q$ matrix $D$ with $D[i,j]=\dist_G(u_i,v_j)$ is a Monge matrix.
\end{fact}

Now the planarity of $G_{q,z}$ and the fact that all vertices of the considered shortest paths lie in $[0\dd z] \times [-k-1\dd n+m+2k]$ imply the following.

\begin{observation}\label{obs:dt_monge}
For every $q \in [0 \dd n)$ and $z \in [0 \dd m]$, the matrix $D_{q,z}$ is a Monge matrix.
\end{observation}

The remainder of this section is devoted to computing an efficient representation of the sequence of matrices $D_{q,z}$, for $z \in [0 \dd m]$, via so-called permutation matrices. A permutation matrix is a square matrix over $\{0,1\}$ that contains exactly one 1 in each row and in each column.
A permutation matrix~$\Pi$ of size $s \times s$ can be represented, for any $d \in \mathbb{Z}$, by the permutation~$\pi:[d \dd d+s)\to [d\dd d+s)$ such that $\Pi[i-d,j-d]=1$ if and only if $\pi(i)=j$.

For two permutations $\pi_1$, $\pi_2$ and their corresponding permutation matrices $\Pi_1$, $\Pi_2$, by $\Delta(\pi_1,\pi_2)=\Delta(\Pi_1,\Pi_2)$ we denote a sequence of swaps of neighbouring columns that transforms $\Pi_1$ into~$\Pi_2$.

A swap of columns $j$ and $j+1$ is called \emph{ordered} if $\Pi_1[i,j]=1=\Pi_1[i',j+1]$ holds for some rows~$i<i'$.

For an $s\times s$ matrix $A$, we denote by $A^\Sigma$ an $(s+1)\times (s+1)$ matrix such that \[A^{\Sigma}[i,j]=\sum_{i' \ge i}\sum_{j'<j} A[i',j']\ \text{ for }i,j\in [0\dd s].\]

\begin{figure}[htpb!]
    \centering
    \begin{tikzpicture}
        \draw[thick] (0,0) rectangle (3,3);
        \draw[fill = brown!30!white] (0,0) rectangle (2,2.2);
        \draw[dashed] (2,3) -- (2,2.2);
        \draw (-.25,1.5) node[left] {$A \, =$};
        \draw (0,2.1) node[left] {$i$};
        \draw (2.1,3) node[above] {$j$};
    \end{tikzpicture}
    \caption{A matrix $A$ is shown. The value $A^{\Sigma}[i,j]$ is equal to the sum of the entries of $A$ in the shaded area.}
\end{figure}

We are now ready to state the main result of this subsection.

\begin{restatable}{lemma}{mainseaweed}\label{lem:interface:Kc}
For each $q \in [0 \dd n)$ and $z \in [0 \dd m]$, there is a $(3k+1)\times (3k+1)$ permutation matrix~$\Pi_{q,z}$ such that \[\forall_{i,j\in [0\dd 3k+1]}\; D_{q,z}[i,j] = 2\Pi_{q,z}^{\Sigma}[i,j]+i-j.\]
Moreover, $\Pi_{q,0}$ is an identity permutation matrix and there exists a sequence of updates $\Delta(\Pi_{q,0},\Pi_{q,1})$, \dots, $\Delta(\Pi_{q,m-1},\Pi_{q,m})$ that consists of at most $3k(3k+1)/2$ ordered swaps of neighboring columns in total. Its non-empty elements can be computed in $\Oh(k^2 \log \log k)$ time plus $\Oh(k^2)$ $\LCP$ queries on pairs of suffixes of $P$ and~$T$.
\end{restatable}

Before proving the lemma, we introduce useful notations and results from~\cite[Sections 8 and~9]{https://doi.org/10.48550/arxiv.2204.03087} which builds upon the ideas of Tiskin~\cite{https://doi.org/10.48550/arxiv.0707.3619,DBLP:journals/mics/Tiskin08} and in particular his so-called \emph{seaweed combing} technique.

We recall from \cite{https://doi.org/10.48550/arxiv.2204.03087} the definition of an alignment graph for an arbitrary set $M \subseteq \mathbb{Z}^2$ of excluded diagonal edges.

\newcommand{\AG}{\mathsf{AG}}
\begin{definition}[{\cite[Definition 8.4]{https://doi.org/10.48550/arxiv.2204.03087}}]
Given a set $M\subseteq \mathbb{Z}^2$, we define the \emph{general alignment graph} $\AG(M)$ with vertices $\mathbb{Z}^2$ and weighted edges:
\begin{itemize}
    \item $(x,y)\stackrel{1}{\longleftrightarrow} (x+1,y)$ for every $(x,y)\in \mathbb{Z}^2$;
    \item $(x,y)\stackrel{1}{\longleftrightarrow} (x,y+1)$ for every $(x,y)\in \mathbb{Z}^2$;
    \item $(x,y)\stackrel{0}{\longleftrightarrow} (x+1,y+1)$ for every $(x,y)\in \mathbb{Z}^2 \setminus M$.
\end{itemize}
We denote the underlying distance function on $\mathbb{Z}^2$ by $\dist_M$.
\end{definition}

\begin{remark}
In \cite{https://doi.org/10.48550/arxiv.2204.03087}, $(i,j)$ denotes the point with $x$-coordinate $i$ and $y$-coordinate $j$.
Here, consistently with subsequent work, point $(i,j)$ denotes the point at the $i$-th row and the $j$-th column of the grid.
In what follows, we have swapped the coordinates of points in all subsequent definitions and results that we use from \cite{https://doi.org/10.48550/arxiv.2204.03087}.
\end{remark}

For strings $U$ and $V$, an interval $I$, and a threshold $z \in \mathbb{Z}$, we denote
\[\mu(U,V,I,z) = \{(x,y)\,:\,x < z,\,(x,y) \stackrel{0}{\longleftrightarrow} (x+1,y+1) \text{ does not exist in } G(U,V,I)\}.\]

\begin{observation}\label{obs:useme}
For every $z\in [0\dd m]$, we have $\AG(\mu(P,T,I,z)) = G(P[0 \dd z), T, I)$.
\end{observation}
\begin{proof}
It suffices to check which 0-weight diagonal edges are absent in both graphs. The edge $(x,y) \leftrightarrow (x+1,y+1)$ is missing in $\AG(\mu(P,T,I,z))$ if $x<z$ and the edge is missing in $G(P,T,I)$. 

The latter can be stated equivalently as $x \in [0 \dd m)$, $y \in [0 \dd n)$, $y-x \in I$, and $P[x] \ne T[y]$. As $x<z$, the last condition is equivalent to $P'[x] \ne T[y]$ for $P'= P[0 \dd z)$. Hence, the edge is missing in $G(P',T,I)$. 
A proof that edges missing in $G(P', T, I)$ are also missing in $\AG(\mu(P,T,I,z))$ is symmetric.
\end{proof}

\begin{example}
For $P=\mathtt{abac}$, $T=\mathtt{abcab}$, $I=[-2 \dd 3]$, and $z=4$, we have
\begin{align*}
\mu(P,T,I,z) &=\{(x,y)\,:\,P[x] \ne T[y],\,y-x \in I\}\\
&= \{(0,1),(0,2),\,(1,0),(1,2),(1,3),\,(2,1),(2,2),(2,4),\,(3,1),(3,3),(3,4)\};
\end{align*}
these are the top-left endpoints of missing diagonal edges in \cref{fig:align}. We further have $\mu(P,T,I,3) = \mu(P,T,I,4) \setminus \{(3,1),(3,3),(3,4)\}$.
\end{example}

We note that, in the next definition, the right endpoints of the intervals shrink by one unit.
This is because, for each $(x,y)\in M$, a bounding box of $M$ should contain both endpoints of the missing diagonal edge $(x,y) \stackrel{0}{\longleftrightarrow} (x+1,y+1)$.

\begin{definition}
    \label{def:bounding-box}
    For a finite set $M\subseteq \mathbb{Z}^2$, we say that
    $[p\dd p']\times [t\dd t']$ is a \emph{bounding box}
    of $M$ if $M\subseteq [p\dd p')\times [t\dd t')$.
\end{definition}

In what follows, we consider distances between points on the boundary of a bounding box.
We next define the \emph{left-top} and \emph{bottom-right} boundaries of a box; see \cref{fig:bbox} for an illustration.

\begin{definition}[{\cite[Definitions 8.6 and 8.7]{https://doi.org/10.48550/arxiv.2204.03087}}]\label{def:dmb}
    For a \emph{box} $B=[p\dd p']\times [t\dd t']$, we define the
    \emph{left-top} boundary \((\lt^B_d)_{d\in [t-p'\dd t'-p]}\)
    as a sequence of points with
    \[
        \lt^B_d \coloneqq \begin{cases}
            (t-d,t) & \text{for }d\in [t-p'\dd t-p],\\
            (p, p+d) & \text{for }d\in [t-p\dd t'-p].\\
      \end{cases}
    \]
    Further, we define the \emph{bottom-right boundary}
    \((\br^B_d)_{d\in [t-p'\dd t'-p]}\)
    as a sequence of points with
    \[
        \br^B_d \coloneqq \begin{cases}
            (p',p'+d) & \text{for }d\in [t-p'\dd t'-p'],\\
            (t'-d,t') & \text{for }d\in [t'-p' \dd t'-p].
        \end{cases}
    \]
    We then define an infinite 2-dimensional matrix $D_{M,B}$ as follows:
    \[D_{M,B}[a,b] := \begin{cases}
        \dist_M(\lt^B_a,\br^B_b) & \text{if }a,b\in [t-p'\dd t'-p],\\
        |a-b| & \text{otherwise}.
    \end{cases}
\]
\end{definition}

For an $s \times s$ matrix $A$, we define its
\emph{density matrix} $A^\square$ as an $(s-1) \times (s-1)$ matrix such that
\[
    A^\square [i,j] :=
    A[i+1,j] + A[i,j+1]
    - A[i,j] - A[i+1,j+1]
    \qquad\text{for \(i,j \in [0 \dd s-1)\).}
\]
This definition extends naturally to a (doubly) infinite 2D matrix $A$ (then, $A^\square$ is also doubly infinite).

\begin{definition}[{\cite[Definition 8.9]{https://doi.org/10.48550/arxiv.2204.03087}}]\label{def:dmpm}
For a finite set $M\subseteq \mathbb{Z}^2$, we define the \emph{general distance matrix} $D_M:= D_{M,B}$
for an arbitrary bounding box $B$ of $M$ and the \emph{seaweed matrix} $\Pi_M :=
\frac12 D_M^\square$.
\end{definition}

\begin{lemma}[{\cite[Lemmas 8.8 and 8.11]{https://doi.org/10.48550/arxiv.2204.03087}}]\label{lem:seaweed}
The general distance matrix $D_M=D_{M,B}$ does not depend on the choice of the bounding box $B$ of $M$.
Moreover, the seaweed matrix $\Pi_M$ is a bounded permutation matrix such that $D_M[a,b]=2\Pi_M^\Sigma[a,b]+a-b$ holds for all $a,b\in \mathbb{Z}$.
\end{lemma}

\begin{figure}
\begin{center}
\begin{tikzpicture}[scale=0.8]
\begin{scope}[xshift=-4cm]
\draw[densely dotted] (0,-6) grid (6,0);
\draw[-latex] (0,0) -- (0,-7);
\draw[-latex] (0,0) -- (7,0);
\draw[very thick] (2,-2) -- (3,-3) -- (3,-4) -- (5,-6);
\foreach \x in {1,...,6}{\draw (0,-\x) node[left] {\x};}
\foreach \x in {1,...,6}{\draw (\x,0) node[above] {\x};}
\draw (0,-7) node[below] {$x$};
\draw (7,0) node[right] {$y$};
\draw (2,-1) rectangle (5,-6);
\draw (5,-1) node[above right=-0.1cm] {4};
\draw (2,-1) node[above left=-0.1cm] {1};
\foreach \x/\y in {5/-1,2/-6}{\filldraw[purple] (\x,\y) circle (0.08cm);}
\foreach \x/\y/\v in {4/-1/-3,3/-1/-2,2/-1/-1,2/-2/0,2/-3/1,2/-4/2,2/-5/3}{\filldraw[red] (\x,\y) circle (0.08cm);}
\foreach \x/\y/\v in {4/-1/3,3/-1/2}{\draw (\x,\y) node[above=0.05cm,circle,draw=white,fill=white,inner sep=0pt] {\v};}
\draw (2,-6) node[below left=-0.1cm] {-4};
\draw (5,-6) node[below right=-0.1cm] {-1};
\foreach \x/\y/\v in {2/-2/0,2/-3/-1,2/-4/-2,2/-5/-3}{\draw (\x,\y) node[left=0.05cm,circle,draw=white,fill=white,inner sep=0pt] {\v};}
\foreach \x/\y/\v in {5/-2/-3,5/-3/-2,5/-4/-1,5/-5/0,5/-6/1,4/-6/2,3/-6/3}{\filldraw[blue] (\x,\y) circle (0.08cm);}
\foreach \x/\y/\v in {5/-2/3,5/-3/2,5/-4/1,5/-5/0}{\draw (\x,\y) node[right=0.05cm,circle,draw=white,fill=white,inner sep=0pt] {\v};}
\foreach \x/\y/\v in {4/-6/-2,3/-6/-3}{\draw (\x,\y) node[below] {\v};}
\draw[very thick,red,latex-] (5,-0.4) -- (1.4,-0.4) -- (1.4,-6);
\draw[very thick,blue,latex-] (5.8,-1) -- (5.8,-6.6) -- (2,-6.6);
\end{scope}

\begin{scope}[xshift=6.5cm, yshift=-6cm]
\draw[thick] (0.1,0) -- (0,0) -- (0,4.8) -- (0.1,4.8);
\draw[thick] (4.7,0) -- (4.8,0) -- (4.8,4.8) -- (4.7,4.8);
\begin{scope}[yshift=0.2cm]
\foreach \y/\l in {3.7/-3,3.2/-2,2.7/-1,2.2/0,1.7/1,1.2/2,0.7/3}{
    \filldraw[red] (-0.2,\y) circle (0.08cm);
    \draw (-0.2,\y) node[left] {\l};
}
\foreach \y/\l in {4.2/-4,0.2/4}{
    \filldraw[purple] (-0.2,\y) circle (0.08cm);
    \draw (-0.2,\y) node[left] {\l};
}
\end{scope}

\begin{scope}[xshift=0.2cm]
\foreach \x/\l in {3.7/3,3.2/2,2.7/1,2.2/0,1.7/-1,1.2/-2,0.7/-3}{
    \filldraw[blue] (\x,5) circle (0.08cm);
    \draw (\x,5) node[above] {\l};
}
\foreach \x/\l in {4.2/4,0.2/-4}{
    \filldraw[purple] (\x,5) circle (0.08cm);
    \draw (\x,5) node[above] {\l};
}
\end{scope}

\draw (1.9,2.4) node (tu) {1};
\draw[densely dotted] (-0.12,2.4) -- (tu) -- (1.9,4.92);

\draw (2.4,-0.3) node {distance matrix};

\draw[very thick,red,-latex,xshift=-0.6cm] (-0.3,4.8) -- (-0.3,0);
\draw[very thick,blue,-latex,yshift=0.6cm] (0,5.1) -- (4.8,5.1);
\end{scope}
\end{tikzpicture}
\end{center}
\caption{A box $[p \dd p'] \times [t \dd t']$ for $p=1$, $p'=6$, $t=2$, $t'=5$ and $D_{M,B}[t-p'\dd t'-p,\, t-p'\dd t'-p]$.
The entry $D_{M,B}[0,-1]$ stores the length of the shown shortest $(2,2)$-to-$(6,5)$ path (which is 1) since $(2,2)$ is the fifth vertex in the left-top boundary, $(6,5)$ is the fourth vertex in the bottom-right boundary, and $t-p' = -4$. (See also the numbering of the vertices of the bounding box.)
}\label{fig:bbox}
\end{figure}

The next lemma shows that the matrix $D_{q,z}$ considered before is a submatrix of the matrix $D_M$ for $M=\mu(P,T,[q-k \dd q+2k),z)$.

\SetKwFunction{Seaweed}{Seaweed}
\SetKwFunction{suc}{successor}
\SetKwFunction{push}{insert}
\SetKwFunction{swap}{swap}
\SetKwFunction{extractMin}{extractMin}
\newcommand{\Que}{\mathcal{Q}}
\newcommand{\spn}{\mathsf{span}}
\begin{lemma}\label{lem:distmatr}
Let $M=\mu(P,T,[q-k \dd q+2k),z)$, where $z \in [0 \dd m]$ and $q \in [0 \dd n)$.
For every $a,b \in [0 \dd 3k+1]$, we have
\[D_{q,z}[a,b]=D_{M}[a+q-k-1,b+q-k-1]=2\Pi_M^\Sigma[a+q-k-1,b+q-k-1]+a-b.\]
\end{lemma}
\begin{proof}
Let $B=[p \dd p'] \times [t \dd t']$ be a box with $p=0$, $p'=z$, $t=q-k-1$, $t'=z+q+2k$. According to \cref{def:dmb}, as $t'-t+1=z+3k+2$, we have
\begin{align*}
\lt^B_{t-p},\ldots,\lt^B_{t-p+3k+1} &= (p,t),\ldots,(p,t+3k+1),\\
\br^B_{t'-p'-3k-1},\ldots,\br^B_{t'-p'} &= (p',t'-3k-1),\ldots,(p',t').
\end{align*}
By plugging in formulas for $p,p',t,t'$, we obtain:
\begin{align*}
\lt^B_{q-k-1},\ldots,\lt^B_{q+2k} &= (0,q-k-1),\ldots,(0,q+2k),\\
\br^B_{q-k-1},\ldots,\br^B_{q+2k} &= (z,z+q-k-1),\ldots,(z,z+q+2k).
\end{align*}

Let $I=[q-k \dd q+2k)$. According to \cref{def:dmb}, for $a,b \in [0 \dd 3k+1]$, we have
\begin{align*}
    D_{M,B}[a+q-k-1,b+q-k-1] &= \dist_M(\lt^B_{a+q-k-1},\br^B_{b+q-k-1})\\
    &=\dist_M((0,a+q-k-1),(z,b+z+q-k-1)).
\end{align*}
By \cref{obs:useme}, $\AG(M)=G(P[0 \dd z),T,I)$, so $\dist_M((0,a+p-k-1),(z,b+z+q-k-1))$ is the length of the shortest path between the two points in $G(P[0 \dd z),T,I)$, which equals $D_{q,z}[a,b]$ by definition.

To conclude, we need to show that $D_{M,B}=D_M$, i.e., that $B$ is a bounding box of~$M$.
By definition, if $(x,y) \in M$, then $x \in [0 \dd z)$ and $y-x \in I$. 
Moreover, $I=[q-k \dd q+2k)$ implies that $y \in [q-k \dd z+q+2k)$.
Therefore, \[(x,y) \in [p \dd p') \times [t \dd t'),\] so $B=[p \dd p'] \times [t \dd t']$ satisfies \cref{def:bounding-box} for $M$. Thus, $D_{q,z}[a,b]=D_M[a+q-k-1,b+q-k-1]$.
Finally, note that \cref{lem:seaweed} implies 
\[D_M[a+q-k-1,b+q-k-1]\,=\,2\Pi_M^\Sigma[a+q-k-1,{b+q-k-1}]+(a+q-k-1)-(b+q-k-1)\]
\[=\,2\Pi_M^\Sigma[a+q-k-1,b+q-k-1]+a-b.\]
\end{proof}

An \emph{infinite permutation matrix} is a doubly infinite matrix that contains exactly one 1 in each row and column.
An infinite permutation matrix \(A\) is \emph{bounded} if the set $\{i\in \mathbb{Z} : A[i,i]=0\}$ is finite.

\begin{remark}
\cite[Lemma 8.11]{https://doi.org/10.48550/arxiv.2204.03087} shows that  for a finite set $M$, the seaweed matrix $\Pi_M$ is a bounded infinite permutation matrix.
\end{remark}

We say that a permutation $\sigma : [\ell \dd r) \to [\ell\dd r)$ \emph{represents} an infinite permutation matrix $A$ if 
\[A[i,j]=1 \Leftrightarrow (i\in [\ell\dd r) \text{ and } j = \sigma(i)) \text{ or }(i\notin [\ell\dd r)\text{ and } j = i).\]

For a non-empty finite subset $S\subseteq \mathbb{Z}$, we define
\(\spn(S) = [\min S \dd \max S]\); if $S$ is empty, $\spn(S)=\emptyset$. For a finite set $M\subseteq \mathbb{Z}^2$, we define $\spn(M)=\spn(\{y-x : (x,y)\in M\})$.

The proof of \cite[Lemma 8.23]{https://doi.org/10.48550/arxiv.2204.03087} is based on Algorithm~\ref{alg:seaweed} (\cite[Algorithm 4]{https://doi.org/10.48550/arxiv.2204.03087}).
For any $I:= (\ell\dd r)\supseteq \spn(M)$, Algorithm~\ref{alg:seaweed} maintains a permutation $\sigma : [\ell\dd r) \to [\ell\dd r)$
and implicitly iterates over all pairs $(p,d)\in \mathbb{Z}\times (\ell \dd r)$ in the lexicographic order.
The main technical result of \cite[Lemma 8.23]{https://doi.org/10.48550/arxiv.2204.03087} is encapsulated in the following statement.

\begin{lemma}[{\cite[proof of Lemma 8.23]{https://doi.org/10.48550/arxiv.2204.03087}}]\label{lem:claim}
Consider a finite set $M\subseteq \mathbb{Z}^2$.
Suppose that we are given an interval $I\supseteq \spn(M)$ and function
$\suc(M,\star,\star)$ that,
given $d\in I$ and $p\in \{-\infty\}\cup \mathbb{Z}$, returns
$\suc(M,d,p)=\min\{p'\ge p : (p',p'+d)\in M\}$ in $\Oh(1)$ time,
where $\min\emptyset=\infty$.
Algorithm~\ref{alg:seaweed} returns in $\Oh(|I|^2 \log \log |I|)$ time a permutation $\sigma^{-1}$ that represents $\Pi_M$.
\end{lemma}

\begin{algorithm}[htpb]
    \SetKwBlock{Begin}{}{end}
    $\Seaweed(I,\suc(M,\star,\star))$\Begin{
        $(\ell\dd r) \coloneqq  I$\;
        $\Que \coloneqq \emptyset$\;
        \lForEach{$d\in [\ell\dd r)$}{$\sigma(d):= d$}\label{ln:inisigma}
        \lForEach{$d\in (\ell\dd r)$}{$\Que.\push((\suc(M,d, -\infty),d))$}\label{ln:inique}
        \While{$\Que$ not empty}{
            $(p,d):=  \Que.\extractMin()$\;
            \If{$p\ne \infty$ \KwSty{and} $\sigma(d-1) < \sigma(d)$} {\label{line:if}
                $\swap(\sigma(d-1),\sigma(d))$\;
                \lIf{$d-1\in (\ell\dd r)$}{$\Que.\push((\suc(M,d-1, p+1),d-1))$}
                \lIf{$d+1\in (\ell\dd r)$}{$\Que.\push((\suc(M,d+1, p),d+1))$}
            }
        }
        \Return{$\sigma^{-1}$}\;
    }
    \caption{Constructing the seaweed matrix \cite[Algorithm 4]{https://doi.org/10.48550/arxiv.2204.03087}.}\label{alg:seaweed}
\end{algorithm}

We are now ready to prove the main result of this subsection, restated here for convenience.

\mainseaweed*
\begin{proof}
Given $q \in [0 \dd n)$, we provide an efficient construction algorithm of a permutation matrix~$\Pi_{q,z}$ for each $z \in [0 \dd m]$ that satisfies $D_{q,z}[i,j] = 2\Pi_{q,z}^{\Sigma}[i,j] + i-j$.

We use \cref{lem:claim} with $M := \mu(P,T,I,m)$ for $I:=[q-k\dd q+2k)$. Note that $I \supseteq \spn(M)$ by the definition of $M$.
Observe that, in this case, each value $\suc(M,d,p)$ can be implemented using a single $\LCP$ query on suffixes of $P$ and $T$.
Denote $p':= \max\{p,0,-d\}$. If $p' \le m$ and $p'+d \le n$, further define
\[P':=P[p' \dd m),\quad T':=T[p'+d \dd n),\quad\text{and}\quad \lambda:=\LCP(P',T').\]
Then
\[
\suc(M,d,p)=\left\{\begin{array}{ll}
\infty & \text{if }d \not\in I\text{, }p'>m\text{, }p'+d>n\text{, or }\lambda\in \{|P'|,|T'|\};\\
p' + \lambda & \text{otherwise.}
\end{array}\right.
\]

\begin{claim}
The permutation $\sigma^{-1}$ returned by Algorithm~\ref{alg:seaweed} represents a permutation matrix $\Pi_{q,m}$ such that $\forall_{i,j\in [0\dd 3k+1]}\; D_{q,m}[i,j]=2\Pi^{\Sigma}_{q,m}[i,j]+i-j$.
\end{claim}
\begin{proof}
Define a $(3k+1)\times (3k+1)$ matrix $\Pi_{q,m}:=\Pi_M[q-k-1\dd q+2k-1,q-k-1\dd q+2k-1]$. Since  $\sigma^{-1}:[q-k-1\dd q+2k)\to [q-k-1\dd q+2k)$ represents $\Pi_M$, we conclude that $\Pi_{q,m}$ is a permutation matrix represented by $\sigma^{-1}$. From \cref{lem:distmatr}, we have that, for every $a,b \in [0 \dd 3k+1]$,
\[D_{q,m}[i,j] = 2\Pi^\Sigma_M[i+q-k-1,j+q-k-1]+i-j=2\Pi^{\Sigma}_{q,m}[i,j]+i-j.\]
This completes the proof.
\end{proof}

Let $\ell := q-k-1$ and $r := q+2k$.
Algorithm~\ref{alg:seaweed} initializes an identity permutation on $[\ell \dd r)$ and processes elements of $M$ in the lexicographic order.
For each $(p,p+d)\in M$, the algorithm swaps $\sigma(d-1)$ and $\sigma(d)$ if $\sigma(d-1) < \sigma(d)$.

For an efficient implementation, the algorithm maintains a priority queue $\Que$ representing unprocessed elements of $M$.
The queue satisfies two invariants:
\begin{enumerate}
    \item For each $d\in (\ell\dd r)$, the queue $\Que$ stores at most one pair of the form $(\star,d)$: the pair $(p,d)$, where $(p,p+d)\in M$ is the next unprocessed element of $M$ on the diagonal $d$ of the general alignment graph $\AG(M)$, or $(\infty, d)$ if all elements of $M$ on diagonal $d$ have been processed.
    \item For each $d\in (\ell\dd r)$, the pair of the form $(\star,d)$ must be contained in $\Que$ if $\sigma(d-1) < \sigma(d)$.
\end{enumerate}
Initially, we have $\sigma(d-1)<\sigma(d)$ for all $d\in (\ell\dd r)$, so the algorithm inserts to $\Que$ pairs of the form $(\suc(M,d, -\infty),d)$ for every $d\in (\ell\dd r)$. 
Then, while the queue is not empty, the algorithm extracts the minimum element $(p,d)$ from $\Que$ and, if $p$ is finite and $\sigma(d-1) < \sigma(d)$, it
\begin{itemize}
\item swaps $\sigma(d-1)$ and $\sigma(d)$, and
\item inserts to the queue at most two pairs: for each $\delta \in \{d-1,d+1\}\cap (\ell \dd r)$,
roughly speaking, the next element of $M$ on the $\delta$-th diagonal.
\end{itemize}

Since the elements of $M$ are processed in the non-decreasing order with respect to their first coordinate, for each $z \in [0 \dd m]$, there is a time during the course of the algorithm when we have processed $\mu(P,T,I,z) = \{(p,p+d)\in  M : p < z\}$.
The same algorithm applied to $G(P[0 \dd z),T,I)$ with $M = \mu(P,T,I,z)$ would have performed the exact same swaps.
Hence, at that time, the permutation $\sigma^{-1}$ represents a permutation matrix $\Pi_{q,z}$ such that \[\forall_{i,j\in [0\dd 3k+1]}\; D_{q,z}[i,j]=2\Pi^{\Sigma}_{q,z}[i,j]+i-j.\]
5
In total, at most ${3k+1 \choose 2} = 3k(3k+1)/2$ swaps are performed:  this is because a swap is only performed if $\sigma(d-1) < \sigma(d)$, so any two values can get swapped at most once.

Recall that $\sigma$ represents a permutation matrix $\Pi$ such that $\Pi[i,j]=1$ if and only if $\sigma(i)=j$. In the same sense, $\sigma^{-1}$ represents the transposed permutation matrix $\Pi^T$.

If $\Pi$ is a permutation matrix of $\sigma$ and $\sigma'$ results by swapping $\sigma(d-1)$ and $\sigma(d)$, then the permutation matrix $\Pi'$ of $\sigma'$ is obtained from $\Pi$ by zeroing cells $\Pi[d-1,\sigma(d-1)]$ and $\Pi[d,\sigma(d)]$ and setting cells $\Pi[d-1,\sigma(d)]$ and $\Pi[d,\sigma(d-1)]$ to ones; effectively, $\Pi'$ is obtained by swapping rows $d-1$ and $d$ in $\Pi$. This corresponds to swapping neighbouring columns $d-1$ and $d$ of the transposed permutation matrix $\Pi^T$ of~$\sigma^{-1}$. 

If $\sigma(d-1)<\sigma(d)$, this is an ordered swap of columns. We thus indeed compute a sequence $\Delta(\Pi_{q,0},\Pi_{q,1}),\dots,\Delta(\Pi_{q,m-1},\Pi_{q,m})$ composed of at most $3k(3k+1)/2$ ordered swaps of columns.

The time complexity follows from the complexity of Algorithm~\ref{alg:seaweed}, the observation that function $\suc$ is called $\cO(|I|^2)$ times ($\cO(|I|)$ times before the while loop and at most two times for each of the  $\cO(|I|^2)$ performed swaps), and the fact that each evaluation of $\suc$ is reduced in constant time to an $\LCP$ query.
\end{proof}

\subsection{Main Result}
For a string $S$, by $S_\$$ we denote the string $S[0]\$S[1]\$ \cdots S[|S|{-}1]\$$.
By the following fact, we can easily transform the pattern and the text, doubling $k$, and henceforth consider the deletion distance instead of the edit distance.

\begin{fact}[{\cite[Section 6.1]{https://doi.org/10.48550/arxiv.0707.3619}}]\label{obs:ddist}
  For any two strings $U$ and $V$ over an alphabet $\Sigma$ that does not contain~$\$$,
  we have $2\cdot \ed(U,V)=\ddist(U_\$,V_\$)$.
\end{fact}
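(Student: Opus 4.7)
The plan is to split the identity into the two inequalities $\ddist(U_\$,V_\$) \le 2\ed(U,V)$ and $\ddist(U_\$,V_\$) \ge 2\ed(U,V)$, handling them separately, with the harder direction routed through the classical identity $\ddist(A,B)=|A|+|B|-2\cdot\mathrm{LCS}(A,B)$.

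For the upper bound, I would take an optimal edit alignment of $U$ and $V$ and simulate each of its operations by two deletion-distance operations acting on $U_\$$: a substitution of $U[i]$ by $c$ becomes a deletion of $U[i]$ followed by an insertion of $c$ at the same spot (the surrounding $\$$'s remain); a deletion of $U[i]$ becomes deletions of both $U[i]$ and the $\$$ that follows it; an insertion of a letter $c$ becomes insertions of $c$ followed by a fresh $\$$; a match carries over as two cost-zero matches (letter and trailing $\$$). The resulting transformation turns $U_\$$ into $V_\$$ using exactly $2\ed(U,V)$ deletion-distance operations.

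For the lower bound, by the LCS identity it suffices to show $\mathrm{LCS}(U_\$,V_\$) \le |U|+|V|-\ed(U,V)$. Fix an optimal common subsequence $C$ of $U_\$$ and $V_\$$. Since $\$\notin\Sigma$, every matched pair in $C$ is either a letter-letter match at positions $(2i,2j)$ or a $\$$-$\$$ match at $(2i+1,2j+1)$. I would first apply an exchange argument to assume that whenever a letter match at $(2i,2j)$ lies in $C$, so does the $\$$-match at $(2i+1,2j+1)$: if the latter were absent, then the next match of $C$ after $(2i,2j)$ would have to strictly exceed both coordinates $2i+1$ and $2j+1$ (since neither $\$$ can participate in any other match), so $(2i+1,2j+1)$ could be inserted into $C$, contradicting its optimality. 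Under this refinement, the matches of $C$ partition into $p$ paired letter-plus-$\$$ matches and $s$ solo $\$$ matches, giving $|C|=2p+s$. The resulting $p+s$ index pairs $(i,j)$ are strictly increasing in both coordinates and therefore yield a valid edit alignment of $U$ and $V$ consisting of $p$ free matches (where $U[i]=V[j]$), $s$ substitutions (cost at most $1$ each), and deletions and insertions of the $|U|-(p+s)$ and $|V|-(p+s)$ remaining positions respectively; its total cost is $s+(|U|-(p+s))+(|V|-(p+s))=|U|+|V|-|C|$, as desired.

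The main subtlety is the exchange step: the alternating letter-$\$$ structure of $U_\$$ and $V_\$$, together with $\$\notin\Sigma$, is what guarantees that a $\$\$$ match can always be inserted after any letter match of $C$ without disturbing adjacent matches; this is what allows the LCS to be ``normalized'' and cleanly split into paired and solo contributions.
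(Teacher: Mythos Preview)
The paper does not prove this fact; it is stated with a citation to Tiskin and used as a black box. So there is no ``paper's proof'' to compare against, and your proposal is an independent argument.

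Your overall strategy is sound and the two directions are set up correctly, but the exchange step has a small gap. You claim that if the $\$$-match at $(2i+1,2j+1)$ is absent from $C$, then the next match after $(2i,2j)$ must strictly exceed both $2i+1$ and $2j+1$, because ``neither $\$$ can participate in any other match''. That justification is false: the $\$$ at position $2i+1$ of $U_\$$ can perfectly well be matched in $C$ with the $\$$ at some position $2j'+1>2j+1$ of $V_\$$ (and symmetrically for the $\$$ at $2j+1$). In that case the next match is at $(2i+1,2j'+1)$, which does not strictly exceed $2i+1$, and nothing can be inserted.

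The fix is easy: treat this as a genuine exchange rather than an optimality contradiction. If the next match uses one of the two $\$$'s, say $(2i+1,2j'+1)$ with $j'>j$, replace it by $(2i+1,2j+1)$; the result is still a common subsequence of the same length (the match after it has first coordinate $\ge 2i+2$ and second coordinate $>2j'+1>2j+1$). Iterating this replacement yields an optimal $C$ with the desired pairing property, and the rest of your argument (partitioning into $p$ paired and $s$ solo matches, extracting a cost-$(|U|+|V|-|C|)$ edit alignment) goes through unchanged.
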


We start with the following auxiliary lemma.

\begin{lemma}\label{lem:allkPREFMATCH}
Consider a text $T$, a pattern $P$, and a positive integer $k$.
Given a position $q$ in $T$, the values $\LPref_{k'}[i]$ for all $k' \in [0 \dd k]$ and $i \in [q \dd q+k)\cap [0\dd n]$
can be computed in $\Oh(k^2 \log^2 k)$ time plus $\Oh(k^2)$ $\LCP$ queries on pairs of suffixes of $P$ and $T$.
\end{lemma}
\begin{proof}
It suffices to compute, for all $k' \in [0 \dd 2k]$ and $i \in [2q \dd 2q+2k)$, the length of the longest prefix of~$P_\$$ that matches a prefix of $T_\$[i \dd n)$ with deletion distance at most $k'$.
This is because \cref{obs:ddist} implies that, for each even $i$ and each even $k'$, the obtained length is equal to $2\cdot \LPref_{k'/2}[i/2]$, where $\LPref$ is defined for $P$ and $T$.

Let us replace $P$ with $P_\$$, $m$ by $2m$, $T$ with $T_\$$, $q$ by $2q$, and $k$ by $2k$ when using the tools from \cref{subsec:seaweeds}. 
Note that an $\LCP$ query on suffixes of $P_\$$ and $T_\$$ trivially reduces in $\cO(1)$ time to an $\LCP$ query on suffixes of $P$ and $T$.
We iterate over $D_{q,z}$, for all $z \in [0 \dd m]$, using the data structure of \cref{lem:Monge}.
The initialization of $D_{q,0}$ takes $\Oh(k^2)$ time; we have $D_{q,0}[a,b]=|a-b|$.
Each ordered swap of adjacent columns in the maintained matrix~$\Pi$ corresponds to a sub-column increment in $D$ (increasing the entries in the sub-column by~$2$).

Note that, for each $z \in [0 \dd m]$, $D_{q,z}$ is a Monge matrix due to \cref{obs:dt_monge}.
Each intermediate matrix~$D$ is also Monge as it satisfies $D[i,j] = 2\Pi^{\Sigma}[i,j]+i-j$ for the maintained permutation matrix~$\Pi$.
Thus, for each $z$ such that $\Delta(\Pi_{q,z},\Pi_{q,z+1}) \ne \emptyset$, we can update the maintained Monge matrix as necessary using \cref{lem:Monge} in $\Oh(\log^2 k)$ time per update.
By \cref{lem:interface:Kc}, the number of updates is $\Oh(k^2)$ and they can be computed in $\Oh(k^2\log \log k)$ time plus the time required to answer $\Oh(k^2)$ $\LCP$ queries.

For notational convenience, let us set $\min \emptyset = \infty$. Further, let
\[\gamma(i,z) := \min\{D_{q,z}[i-q+k+1,j-q+k+1]\,:\,j \in [q-k-1 \dd q+2k]
\cap [i-z\dd n-z]\}.\]
Then, for each $i \in [q \dd q+k)$ and for each $k' \in [0 \dd k]$, by \cref{lem:interface:Ka},
it suffices to compute the maximum $z \in [0 \dd m]$ such that $\gamma(i,z) \le k'$.

Observe that, for each $i \in [q \dd q+k)$, the intersection of $[q-k-1 \dd q+2k]$ and $[i-z\dd n-z]$ is neither empty nor equal to $[q-k-1 \dd q+2k]$ if and only if
\[z \in Z_i := [i-q-2k \dd i-q+k] \cup (n-q-2k \dd n-q+k+1].\]

We compute $\gamma(i,z)$ for each $i \in [q \dd q+k)$ and for each $z \in Z_i$ in total time $\cO(k^2 \log^2 k)$ using the $\Oh(\log^2 k)$-time sub-row minimum queries from \cref{lem:Monge} (by processing the $\Oh(k^2)$ updates in order).
We store the computed values in an $\cO(k) \times \cO(k)$ matrix $\Gamma$ in the natural order.

Then, for each pair of $i \in [q \dd q+k)$ and $k' \in [0 \dd k]$,
we use  binary search with $\Oh(\log k)$ iterations over the set of all $z$ that satisfy $\Delta(\Pi_{q,z},\Pi_{q,z+1}) \ne \emptyset$ to compute the sought maximum.
Whenever we need the value of $\gamma(i,z)$ for $z \in Z_i$, we simply read it from $\Gamma$---we can decide whether some $z$ is in $Z_i$ and, if so, compute the indices of the corresponding entry in~$\Gamma$ in $\cO(1)$ time using standard arithmetic.
Otherwise, that is, when we consider some $z \not\in Z_i$, the interval $[q-k-1 \dd q+2k] \cap [i-z\dd n-z]$ is either empty or equal to $[q-k-1 \dd q+2k]$.
In the former case, $\gamma(i,z)>k'$.
In the latter case, $\gamma(i,z)$ is equal to the minimum value in row $(i-q+k+1)$ of~$D_{q,z}$ and can be thus computed in $\Oh(\log k)$-time using a row minimum query from \cref{lem:Monge}.
The performed binary searches thus take $\Oh(k^2\log^2 k)$ time in total.
\end{proof}

We are now ready to prove \cref{thm:allkPREFMATCH}, restated here for convenience.

\allkPREFMATCH*
\begin{proof}
We invoke the algorithm underlying \cref{lem:allkPREFMATCH} for each $q \in [0 \dd n)$ that satisfies $q \equiv 0 \pmod{k}$ noting that $\LCP$ queries on the suffixes of $P$ and $T$ can be answered in constant time after an $\cO(n)$-time preprocessing (cf.\ \cref{the:standardPILLAR}).
The overall running time is thus $\Oh(n+ (n/k) \cdot k^2\log^2 k)=\Oh(nk\log^2 k)$.
\end{proof}

\section{Conditional Hardness of Approximate CPM}\label{sec:hardness}
We consider the following problem in which the number of allowed mismatches is unbounded.

\defproblem{Mismatch-CPM}{
  A text $T$ of length $n$ and a pattern $P$ of length $m$.
}{
  An array $\CPMA[0\dd n-m]$ with $\CPMA[i]\,=\,  \min\{k \ge 0\,:\,P \approx_k^{\Ham} T[i \dd i+m)  \}$.
}

In jumbled indexing, we are to answer pattern matching queries in the jumbled (Abelian) sense.
More precisely, given a Parikh vector of a pattern that specifies the quantity of each letter, we are to check if there is a substring of the text
with this Parikh vector.
In the case of a binary text, the problem of constructing a jumbled index is known to be equivalent (up to a $\log n$-factor in the case where a witness occurrence needs to be identified; see~\cite{DBLP:conf/spire/CicaleseGGLLRT13})
to the following problem:

\newcommand{\mmin}{\mathit{min}}
\newcommand{\mmax}{\mathit{max}}
Given a text $X$ of length $n$ over alphabet $\{0,1\}$, for all $t \in [1 \dd n]$ compute the values:
  \[
      \mmin_t := \min\left\{\sum_{i=p}^{p+t-1} X[i]\,:\, p \in [0 \dd n-t]\right\},\,
  \mmax_t := \max\left\{\sum_{i=p}^{p+t-1} X[i]\,:\, p \in [0 \dd n-t]\right\}.
  \]

For a few years since its introduction~\cite{DBLP:conf/stringology/CicaleseFL09}, the problem of constructing a binary jumbled index (BJI) in $\Oh(n^{2-\varepsilon})$ time for $\varepsilon>0$ was open.
Chan and Lewenstein~\cite{DBLP:conf/stoc/ChanL15} settled this question affirmatively by proposing an $\Oh(n^{1.859})$-time randomized construction; very recently, it was improved to $\Oh(n^{1.5}\log^{\Oh(1)} n)$ time~\cite{https://doi.org/10.48550/arxiv.2204.04500}.
We make the following reduction.

\begin{theorem}\label{hard:mismatch}
  If the Mismatch-CPM problem on binary strings can be solved in $S(n)$ time, then a BJI can be constructed in $\Oh(n+S(3n))$ time.
\end{theorem}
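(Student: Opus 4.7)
The plan is to reduce the binary BJI construction on $X$ of length $n$ to Mismatch-CPM on inputs of total length at most $3n$. By the equivalence discussed in the paragraph just before the theorem, it suffices to compute, for every $t\in[1\dd n]$, the quantities $\mmin_t$ and $\mmax_t$, i.e., the min and max number of $1$s in a length-$t$ window of $X$.

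I would construct a binary text $T$ and pattern $P$ of total length $\cO(n)$ such that running Mismatch-CPM on $(T,P)$ produces an array $\CPMA$ from which all $\mmin_t,\mmax_t$ can be read off in $\cO(n)$ additional time. A natural starting point is $T=0^n\,X\,0^n$ of length exactly $3n$: the zero padding on both sides of $X$ blocks rotations of $P$ from wrapping nontrivially around the content, so for a single-block pattern $P=1^{q}0^{|P|-q}$ one can argue (by a direct Hamming-distance calculation) that the best rotation at each relevant position $i$ places the $q$-block to maximize its overlap with $X$'s ones, giving the clean identity $\CPMA[i]=q+|X|_1-2\mmax_q(X)$. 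This already shows how one entry of $\CPMA$ can encode $\mmax_q(X)$ for one fixed $q$.

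To batch over all $t\in[1\dd n]$ in a single (or a constant number of) call(s), I would instead choose $T$ with asymmetric padding — for instance $T=0^n\,X\,1^n$ — so that as the starting position $i$ of the CPM window $U_i=T[i\dd i+|P|)$ increases, a growing suffix of $U_i$ consists of $1$s (and symmetrically for a shrinking prefix). With an appropriate choice of $|P|$ (so that there are $\Theta(n)$ valid positions $i$), the optimal cyclic placement of the $1$-block at position $i$ then encodes exactly the number of $1$s in some length-$t(i)$ window of $X$, for a bijective map $t\mapsto i(t)$ covering all $t\in[1\dd n]$. A second, symmetric CPM call (interchanging the roles of $0$ and $1$, or equivalently padding with $1$s and flipping the block in $P$) handles the $\mmin_t$ values. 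Since each CPM call is on input length $3n$, taking $S(3n)$ time, and since the setup and final decoding (an affine inversion per position) each take $\cO(n)$, the total cost is $\cO(n+S(3n))$ as claimed.

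The main obstacle will be the precise design of $P$ and the verification that for each $i$ the minimizing cyclic rotation is indeed the one that places the $1$-block over the intended length-$t(i)$ window of $X$, rather than, say, a wrapped rotation that opportunistically picks up the $1$-padding of $T$ without matching $X$'s ones. I expect this to reduce to a case analysis showing that wrapping rotations always pay at least as many mismatches as the "canonical" non-wrapping rotation, using the fact that either the left zero-padding or the right one-padding blocks beneficial wraparound. Once that structural lemma is in place, the reduction goes through and the bound $\cO(n+S(3n))$ follows.
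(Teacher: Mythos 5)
Your warm-up step is fine: with $X$ embedded in the text and a single-block pattern $1^q0^{\ast}$, one CPM call does encode $\mmax_q(X)$ via $\CPMA[i]=q+|X|_1-2\mmax_q(X)$. But that yields only \emph{one} window length per call, and the entire content of the theorem is the batching over all $t\in[1\dd n]$ within $\Oh(1)$ calls. That step is where your sketch breaks. With $X$ in the text and a fixed pattern, the quantity optimized by the rotation at position $i$ is the maximum number of $1$s under a cyclic arc of \emph{fixed} length $q$ inside the window $T[i\dd i+|P|)$. The asymmetric padding $T=0^nX1^n$ does not make the effective window length over $X$ vary in a controlled way: to avoid paying $i$ mismatches on the $1^i$ suffix, the block must cover that suffix, which anchors its remaining $q-i$ symbols to the \emph{right end} of $X$; so only suffix-anchored windows of $X$ are probed, not all length-$t$ windows, and hence the minimum does not encode $\mmax_{t(i)}(X)$. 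Worse, an unanchored placement of the block entirely inside $X$ contributes $\mmax_q(X)$ (for the fixed $q$) plus it forgoes the $1^i$ bonus, so the reported value is a max of incomparable quantities rather than an affine function of $\mmax_{t(i)}(X)$. The obstruction is structural: the pattern rotation is the only degree of freedom that can slide over $X$, but it simultaneously determines how much padding is covered, so "which window of $X$" and "how long the window is" cannot be decoupled in your role assignment.

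The paper resolves exactly this by swapping the roles: it puts $X$ into the \emph{pattern}, $P=X0^n$, and uses the text $T=1^n0^{2n}$. Then the text position $i\in[0\dd n)$ fixes the window length $t=n-i$ (the number of leading $1$s in $T[i\dd i+2n)$), while the pattern rotation independently ranges over all length-$t$ prefixes of rotations of $X0^n$, i.e., over all length-$t$ substrings of $0^nX0^n$, whose maximum number of $1$s is precisely $\mmax_t(X)$. This gives $\CPMA[i]=t+|X|_1-2\mmax_t$, hence all $\mmax_t$ (and, by complementing $X$, all $\mmin_t$) from a single call on strings of length $\Oh(n)$. So your reduction as proposed has a genuine gap at its central step; to repair it you would essentially have to move $X$ into the pattern as above.
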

\begin{proof}
  We show how to compute $\mmax_t$ for all $t \in [1 \dd n]$.
  For computing $\mmin_t$, we can negate all the letters of $X$. 
  An illustration of our reduction is provided in \cref{fig:AppCPM}.
  
  \begin{figure}[htpb]
  \centering
\includegraphics[width=14cm]{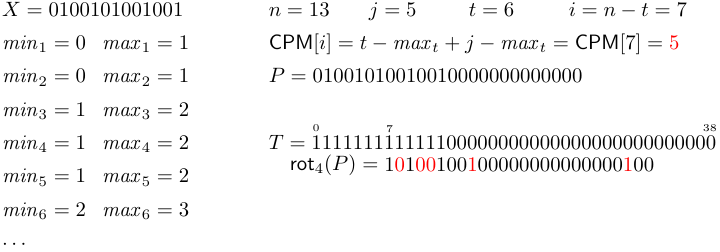}
\caption{$\CPMA[7]=5$ corresponds to a Hamming distance of $5$ between $T[7\dd 7+|P|)$ and some rotation of $P$, namely of $\rot_4(P)$.}\label{fig:AppCPM}
\end{figure}

  We construct an instance of the Mismatch-CPM problem with $P=X 0^n$ and $T=1^n0^{2n}$.
  For each length $t\in [0\dd n]$, the value $\CPMA[n-t]$ represents the minimum Hamming distance between $1^t 0^{2n-t}$ and a cyclic rotation $\rot_r(P)$ of $P$ with $r\in \mathbb{Z}$.
  In this case, the mismatches correspond to~$0$s in $P^\infty[r\dd r+t)$ and $1$s in $P^\infty[r+t\dd r+2n)$.
  The total number of $1$s in $P^\infty[r\dd r+2n)$ equals $j:=\sum_{i=0}^{n-1} X[i]$ independently of $r$, so 
  \[\Ham(1^t 0^{2n-t}, \rot_r(P)) = t+j-2\sum_{i=r}^{r+t-1} P^\infty[i].\]
  Thus, $(t+j-\CPMA[n-t])/2$ is equal to the maximum number of $1$s in any length-$t$ cyclic substring of~$P$.
  This maximum is attained at a length-$t$ substring of $X$ because cyclic substrings of $P$ disjoint with $X$ do not have any $1$s, and cyclic substrings partially overlapping $X$ can be shifted so that the number of $1$s does not decrease and the overlap with $X$ increases.
  Hence, all the values $\mmax_t=(t+j-\CPMA[n-t])/2$ for $t\in [0\dd n]$ can be recovered from the \CPMA\ array in $\Oh(n)$ time.
\end{proof}

The following theorem shows how to compute the edit distance of two strings with the use of an algorithm for the decision version of $k$-Edit CPM, and thus also that a strongly subquadratic algorithm for this problem would refute SETH~\cite{DBLP:journals/jcss/ImpagliazzoP01,DBLP:journals/siamcomp/BackursI18,DBLP:conf/focs/BringmannK15}.

\begin{restatable}{theorem}{hardedit}\label{hard:edit}
  If the $k$-Edit CPM problem on quaternary strings can be solved in $\Oh(n^{2-\varepsilon})$ time for some constant $\varepsilon >0$, then the edit distance of two binary strings each of length at most $n$ can be computed in $\Oh(n^{2-\varepsilon}\log n)$ time.
\end{restatable}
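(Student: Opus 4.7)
The plan is to reduce binary edit distance computation to $O(\log n)$ queries of the decision version of $k$-Edit CPM over a quaternary alphabet, using binary search on $k$ to recover the exact distance. Given binary strings $X, Y$ of length $n$, over the alphabet $\{0, 1, \#, \$\}$ and with $L := 3n$, I would construct
\[
  P := \$\, X\, \$\, \#^L, \qquad T := \#^L\, \$\, Y\, \$\, \#^L,
\]
so that $|P| = n+L+2$ and $|T| = 2L+n+2$, both in $O(n)$. The key claim is that $P$ has a circular $k$-edit occurrence in $T$ if and only if $\ed(X, Y) \le k$; equivalently, the minimum edit distance over all pairs $(\rot_r(P), T')$ equals $\ed(X, Y)$.

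The upper bound is witnessed by the rotation $\rot_{n+2}(P) = \#^L\, \$\, X\, \$$ and the length-$|P|$ prefix $T[0\dd |P|) = \#^L\, \$\, Y\, \$$, which are at edit distance exactly $\ed(X, Y)$. The matching lower bound is the heart of the reduction and requires a case analysis on the rotation. For \emph{nice} rotations, which preserve the factor $\$\, X\, \$$, the two $\$$ markers act as anchors: if both are aligned with the two $\$$s of the substring, restricting any alignment to binary positions projects to an alignment of $X$ with $Y$ whose cost is at most the total, giving a lower bound of $\ed(X, Y)$; if a marker is misaligned, the two unmatched $\$$s contribute at least two edits while the shifted binary alignment adds further cost. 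For \emph{split} rotations, the cyclic wrap produces a $\$\, \#^L\, \$$ block of length $L + 2 = 3n + 2$ between the two markers, whereas the gap between the two $\$$s in $T$ is only $n + 1$; matching both markers costs at least $|L - n| = 2n$ indels plus $\Omega(n)$ substitutions for binary-vs-$\#$ characters, and failing to match a marker still costs at least two edits plus comparable further cost. In either case the total is at least $n \ge \ed(X, Y)$.

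Given this reduction, I would recover $\ed(X, Y)$ by binary search: it is the smallest $k \in [0, n]$ for which the $k$-Edit CPM decision algorithm returns yes. This uses $O(\log n)$ queries, each on a quaternary instance of size $O(n)$ and hence costing $O(n^{2-\varepsilon})$ time, for the claimed total of $O(n^{2-\varepsilon} \log n)$. The main obstacle is making the lower-bound analysis rigorous, particularly for substrings $T'$ of length as small as $|P| - k$ that may omit one of the $\$$ markers or a piece of $Y$; the generous padding $L = 3n \gg k$ is chosen precisely so that any such short substring still leaves a long stretch of $\#$s forced to be mismatched against binary characters of the rotation, pushing the cost above $\ed(X, Y)$. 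Combined with the SETH-based $n^{2-o(1)}$ lower bound for binary edit distance~\cite{DBLP:conf/focs/BringmannK15}, this shows that a strongly subquadratic algorithm for $k$-Edit CPM on quaternary strings would refute SETH.
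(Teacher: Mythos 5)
Your overall strategy---sentinel padding plus binary search over $k$ on a $k$-Edit CPM decision oracle---is the same as the paper's, but your specific construction does not satisfy the key claim, and the issue is not merely that the lower bound ``needs to be made rigorous'': the claimed equivalence is false. A circular $k$-edit occurrence may use an arbitrary substring of $T$, so truncating the text is free, and your single-character $\$$ markers cannot charge for it. Concretely, take $X=0^a1^a$ and $Y=1^a0^a$ with $a\ge 2$ (so $|X|=|Y|$ and $\ed(X,Y)=2a$; e.g., $\ed(0011,1100)=4$). In your construction $P=\$X\$\#^L$ and $T=\#^L\$Y\$\#^L$, consider the rotation $R=\rot_{2a+2}(P)=\#^L\$0^a1^a\$$ and the substring $S=T[0\dd L+1+a)=\#^L\$1^a$: deleting the block $0^a$ and the trailing $\$$ from $R$ yields $S$, so $\ed(R,S)\le a+1<2a$. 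Hence the decision oracle answers ``yes'' already for $k=a+1$, and your binary search outputs a value strictly smaller than $\ed(X,Y)$. This is exactly the scenario you flagged (a substring omitting one $\$$ marker and a piece of $Y$), but the generous $\#$-padding does not rescue it: the substring can simply end right after the retained prefix of $Y$, so no $\#$ of $T$ is forced against a binary symbol, and the only penalties are one unmatched $\$$ plus deletions of part of $X$---which can be far cheaper than $\ed(X,Y)$ whenever the discarded part of $Y$ is responsible for most of the distance.

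The paper avoids this trap by padding both strings identically: $P=\$^{3n}U\#^{3n}$ and $T=\$^{3n}V\#^{3n}$. Because pattern and text then carry sentinel blocks of the same multiplicities, any substring of $T$ shorter than $|T|-n$ is already at distance more than $n$ from every rotation (by length difference alone), and every truncation or rotation is charged symbol by symbol; the formal argument is an induction on the rotation offset showing that $\ED(i,j,p)$ is minimized at $i=0$, $j=0$, $p=|T|$, so the minimum over all rotation--substring pairs equals $\ed(U,V)$ exactly, after which the binary search goes through as you describe. To salvage your write-up you should adopt a construction with this ``matching multiplicities'' property rather than patch the case analysis, since the counterexample above shows that for your $P$ and $T$ the true minimum genuinely undershoots $\ed(X,Y)$.
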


\begin{proof}
 To compute the edit distance between the binary strings $U$ and $V$, each of length at most $n$, it suffices to binary search on $k$, solving $k$-Edit CPM for pattern $P=\$^{3n} U \#^{3n}$ and text $T=\$^{3n} V \#^{3n}$, where the letters $\$$ and $\#$ do not occur in $UV$.
 Thus, $P$ and $T$ are quaternary strings.
 
 Let us denote \[\ED(i,j,p)\,=\, \ed(\rot_{j\bmod |P|}(P),T[i\dd p)),\] it is the edit distance between the $(j\bmod |P|)$-th rotation\footnote{Here, we consider $j\bmod |P|$ instead of just $j$ to account for the case where $j<0$.} of $P$ and the length-$(p-i)$ \fragment of $T$ starting at position $i$ of $T$. Let us notice that $\ED(0,0,|T|)=\ed(P,T)=\ed(U,V)$ since an insertion or a deletion of the same letter at the start or the end of both strings does not change their edit distance.

 \begin{claim}\label{clm:ed_red}
 $\ED(i,j,p)$ is minimized when $i=0$, $j=0$, and $p=|T|$.
 \end{claim}
 
\begin{proof}
  Notice that, for $p-i<5n$, we have $\ED(i,j,p)>n\ge \ed(U,V)$ since the difference in lengths of $\rot_{j\bmod |P|}(P)$ and $T[i\dd p)$ is greater than $n$. From now on, we assume that $i\le 2n$ and $p\ge |T|-2n$.
  The proof proceeds by induction on $|j|$ for $j\in (3n-|P|\dd 3n]$.
  
  In the base case ($j=0$), we can remove the longest common prefix (common $\$$ letters) and the longest common suffix (common $\#$ letters) from the prefix and suffix of both strings without changing the edit distance.
  We are left with $V$ as the text (since only $\$$ and $\#$ are missing from~$T$ due to the length assumption), and $\$^*U\#^*$ as the pattern. All those letters $\$$ and $\#$ have to be removed or substituted since there is no letter matching those in $V$. Any such removal increases the number of edit operations over the case with a smaller number of such letters, while any substitution can be simulated with an insertion in such a case; hence, it is optimal to have the least number of such letters, which is obtained for $i=0$ and $p=|T|$; namely, in this case, the length of the longest common prefix (consisting of $\$$ letters) and suffix (consisting of $\#$ letters) is maximized.
  
  Assume now that $0<j\le 3n$ so that $\rot_{j\bmod |P|}(P)$ ends with a letter $\$$. This letter cannot be matched in $T[i \dd p)$ since $|P|>6n$, and the last $\$$ in $T[i \dd p)$ appears before position $3n$; thus, it has to be either deleted or substituted with the last letter of $T[i \dd p)$. If we delete it, then the resulting string is the same as if we had deleted the first letter of $\rot_{(j-1)\bmod |P|}(P)$. 
  If we substitute it with $\#$ and then remove the last matching letters from both strings, we obtain the same result as if we had deleted the first letter from $\rot_{(j-1)\bmod |P|}(P)$ and computed the distance with $T[i\dd p-1)$. Hence, 
  \[\ED(i,j,p)\ge\min (\ED(i,j-1,p),\ED(i,j-1,p-1)),\] which is larger than or equal to $\ED(0,0,|T|)$ by the induction hypothesis.
  
  The remaining case is symmetric to the previous one: Assume now that $3n-|P|<j<0$ so that $\rot_{j\bmod |P|}(P)$ starts either with $\#$ (if $j\ge -3n$) or with a letter from $U$ (otherwise). In both cases, this letter cannot be matched with a letter from $T[i \dd p)$ since $i\le 2n$ (hence $\$$ appears on the first $n$ positions there).
  If we delete this letter, then we would obtain the same result as if we had deleted the last letter from $\rot_{(j+1)\bmod |P|}(P)$. If we substitute it with $\$$ and then remove the matching first letters from both strings, we obtain the same result as if we had deleted the last letter from $\rot_{(j+1)\bmod |P|}(P)$ and matched it with $T[i+1\dd p)$. Hence, $\ED(i,j,p)\ge\min (\ED(i,j+1,p),\ED(i+1,j+1,p))$, which is greater than or equal to $\ED(0,0,|T|)$ by the induction hypothesis.
 \end{proof}

  By \cref{clm:ed_red}, $k$-Edit CPM will not find any occurrence for any $k<\ed(U,V)$, and hence the binary search will result in finding exactly $\ed(U,V)$.
\end{proof}

\section*{Acknowledgments}

We thank Pawe\l{} Gawrychowski and Oren Weimann for helpful discussions related to \cref{lem:Monge}.

Panagiotis Charalampopoulos was supported by Israel Science Foundation (ISF) grant 810/21 when part of the work that led to this paper was conducted.

Jakub Radoszewski was supported by the Polish National Science Center, grants no.\ 2018/31/D/ST6 /03991 and 2022/46/E/ST6/00463. He was affiliated also at Samsung R\&D Warsaw, Poland when working on this manuscript.

Solon P.~Pissis was supported by the PANGAIA and ALPACA projects that have received funding from the European Union’s Horizon 2020 research and innovation programme under the Marie Skłodowska-Curie grant agreements No 872539 and 956229, respectively.

Tomasz Waleń was supported by the Polish National Science Center, grant no.\ 2018/31/D/ST6 /03991. 

Wiktor Zuba was supported by the Netherlands Organisation for Scientific Research (NWO) through Gravitation-grant NETWORKS-024.002.003.

\bibliographystyle{plainurl}
\bibliography{references}

\appendix
\section{Exact Circular Pattern Matching}\label{sec:CPM}

In this section, for completeness, 
we present a linear-time solution for CPM \emph{for any alphabet}.\footnote{This solution may be regarded as folklore, but we were unable to find a reference.}

\begin{fact}
Exact CPM can be solved in $\cO(n)$ time.
\end{fact}
\begin{proof}
We use the so-called \emph{prefix array} $\PREF$ of a string $S$ such that,  for $i \in [0\dd |S|)$, the value $\PREF[i]$ is the maximum $j\ge 0$ such that $S[0\dd j)=S[i\dd i+j)$.
An algorithm constructing the prefix array can be found in~\cite{DBLP:books/daglib/0020103}.
It works in linear time for any alphabet that supports constant-time equality tests between characters of $S$.

Assume that we have precomputed the prefix array  for a string $P\$T$ and the prefix array for a string $P^R\$T^R$, where $P^R$ and $T^R$ are the reversals of strings $P$ and $T$.
Then for each position $p$ in~$T$ we can compute 
$\mathit{pref}[p]$ ($\mathit{suf}[p]$), defined as the length of the longest prefix (resp., suffix)  of 
$P$ which is a prefix of $T[p\dd n)$ (resp., a suffix of $T[0 \dd p)$).
Let \[\mathcal{I}_p=[p-\mathit{suf}[p]\dd p+\mathit{pref}[p]-m].\]
Then the set of positions in~$T$ where some rotation of $P$ occurs is
the union of all intervals $\mathcal{I}_p$ for $p\in [0\dd n)$.
We compute the intervals' union in $\Oh(n)$ time using radix sort.
\end{proof}

\end{document}